\pgfplotsset{compat=1.8}
\numberwithin{equation}{section}
\newtheorem{thm}{Theorem}[section]
\newtheorem{lem}[thm]{Lemma}
\newtheorem{prop}[thm]{Proposition}
\newtheorem{cor}[thm]{Corollary}
\theoremstyle{definition}
\newtheorem{defn}[thm]{Definition}
\renewcommand*{\thehyp}{\Alph{hyp}}
\theoremstyle{remark}
\newtheorem{rem}[thm]{Remark}
\newtheorem{ex}[thm]{Example}
\crefname{hyp}{Hypothesis}{Hypotheses}
\Crefname{hyp}{Hypothesis}{Hypotheses}
\crefname{lem}{Lemma}{Lemmas}
\Crefname{lem}{Lemma}{Lemmas}
\crefname{thm}{Theorem}{Theorems}
\Crefname{thm}{Theorem}{Theorems}
\crefname{prop}{Proposition}{Propositions}
\Crefname{prop}{Proposition}{Propositions}
\crefname{cor}{Corollary}{Corollaries}
\Crefname{cor}{Corollary}{Corollaries}
\crefname{enumi}{}{}
\Crefname{enumi}{}{}
\crefname{equation}{}{}
\Crefname{equation}{}{}
\crefname{rem}{Remark}{Remarks}
\Crefname{rem}{Remark}{Remarks}
\renewcommand{\@upn}{} %
\newlist{enumthm}{enumerate}{1} %
\setlist[enumthm]{label=\upshape(\roman*),ref=\thethm~(\roman*)}  %
\newlist{enumcor}{enumerate}{1}
\setlist[enumcor]{label=\upshape(\roman*),ref=\thecor~(\roman*)}
\newlist{enumlem}{enumerate}{1}
\setlist[enumlem]{label=\upshape(\roman*),ref=\thelem~(\roman*)}
\newlist{enumprop}{enumerate}{1}
\setlist[enumprop]{label=\upshape(\roman*),ref=\theprop~(\roman*)}
\newlist{enumhyp}{enumerate}{1}
\setlist[enumhyp]{label=\upshape(\roman*),ref=\thehyp~(\roman*)}
\newlist{enumproof}{enumerate*}{1}
\setlist[enumproof]{label=\upshape(\roman*)}
\newlist{enumdef}{enumerate}{1}
\setlist[enumdef]{label=\upshape(\roman*),ref=\thedefn~(\roman*)}
\newlist{enumrem}{enumerate}{1}
\setlist[enumrem]{label=\itshape(\alph*),ref=\therem~(\alph*)}
\newcounter{subcreftmpcnt} %
\newcommand\romansubformat[1]{(\roman{#1})} %
\def\subcref{\@ifstar\@@subcref\@subcref}
\newcommand\@subcref[2][\romansubformat]{%
	\ifcsname r@#2@cref\endcsname
	\cref@getcounter {#2}{\mylabel}%
	\setcounter{subcreftmpcnt}{\mylabel}%
	\hyperref[#2]{\romansubformat{subcreftmpcnt}}%
	\else ?? \fi}   
\newcommand\@@subcref[2][\romansubformat]{%
	\ifcsname r@#2@cref\endcsname
	\cref@getcounter {#2}{\mylabel}%
	\setcounter{subcreftmpcnt}{\mylabel}%
	\romansubformat{subcreftmpcnt}%
	\else ?? \fi}   
\DeclareRobustCommand{\crefnosort}[1]{%
	\begingroup\@cref@sortfalse\cref{#1}\endgroup
}
\def\endstepsymbol{$\lozenge$}
\def\endclaimsymbol{$\lozenge$}
\newcounter{proofstep}
\crefname{proofstep}{Step}{Steps}
\Crefname{proofstep}{Step}{Steps}
\newcounter{proofclaim}
\crefname{proofclaim}{Claim}{Claims}
\Crefname{proofclaim}{Claim}{Claims}
\newcommand{\cF}{{\mathcal F}}
\newcommand{\cK}{{\mathcal K}}
\newcommand{\fh}{{\mathfrak h}}
\newcommand{\BC}{{\mathbb C}}
\newcommand{\BN}{{\mathbb N}}
\newcommand{\BR}{{\mathbb R}}
\newcommand{\BZ}{{\mathbb Z}}
\newcommand{\dsone}{{\mathds 1}}
\newcommand{\sfc}{{\mathsf c}}
\newcommand{\sfd}{{\mathsf d}}
\newcommand{\sfi}{{\mathsf i}}
\newcommand{\IN}{\BN}\newcommand{\IZ}{\BZ}\newcommand{\IR}{\BR}\newcommand{\IC}{\BC}
\newcommand{\hs}{\fh}
\newcommand{\ph}{\varphi}
\renewcommand{\i}{\sfi}\newcommand{\Id}{\dsone} \renewcommand{\d}{\sfd}
\renewcommand{\Re}{\operatorname{Re}}\renewcommand{\Im}{\operatorname{Im}}
\newcommand{\supp}{\operatorname{supp}}
\DeclareFontFamily{U}{mathx}{\hyphenchar\font45}
\DeclareFontShape{U}{mathx}{m}{n}{
	<5> <6> <7> <8> <9> <10>
	<10.95> <12> <14.4> <17.28> <20.74> <24.88>
	mathx10
}{}
\DeclareSymbolFont{mathx}{U}{mathx}{m}{n}
\DeclareMathAccent{\widecheck}{0}{mathx}{"71}
\DeclareMathAccent{\wideparen}{0}{mathx}{"75}
\DeclareFontFamily{OMX}{MnSymbolE}{}
\DeclareFontShape{OMX}{MnSymbolE}{m}{n}{
	<-6>  MnSymbolE5
	<6-7>  MnSymbolE6
	<7-8>  MnSymbolE7
	<8-9>  MnSymbolE8
	<9-10> MnSymbolE9
	<10-12> MnSymbolE10
	<12->   MnSymbolE12}{}
\DeclareSymbolFont{mnlargesymbols}{OMX}{MnSymbolE}{m}{n}
\DeclareMathDelimiter{\llangle}{\mathopen}{mnlargesymbols}{'164}{mnlargesymbols}{'164}
\DeclareMathDelimiter{\rrangle}{\mathclose}{mnlargesymbols}{'171}{mnlargesymbols}{'171}
\DeclareMathDelimiter{\lsem}{\mathopen}{mnlargesymbols}{'102}{mnlargesymbols}{'102}
\DeclareMathDelimiter{\rsem}{\mathclose}{mnlargesymbols}{'107}{mnlargesymbols}{'107}
\DeclareMathDelimiter{\langlebar}{\mathopen}{mnlargesymbols}{'152}{mnlargesymbols}{'152}
\DeclareMathDelimiter{\ranglebar}{\mathclose}{mnlargesymbols}{'157}{mnlargesymbols}{'157}
\DeclareMathDelimiter{\lWavy}{\mathopen}{mnlargesymbols}{'137}{mnlargesymbols}{'137}
\DeclareMathDelimiter{\rWavy}{\mathopen}{mnlargesymbols}{'137}{mnlargesymbols}{'137}
\newcommand{\chr}{\mathbf 1}
\newcommand{\abs}[1]{\lvert#1\lvert}
\newcommand{\Norm}[1]{\left\lVert#1\right\lVert}
\newcommand{\FS}{\cF}
\newcommand{\nn}[1]{\Norm{#1}}
\newcommand{\nnw}[1]{ \nn{#1}_{(w_n)}}
\renewcommand{\:}{\colon}
\definecolor{green}{rgb}{0.0, 0.5, 0.5}
\definecolor{yellow}{rgb}{0.5, 0.5, 0}
\definecolor{lgray}{gray}{0.9}
\definecolor{llgray}{gray}{0.95}
\definecolor{lllgray}{gray}{0.975}
\definecolor{darkcerulean}{rgb}{0.03, 0.27, 0.49} 	\definecolor{darkcoral}{rgb}{0.8, 0.36, 0.27}
\pgfplotsset{/pgfplots/colormap/bluewhitered/.style={/pgfplots/colormap={bluewhitered}{color(0cm)=(darkcerulean); color(0.5cm)=(white); color(1cm)=(darkcoral)}}}
\pgfplotsset{/pgfplots/colormap/bluered/.style={/pgfplots/colormap={bluered}{color(0cm)=(darkcerulean); color(1cm)=(darkcoral)}}}
\definecolor{darkgreen}{rgb}{0,.39,0}
\tikzset{external/system call={pdflatex \tikzexternalcheckshellescape -extra-mem-top=10000000 -extra-mem-bot=10000000 -halt-on-error -interaction=batchmode -jobname "\image" "\texsource"}}
\newcommand{\Af}{\mathfrak{A}}
\newcommand{\tr}{{\operatorname{tr}}}
\newcommand{\Cci}{C_\sfc^\infty}
\newcommand{\ketbra}[2]{ {|{#1}\rangle}\hspace*{-0.9mm}{\langle{#2}|}  }
\newcommand{\thmenum}{\leavevmode\vspace{-\baselineskip+4mm} \vspace*{-\medskipamount}}
\newcommand{\Lb}{\mathcal{L}}
\newcommand{\kh}{\mathfrak{K}}
\newcommand{\lin}{\operatorname{lin}}
\renewcommand{\H}{\mathcal{H}}
\newcommand{\cs}[1]{\left< #1 \right>}
\newcommand{\sgn}{\operatorname{sgn}}
\newcommand{\Def}{\mathcal{D}}
\newcommand{\Afe}{\widehat{\Af}}
\newcommand{\gw}{\widetilde g}
\newcommand{\fw}{\widetilde f}
\newcommand{\Pa}[1]{P^-_{#1}}
\newcommand{\Vb}{\mathbf{V}}
\newcommand{\PN}{\Pi}
\newcommand{\An}{A_n}
\newcommand{\khs}{\kh^\otimes}
\newcommand{\FSo}[1]{\FS_{#1}^\otimes}
\newcommand{\ssum}[1]{\sum_{\substack{#1}}}
\newcommand{\B}{\mathcal{B}}
\begin{document}

\title{On the thermodynamic limit of interacting fermions in the continuum}
\author{Oliver Siebert\footnote{Current affiliation: UC Davis, Department of Mathematics, Davis, CA 95616-8633, USA,  \href{mailto:osiebert@ucdavis.edu}{osiebert@ucdavis.edu}} \\ \small{Department of Mathematics, University of Tübingen} \\ \small{Auf der Morgenstelle 10, 72076 Tübingen, Germany}}

\maketitle

\begin{abstract}
		We study the dynamics of non-relativistic fermions in $\IR^d$ interacting through a pair potential.  Employing methods developed by Buchholz in the framework of resolvent algebras, we consider an extension of the CAR algebra where the dynamics acts as a group of $*$-automorphisms, which are continuous in time in all sectors for fixed particle numbers. In addition, we construct a $C^*$-dynamical system by identifying a suitable dense subalgebra. Finally, we
		briefly discuss how this framework could be used to construct KMS states in the future.
\end{abstract}

\section{Introduction}

The existence of the thermodynamic limit with a corresponding continuous Heisenberg time evolution and a set of equilibrium (KMS) states  is of fundamental importance for the description of large, infinitely extended many-body systems \cite{sakai1991operator}. For example, in algebraic quantum statistical mechanics, one studies the dynamical behavior of open quantum systems consisting of a small subsystem interacting with  an infinitely extended heat bath at thermal equilibrium  \cite{attal2006open,jakpillet3}. The latter are usually modeled as bosonic or fermionic \cite{jakvsic2002non} ideal gases in $\IR^3$, for which the KMS states and the corresponding GNS representations (Araki-Woods and Araki-Wyss \cite{araki_wyss}) are explicitly known.

In the bosonic case, even the free dynamics is not strongly continuous on the Weyl algebra, so one has to resort to certain subalgebras or representation-dependent $W^*$-dynamical systems. In contrast, for fermions, the free dynamics is strongly continuous on the entire CAR algebra, which leads to the convenient setting of $C^*$-dynamical systems. However, this seems to be a peculiarity of the free case, and it is generally believed that including interactions between particles violates this property. For example, one could think of states that develop arbitrarily large local densities or energies within a finite time by filling the phase space with an increasing number of fermions at the same location with different momenta. Due to the non-zero repulsion or attraction at fixed distances, this would lead to arbitrarily strong attractions on individual fermions and thus to a discontinuity in the time evolution, cf. \cite[p. 354]{BR2}. 
 
By imposing ultraviolet cutoffs in the interaction, it can be shown that strong continuity still holds for a certain class of potentials. This was first addressed by Narnhofer and Thirring \cite{narnhofer1990quantum,narnhofer1991galilei} with a UV regularization that preserves Galilean invariance. They also demonstrated the existence of KMS states, which were later proven to be unique in certain regimes \cite{jakel1995uniqueness}.
A similar result, using a regularization with smeared-out interactions between the particles, was proved in \cite{GebertNachtergaeleReschkeSims.2020,hinrichs2024lieb} based on Lieb-Robinson bounds \cite{LiebRobinson.1972}.
Continuous fermionic systems with magnetic fields, where a UV regularization naturally appears via localized frames, was recently considered in \cite{bachmann2024liebrobinson}, based on a new Lieb-Robinson bound.
 In fact, the existence of the thermodynamic limit is known to be one of the primary applications of Lieb-Robinson bounds; see \cite{td1,td2,td3,td4,irreversible} for applications in lattice models. While Lieb-Robinson bounds have been successfully established for lattice fermions \cite{gluza2016equilibration,NachtergaeleSimsYoung.2018}, the situation in the continuum is much more challenging due to local UV divergences. Specifically, the current works \cite{GebertNachtergaeleReschkeSims.2020,hinrichs2024lieb} require bounded perturbations when restricting to a finite box. Since this condition does not hold when the UV cutoff is removed, and since the Hamiltonian only converges in the strong resolvent sense, these methods fail in this context.

In \cite{bh1,bh2} Buchholz proposed an alternative approach for constructing a $C^*$-dynamical system for continuous interacting bosons.  It is based on his and Grundling's resolvent algebra approach \cite{bh}, where the algebra is generated by resolvents of the field. Therefore, in singular states with infinite particle densities the observables simply vanish, circumventing one of the main problems for bosons. The pure many-body problem of an unbounded interaction term, growing with the number of particles, still remains, and was treated in  \cite{bh1} by considering the problem in all $n$-particle sectors. He used the explicit structure  of the particle number preserving observables of the resolvent algebra. The $n$-particle sectors are generated by a mixture of compact operators and identities on the different tensor factors, and different sectors are connected with certain coherent relations. With a slight extension, one obtains an algebra which is invariant under the dynamics, and where the dynamics acts pointwise continuously in each particle sector. Restricting to a subalgebra generated by time averages then leads to an even pointwise norm continuous dynamics (with respect to the operator norm on the full Hilbert space, as compared to restrictions to $n$ particles), and so to a $C^*$-dynamical system. In \cite{bh2} this technique was generalized to observables which are not particle number preserving.  

In contrast to  other constructions of the thermodynamic limit by means of localized systems (cf. e.g. \cite{haag1967equilibrium,BR2,sakai1991operator}) this approach allows for the direct study of the infinite-volume system from the outset. This eliminates the need for approximations with increasingly large finite boxes with sharp boundary conditions, and varying volume-dependent algebras. Instead, one can work with arbitrary smooth trapping potentials for the localization, cf. also \cite{buchholzyngvason2024} for recent results in this context for free bosons. 
In particular, when trying to construct equilibrium states, one can work with a sequence of regularized models, where all the corresponding dynamics and states, as well as their limits, are defined within the same algebra. 

The purpose of the present work is to apply these methods in the fermionic case, as already briefly suggested in \cite{bh1}. Using well-established structural results about the CAR algebra \cite{bratteli1972inductive}, one can easily show that the $n$-particle sectors and the coherence relations in the standard CAR algebra have an analogous structure as in the resolvent algebra. Thus, the previous techniques can be used to tackle the fermionic many-body problem directly in a slightly extended CAR algebra, which is invariant under the dynamics. As in the bosonic case, we prove that the dynamics is continuous in the $n$-particle sectors and we show that the dense (with respect to the $n$-particle norms) subalgebra generated by time averages gives rise to a $C^*$-dynamical system. Our analysis also includes the case of observables which are not particle number preserving, which, not surprisingly, requires much shorter proofs than in the bosonic resolvent case \cite{bh2}. 

This approach addresses the problem of finding a sufficiently small extension of the CAR algebra which is still invariant under the dynamics and, additionally, an algebra as large as possible where the interacting dynamics is still pointwise norm continuous. This algebra is generated by time averages and its size can be determined by the topology in which it is dense in the CAR algebra. In our setting this topology is the convergence in all $n$-particle sectors. This is in contrast to the trivial approach of considering the (weaker) strong operator topology (see also \Cref{rem:sot}), in which the dynamics is a priori continuous due to the existence of a self-adjoint generator. In this case, one can extend the CAR algebra to all bounded operators and consider time averages as well, leading to a $C^*$-dynamical system which is dense in the CAR algebra with respect to the strong operator topology. Note in particular that the extension of the CAR algebra is necessary for invariance under the interacting dynamics. It does not lead to the omission of the problematic states responsible for the discontinuity of the dynamics. Instead, this problem is solved by averaging over time. 

In addition, we show that the current scheme provides a potential framework for the construction of KMS states in the strict sense of $C^*$-dynamical systems. As in the free case, one can expect to obtain them by limit or accumulation points of Gibbs states, when a suitable localization tends to zero. To this end, we briefly discuss sequences of vanishing trapping potentials, where it can be shown that in the non-interacting case the corresponding Gibbs states converge to the well-known quasi-free KMS state, cf. \cite{bh1}. In the interacting case, the challenge is to verify the KMS condition of the limit points, since we have no norm convergence of the dynamics. However, one can show convergence in all $n$ particle sectors or in stronger topologies with weighted supremum norms depending on the number of particles. Therefore, the chosen setup is naturally promising for the study of KMS states.

In the subsequent section we present the formalism, the model and the main theorems, followed by a short discussion about KMS states. In \Cref{sec:structure} we study the structure of the $n$-particle sectors of the particle number preserving CAR algebra and discuss its corresponding extension. Then in \Cref{sec:invariance} we discuss the dynamics and its continuity on this algebra. There we also prove the first main theorem for the particle number preserving case. The general case, i.e., the counterpart of \cite{bh2}, is then treated in \Cref{sec:full}.

\section{Model and main results}

Before introducing the Hamiltonian and stating the main results, we briefly recall some standard aspects of the formalism of fermionic Fock spaces. Readers not familiar with the material may also consult  \cite{BR2,Arai.2018} as thorough references. In the following, let $\Lb(\H)$ and $\cK(\H)$ denote the bounded and compact operators on a Hilbert space $\H$, and let $\hs := L^2(\IR^d)$ be the Hilbert space describing one particle or fermion. The antisymmetric orthogonal projection  $\Pa{n}$  in $\hs^{\otimes n}$ is given by
\begin{align*}
	f_1 \wedge \ldots \wedge f_n := \Pa{n} (f_1 \otimes \cdots \otimes f_n) := \frac{1}{n!} \sum_{\sigma \in S_n} (-1)^\sigma f_{\sigma(1)} \otimes \cdots \otimes f_{\sigma(n)}, \qquad f_j \in \hs.
\end{align*}
 The fermionic Fock space is defined as
\begin{align*}
	\cF =  \bigoplus_{n=0}^\infty \cF_n, \qquad \cF_n = \begin{cases}
		\IC &: n= 0, \\_{}
		 \Pa{n} \hs^{\otimes n} &: n > 0.
	\end{cases} 
\end{align*}
We can write each element $\psi \in \cF$ as a sequence $(\psi_n)_{n \in \IN_0}$, where $\psi_0 \in \IC$ and $\psi_n \in L^2_a((\IR^d)^{\times n})$, $n \in \IN$, the antisymmetric $L^2$ functions in $n$ variables. The vacuum vector is given by $\Omega := (1,0,0,\ldots)$. We will also use the notation  $\FS_{\leq n} := \bigoplus_{k=0}^n \FS_k$. 

For $f \in \hs$, the creation and annihilation operators $a^*(f), a(f) \in \Lb(\cF)$ are defined as
\begin{align}
	(a^*(f) \psi)_n(x_1, \ldots, x_n) &:= \begin{cases}
		0 &: n= 0, \\
		n^{-1/2} \sum_{i=1}^n (-1)^{i+1} f(x_i)  \psi_{n-1}(x_1, \ldots, \widehat{x_i}, \ldots  ,x_n)  &: n \in \IN, 
	\end{cases} \label{eq:creation op} \\
	(a(f) \psi)_n(x_1, \ldots, x_n) &:= \sqrt{n+1} \int \overline{f(x)} \psi_{n+1}(x,x_1, \ldots, x_n)  \d x, \qquad n \in \IN_0, \label{eq:annihilation op} 
\end{align}
where $\psi \in \cF$. The operators are adjoint to each other and satisfy the well-known canonical anticommutation relations (CAR),
\begin{align}
	\label{eq:car}
	\{a(f),a(g)\} = 0, \qquad \{a(f), a^*(g)\} = \cs{f,g}, \qquad f,g \in \hs. 
\end{align}
In particular, \eqref{eq:car} implies that $\nn{a^*(f)} = \nn{a(f)} = \nn f$. We will also write $a^\#$ as a symbol which can resemble either $a$ or $a^*$. The local number operators are  given by $n(f) := a^*(f) a(f)$.

The CAR algebra with respect to $\hs$ is defined as
\begin{align*}
	\Af := C^*(a(f) : f \in \hs ) \subseteq \Lb(\cF),
\end{align*}
the $C^*$-algebra generated by $a(f),~f \in \hs$, i.e., the norm closure of the polynomial algebra in $a(f)$ and $a^*(g)$, $f,g \in \hs$. Furthermore, we introduce the particle number preserving CAR algebra $\Af_0 \subset \Af$ as the unital $C^*$-subalgebra of $\Af$, which preserves all particle sectors $\FS_n$, $n \in \IN_0$. It can also be written as
\begin{align}
\label{eq:pp algebra}
\Af_0 =  \overline { \operatorname{lin} \{ a^*(f_1) \cdots a^*(f_n) a(g_1) \cdots a(g_n) : n\in \IN_0,~ f_1, \ldots, f_n, g_1, \ldots, g_n \in \hs \}  }.
\end{align}

We want to study the dynamics on $\Af$ and $\Af_0$ for an interacting fermionic system, given by the following Hamiltonian. Let $V \: \IR^d \rightarrow \IR$ be a real, continuous and even function, which vanishes at infinity. The Hamiltonian on the $n$-particle sector is a self-adjoint operator with domain  $\Def(H_n) := \Pa{n} (H^2(\IR^d)^{\otimes n}) \subseteq \FS_n$  given by
\begin{align}
		\label{eq:Hamiltonian n}
		H_n := \sum_{i=1}^n (-\Delta_i) +  \sum_{\substack{i,j=1,\\i\not= j}}^n V_{ij} ,
\end{align}
where $-\Delta_i$ denotes the operation of the Laplacian $-\Delta$ on the $i$-th tensor factor and
\begin{align*}
 (V_{ij} \psi)(x_1, \ldots, x_n) := V(x_i-x_j)\psi(x_1, \ldots,  x_n). 
\end{align*}
 Note that each operator $V_{ij}$ is bounded and we have a bounded perturbation for fixed particle number $n$. However, in the whole Fock space $\FS$, the perturbation is unbounded and grows like $n(n-1)$. 

The complete Hamiltonian on Fock space $\cF$ is then given by 
\begin{align}
	\label{eq:Hamiltonian}
	H := \bigoplus_{n=1}^\infty H_n
\end{align}
and well-known to be self-adjoint on its natural domain \cite{GebertNachtergaeleReschkeSims.2020}, i.e.,
\begin{align*}
	\Def(H) := \{ \psi \in \cF: \psi_n \in \Def(H_n),~ \sum_{n=1}^{\infty} \nn{ H_n \psi_n}^2 < \infty   \}.
\end{align*}

We consider the dynamics with respect to $H$, $\alpha_t(A) = e^{\i t H} A e^{-\i t H}$, $A\in \Lb(\FS)$, on an extended version  of $\Af$. To this end, let us introduce a family of seminorms $\nn{\cdot}_n$, $n \in \IN_0$, on $\Lb(\cF)$ given by
\begin{align*}
	\nn{A}_n := \nn{A|_{\cF_n}} = \nn{A \PN_n}, \qquad A \in \Lb(\cF),
\end{align*}
where $\PN_n$ denotes the orthogonal projection onto the subspace of $n$ particles. The algebras $\Af$ and $\Af_0$ equipped with the seminorms $\nn{\cdot}_n$ form a locally convex space. Note that this topology is (strictly) weaker than the norm topology but stronger than the strong operator topology. For instance, for any orthonormal basis $(f_k)_{k\in\IN}$ of $\hs$,  consider the sequence defined by $A_k := n(f_1) \cdots n(f_k) \in \Af_0$. Then $A_k \overset{k\to\infty}\to 0$ in the seminorms topology but not in the norm topology. 

Let $\Afe_0 \subseteq \Lb(\cF)$ be the closure of $\Af_0$ with respect to the $\nn\cdot_n$ seminorms. This is a $C^*$-algebra as well, since it is closed in the seminorms and therefore also closed in the norm topology. It should be noted that the extension does not coincide with the algebra itself, see \cref{ex:counter}. 

For the full observable algebra we can proceed as follows. For each $k \in \IN$, let
\begin{align*}
	\Af_k &:= \lin \{ B a^*(f_1) \cdots a^*(f_k): B \in \Af_0,~f_1,\ldots,f_k \in \hs   \}, \\
	\Af_{-k} &:= \lin \{ B a(f_1) \cdots a(f_k): B \in \Af_0,~f_1,\ldots,f_k \in \hs   \},
\end{align*}
i.e., the algebras, whose elements add and remove $k$ particles, respectively. 
Let $\Afe_k \subseteq \Lb(\FS)$, $k \in \IZ$, be the closures of $\Af_k$ with respect to the $\nn \cdot_n$ seminorms. Then we define
\begin{align}
	\label{eq:Afe characterization}
	\Afe :=  \overline{ \bigoplus_{k \in \IZ} \Afe_k },
\end{align} 
where the closure is with respect to the operator norm. We see  in \Cref{sec:full} that  $\Afe$ is in fact a $C^*$-algebra.  Note that  $\Af \subseteq \Afe$, $\Afe_0 \subseteq \Afe$, and since the seminorms topology is weaker than the operator norm topology, $\Afe_k$ is also closed in the operator norm but $\Afe$ is not closed in the seminorms.

 Now we have all the ingredients to formulate our main result. 
\begin{thm}
	\label{th:main}
	The dynamics $(\alpha_t)_{t\in\IR}$ leaves $\Afe$ (resp. $\Afe_0$) invariant and therefore forms a group of *-automorphisms on $\Afe$ (resp. $\Afe_0$). Furthermore, for each $A \in \Afe$ (resp. $\Afe_0$), the map $\IR \longrightarrow \Afe$ (resp. $\IR \longrightarrow \Afe_0$), $t \mapsto \alpha_t(A)$, is continuous with respect to the topology induced by the $\nn \cdot_n$, $n \in \IN$, seminorms.
\end{thm}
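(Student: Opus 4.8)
The plan is to first strip the statement down to something at $t=0$ and on generators. Since $e^{\i tH}$ commutes with every $\PN_n$ and restricts to a unitary on $\FS_n$, each $\alpha_s$ is isometric for every seminorm $\nn\cdot_n$; combined with the group law this yields $\nn{\alpha_t(A)-\alpha_s(A)}_n=\nn{\alpha_{t-s}(A)-A}_n$, so the continuity assertion is equivalent to $\lim_{t\to0}\nn{\alpha_t(A)-A}_n=0$ for all $A$ and $n$. As $\alpha_t$ is a $*$-homomorphism that is isometric both in norm and in each $\nn\cdot_n$, and since by \eqref{eq:Afe characterization} an element of $\Afe$ is a norm-limit of (finite sums of) seminorm-limits of polynomials in $a^\#$, multiplicativity together with the fact that products of uniformly bounded, $\nn\cdot_n$-convergent sequences again converge reduces \emph{both} the invariance and the continuity-at-$0$ to the generators $a(f),a^*(f)$ ($f\in\hs$), and then to products $Ba^*(f_1)\cdots a^*(f_k)$ with $B\in\Af_0$ for the part of the statement concerning $\Afe$.

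\noindent\textbf{Interaction picture; continuity.} Next I would use that $H=\bigoplus_nH_n$ with $H_n=H_n^{0}+V_n$, $H_n^{0}=\sum_{i=1}^n(-\Delta_i)$, and $\nn{V_n}\le n(n-1)\nn V_\infty<\infty$. Introduce the interaction-picture propagators $T_n(t):=e^{\i tH_n^{0}}e^{-\i tH_n}$, solving $\dot T_n(t)=-\i\widetilde V_n(t)T_n(t)$ with $\widetilde V_n(t)=e^{\i tH_n^{0}}V_ne^{-\i tH_n^{0}}$. Because $\sup_s\nn{\widetilde V_n(s)}=\nn{V_n}<\infty$, the Dyson series converges, $t\mapsto T_n(t)$ is locally Lipschitz in norm, and $\nn{T_n(t)-\Id}\le\abs t\,\nn{V_n}$. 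From $e^{\i tH}=T(t)^*e^{\i tH^{0}}$, with $T(t)=\bigoplus_nT_n(t)$, one reads off the sector-wise identities
\begin{align*}
\alpha_t(a^*(f))\PN_n=T_{n+1}(t)^*\,a^*(e^{-\i t\Delta}f)\,T_n(t)\,\PN_n,\qquad \alpha_t(A)\PN_n=T_n(t)^*\bigl(\alpha^0_t(A)\PN_n\bigr)T_n(t)\quad(A\in\Af_0),
\end{align*}
where $\alpha^0_t=\Ad(e^{\i tH^{0}})$ is the free dynamics, which acts norm-continuously on $\Af$ and $\Af_0$ since $\alpha^0_t(a^\#(f))=a^\#(e^{-\i t\Delta}f)$ and $\nn{a^\#(f)-a^\#(g)}=\nn{f-g}$. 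The continuity statement then drops out of the estimate $\nn{\alpha_t(a^*(f))-a^*(f)}_n\le\nn{a^*(e^{-\i t\Delta}f)-a^*(f)}+2\nn{a^*(f)}\bigl(\nn{T_{n+1}(t)-\Id}+\nn{T_n(t)-\Id}\bigr)\to0$ as $t\to0$, and the analogous estimate for $a(g)$ and for products.

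\noindent\textbf{Invariance --- the main obstacle.} It remains to show $\alpha_t(\Afe)\subseteq\Afe$ (restricting to the number-preserving part for $\Afe_0$). By the description of the $\Afe_k$ in terms of the $n$-particle sectors of $\Af_0$ and the coherence relations between them from \Cref{sec:structure}, and since $\alpha_t$ is a $*$-homomorphism respecting the sector grading (so the coherence relations are automatically preserved), it suffices to prove $\alpha_t(a^*(f))\in\Afe_1$ and $\alpha_t(a(g))\in\Afe_{-1}$. Via the factorization above, together with $a^*(e^{-\i t\Delta}f)\in\Af_1$, the essential point is that conjugation by $T(t)$ preserves the extension $\Afe_1$ --- equivalently, that $\Ad(T_n(t))$ leaves $\Af_0\res{\FS_n}$ invariant for every $n$; since $\Ad(T_n(t))$ is the norm-convergent iterated-integral exponential of the bounded derivations $X\mapsto\i[\widetilde V_n(s),X]$ and $\widetilde V_n(s)=\alpha^0_s(V)\res{\FS_n}$ with $\alpha^0_s$ preserving $\Af_0$, everything collapses to the \emph{key algebraic fact}: for every $n$ the (sector-wise bounded) operator $[V,a^*(f)]$ --- which, up to a constant, is $\iint V(x-y)f(x)\,\ad_x\,\ad_y\,a_y\,\d x\,\d y$, a degree-$(2,1)$ expression in the fields in which the created particle is coupled to the others only through the two-body potential --- is approximated, in each seminorm $\nn\cdot_n$, by elements of $\Af_1$, and similarly $[V,a(g)]$ by elements of $\Af_{-1}$. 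This is exactly the step for which the explicit identification of the $n$-particle sectors of $\Af_0$ in \Cref{sec:structure} is designed; it is the fermionic counterpart of the corresponding lemma in \cite{bh1} and, thanks to the simple structure of the CAR algebra, is considerably shorter. Granting it, a general $Ba^*(f_1)\cdots a^*(f_k)\in\Af_k$ is handled by $\alpha_t(Ba^*(f_1)\cdots a^*(f_k))=\alpha_t(B)\alpha_t(a^*(f_1))\cdots\alpha_t(a^*(f_k))\in\Afe_0\Afe_1\cdots\Afe_1\subseteq\Afe_k$, using the grading $\Afe_j\Afe_{j'}\subseteq\Afe_{j+j'}$ from \Cref{sec:full} and the $k=0$ case for $\alpha_t(B)$; applying the same to $\alpha_{-t}$ shows $\alpha_t$ restricts to a $*$-automorphism, and the group law is inherited from $\Lb(\FS)$. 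In short, the whole theorem rests on this single nontrivial point --- stability of the extension under commutation with the two-body interaction --- everything else being a soft reduction or a consequence of the sector-wise boundedness of the $V_n$ together with the norm-continuity of the free dynamics; in particular the many-body unboundedness of $V$ on all of $\FS$ never obstructs anything, since it enters only through the bounded operators $\widetilde V_n(s)$.
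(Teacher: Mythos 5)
Your overall architecture matches the paper's: pass to the interaction picture (your $\Ad(T_n(t))$ is the paper's Dyson map $\gamma_t$ on each sector), reduce invariance and continuity to the generators $a^\#(f)$, and use the sector-wise boundedness of $V_n$ for the continuity estimates. The continuity part and the reduction to generators are sound. However, there is one genuine gap, and it sits exactly where the paper does most of its work. You assert that because $\alpha_t$ is a $*$-homomorphism respecting the sector grading, ``the coherence relations are automatically preserved,'' and you declare that $\alpha_t(a^*(f))\in\Afe_1$ is \emph{equivalent} to $\Ad(T_n(t))$ leaving $\Af_0\res{\FS_n}$ invariant for every $n$. That equivalence is false. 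Membership in $\Afe_0$ (resp.\ $\Afe_1$) is strictly stronger than having every restriction land in $\kh_n$ (resp.\ the corresponding sector space) with uniformly bounded norms: by the isomorphism $\Phi\:\Afe_0\to\kh$, one additionally needs the cross-sector coherence $\kappa_n(K_n)=K_{n-1}$, where $\kappa_n$ is defined through the weighted decomposition $\sum_m \frac{n-m}{n}\,C^{(m)}\wedge\Id_{n-m-1}$ (equivalently, through the limit of translating one particle to infinity, \cref{th:r to infinity,th:limit key relation}). A sequence such as $K_5=C\wedge\Id_4$, $K_m=0$ for $m\neq5$, has every restriction in $\kh_m$ but is not coherent, hence not in $\Afe_0$. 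The grading alone does not force $\kappa_n\circ\alpha_t=\alpha_t\circ\kappa_n$, because $H_n$ and $H_{n-1}$ are unrelated operators on different spaces; the intertwining is a statement about the locality of the two-body interaction, proved in the paper via the clustering computation of \cref{th:commutator clustering} (anticommutators of $[\Vb,a^\#(h)]$ with far-translated creation operators vanish) and then \cref{th:kappa Delta compatibility,th:kappa alpha compatibility,th:Deltat af restriction}. Your proposal contains no substitute for this step.

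A secondary, lesser issue: the sector-wise ``key algebraic fact'' that conjugation by $T_n(t)$ preserves $\kh_n$ and that $a(f)\Delta_n(t)(a^*(f)\res{\FS_n})\in\kh_n$ is only ``granted,'' not proved. You correctly identify it as essential and point to the structure theory of \cref{sec:structure}; in the paper it rests on the compactness of $\int_0^t V_{ij}(s)\,\rmd s$ (\cref{th:int V compact}) and the explicit computation in \cref{th:is in khn}, fed into the Dyson series term by term. Flagging this is acceptable for a sketch, but combined with the missing coherence argument, the proposal omits both halves of the invariance proof: the sector-wise containment is deferred, and the cross-sector compatibility is incorrectly dismissed as automatic.
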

The proof will be given in  \Cref{sec:invariance} for the case $\Afe_0$ and in \cref{sec:full} for the case $\Afe$. Using time averages as in \cite{bh1,bh2}, we obtain a $C^*$-algebra, which is dense in $\Afe$ in the seminorms topology and where the time evolution is continuous in norm. 
\begin{cor}
	\label{th:main coro}
	There exists a unital subalgebra $\Afe_H \subseteq \Afe $ (resp. $\Afe_{0,H} \subseteq \Afe_0$) depending on $H$ which is dense in $\Afe$ (resp. $\Afe_0$)  with respect to the $\nn \cdot_n$, $n \in \IN$, seminorms topology and invariant under $(\alpha_t)_{t\in\IR}$. The map $\IR \longrightarrow \Afe_H$ (resp. $\IR \longrightarrow  \Afe_{0,H}$), $t \mapsto \alpha_t(A)$, is continuous in norm for all $A \in \Afe_H$ (resp. $A \in \Afe_{0,H}$). Therefore, $(\Afe_H, (\alpha_t)_{t\in\IR})$ (resp. $(\Afe_{0,H},  (\alpha_t)_{t\in\IR})$) forms a $C^*$-dynamical system. 
\end{cor}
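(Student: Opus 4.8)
The plan is to adapt the time-averaging construction of Buchholz \cite{bh1,bh2}, which in the CAR setting is essentially soft. For $f \in L^1(\IR)$ and $A \in \Afe$ set $\alpha_f(A) := \int_\IR f(t)\,\alpha_t(A)\,\d t$, understood as a weak operator integral; since $\nn{\alpha_t(A)} = \nn{A}$ for all $t$, this defines $\alpha_f(A) \in \Lb(\FS)$ with $\nn{\alpha_f(A)}_n \le \nn{f}_{L^1(\IR)}\nn{A}$ for every $n$ (and the same bound in operator norm). If in addition $f$ is continuous with compact support, then by \Cref{th:main} each map $t \mapsto \alpha_t(A)\PN_n$ is norm continuous into $\Lb(\FS)$, so $\alpha_f(A)$ is the limit, in every seminorm $\nn{\cdot}_n$, of the Riemann sums $\sum_j f(t_j)\,\alpha_{t_j}(A)\,\Delta t_j$.

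The first task is to show that $\alpha_f$ preserves the algebras. For $A \in \Afe_0$ this is immediate: the Riemann sums above lie in $\Afe_0$, converge to $\alpha_f(A)$ in the seminorm topology, and $\Afe_0$ is seminorm-closed; one passes from $f$ continuous with compact support to general $f \in L^1(\IR)$ via the bound $\nn{\alpha_g(A)}_n \le \nn{g}_{L^1(\IR)}\nn{A}$. For $A \in \Afe$ one first treats the homogeneous components of the grading $\Afe = \overline{\bigoplus_{k\in\IZ}\Afe_k}$ from \Cref{sec:full}: since $H$ commutes with the number operator, $\alpha_t$ respects this grading, i.e.\ $\alpha_t(\Afe_k) \subseteq \Afe_k$, and the same Riemann-sum argument (now using that $\Afe_k$ is seminorm-closed) gives $\alpha_f(\Afe_k) \subseteq \Afe_k$; by the operator-norm bound $\nn{\alpha_f} \le \nn{f}_{L^1(\IR)}$ and the norm density of $\bigoplus_k \Afe_k$ in $\Afe$, it follows that $\alpha_f(\Afe) \subseteq \Afe$.

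Next, from $\alpha_s \circ \alpha_f = \alpha_{f(\cdot - s)}$ (valid because $\alpha_s$ is an isometry for each seminorm and therefore commutes with the seminorm-convergent integral) one gets $\nn{\alpha_s(\alpha_f(A)) - \alpha_f(A)} \le \nn{A}\,\nn{f(\cdot-s) - f}_{L^1(\IR)} \to 0$ as $s \to 0$ by continuity of translation on $L^1(\IR)$; by the group law $t \mapsto \alpha_t(\alpha_f(A))$ is then norm continuous on all of $\IR$. Define $\Afe_H := \{A \in \Afe : t \mapsto \alpha_t(A) \text{ is norm continuous}\}$ and $\Afe_{0,H} := \Afe_H \cap \Afe_0$. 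Using $\nn{\alpha_s} = 1$, the relation $\alpha_s(AB) - AB = \alpha_s(A)(\alpha_s(B) - B) + (\alpha_s(A) - A)B$, and an $\eps/3$-argument, $\Afe_H$ is a unital, norm-closed $*$-subalgebra of $\Afe$; it is $\alpha_t$-invariant since $\nn{\alpha_s(\alpha_t(A)) - \alpha_t(A)} = \nn{\alpha_s(A) - A}$, so each $\alpha_t$ restricts to a $*$-automorphism of it. By the previous two paragraphs $\alpha_f(A) \in \Afe_H$ (resp.\ $\in \Afe_{0,H}$) for all $f \in L^1(\IR)$; choosing an approximate identity $f_\eps \ge 0$ with $\int_\IR f_\eps = 1$ and $\supp f_\eps \to \{0\}$, we obtain for each $n$
\begin{align*}
	\nn{\alpha_{f_\eps}(A) - A}_n \le \int_\IR f_\eps(t)\,\nn{\alpha_t(A) - A}_n\,\d t \xrightarrow{\ \eps \to 0\ } 0
\end{align*}
from the continuity of $t \mapsto \nn{\alpha_t(A) - A}_n$ and $\nn{\alpha_0(A) - A}_n = 0$ granted by \Cref{th:main}. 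Hence $\Afe_H$ is dense in $\Afe$ (and $\Afe_{0,H}$ in $\Afe_0$) in the seminorm topology, and $(\Afe_H, (\alpha_t)_{t\in\IR})$, resp.\ $(\Afe_{0,H}, (\alpha_t)_{t\in\IR})$, is a $C^*$-dynamical system.

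The main obstacle is the preservation step: showing that $\alpha_f(A)$ genuinely lies in $\Afe$, not merely in $\Lb(\FS)$. Because the seminorm topology on $\Lb(\FS)$ is not complete, one cannot just appeal to completeness of $\Afe$; instead one has to descend to the homogeneous pieces $\Afe_k$, invoke that $\alpha_t$ preserves the grading (which rests on the structure theory of \Cref{sec:full} and on $[H,N]=0$), and use that $\Afe_k$, being a seminorm-closure of $\Af_k$, contains the seminorm-limits of its own Riemann sums. Everything else — that the elements with norm-continuous orbit form a $C^*$-dynamical system which is dense in the seminorm topology — is the standard soft argument.
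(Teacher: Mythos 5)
Your proposal is correct and follows essentially the same route as the paper: time averages $\int f(s)\alpha_s(A)\,\d s$, shown to lie in $\Afe_0$ (resp.\ in the graded pieces $\Afe_k$) via Riemann sums and seminorm-closedness, with norm continuity from translation continuity in $L^1$ and density from a Dirac sequence. The only cosmetic difference is that you take $\Afe_H$ to be the (maximal) set of all elements with norm-continuous orbit, whereas the paper takes the $C^*$-subalgebra generated by the time averages; both choices work equally well.
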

\begin{rem}
	\label{rem:sot}
	Considering time averages is a common technique to obtain continuity in norm, see e.g. \cite{thermalionization} for an application in the free bosonic case. In our setting, it could also be applied directly to $\Lb(\H)$ with the strong operator topology without having to prove \cref{th:main}. However, the resulting subalgebra would only be dense in the weaker strong operator topology.
\end{rem}
\subsubsection*{Existence of KMS states}
The significance of our result lies in the fact that it provides a potential framework to construct KMS states in the sense of $C^*$-dynamical systems. Candidates for such states can be obtained as weak-$*$-limit points of states of suitably regularized models. One possibility already suggested in \cite{bh1} is to add trapping potentials which guarantee the existence of Gibbs states.
 Then one considers sequences of states where these potentials converge to zero. 

More precisely, we introduce a modified form of the Hamiltonian \eqref{eq:Hamiltonian n} by replacing the Laplacian by a harmonic oscillator. So we define for $L> 0$, 
\begin{align*}
	H^L := \bigoplus_{n=1}^\infty H^L_n, \qquad H^L_n := \sum_{i=1}^n \left(-\Delta_i + \frac{x_i^2}{L^4} \right) + \sum_{\substack{i,j=1,\\i\not= j}}^n V_{ij}.
\end{align*}
This operator has discrete spectrum because of the Golden-Thompson inequality. Thus, the corresponding Gibbs states 
\begin{align*}
	\omega^\beta_L(A) := \frac{\tr(A e^{-\beta H^L})}{\tr( e^{-\beta H^L})}, \qquad A \in \Afe,
\end{align*}
exist. In the free case $V=0$ it is expected these states converge to the well known quasi-free states, similarly as in the bosonic setting \cite[p. 974]{bh1}. In the interacting case $V \not= 0$, there are a priori only weak-$*$-limit points by Banach-Alaoglu. Hence, we find sequences $(L_k)_{k \in \IN}$ such that $L_k \to \infty$ and
\begin{align*}
	\omega := \lim_{k\to\infty} \omega_{L_k}
\end{align*}
exists in the weak-$*$-topology in $\Afe^*$. 
 
The verification of the KMS condition for $\omega$ is in fact more challenging. Let  $\alpha^L_t$ be the time evolution with respect to the Hamiltonian $H^L$. It is relatively easy to see that a sufficient condition is 
\begin{align}
	\label{eq:sufficient}
\lim_{k\to\infty} \omega_{L_k}(A\alpha^{L_k}_t(B)) =  \omega(A\alpha_t(B)), \qquad A,B \in \Afe.
\end{align}
\begin{prop}
	\label{th:KMS property}
	Assume that \eqref{eq:sufficient} holds. 
 Then we have, for all functions $f$ with $\widehat f \in \Cci(\IR)$ and all $A,B \in \Afe$,
	\begin{align}
		\label{eq:KMS condition}
		\int_{\IR} f(t) \omega(A \alpha_t(B)) \d t = \int_{\IR} f(t + \i \beta) \omega(\alpha_t(B) A) \d t. 	
	\end{align}
	In particular, $\omega$ is a $\beta$-KMS state for the time evolution  $(\alpha_t)_{t\in \IR}$ if we restrict it to $\Afe_{0,H}$. 
\end{prop}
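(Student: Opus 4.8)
The plan is to derive the integral identity \eqref{eq:KMS condition} by passing to the limit $k\to\infty$ in the KMS relation already satisfied by each regularized Gibbs state $\omega_{L_k}$, and then, using the norm-continuity of the dynamics on $\Afe_{0,H}$ provided by \Cref{th:main coro}, to read off from this identity that $\omega$ restricted to $\Afe_{0,H}$ is a genuine $\beta$-KMS state.

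First I would record the KMS property at a fixed cutoff. Since $H^L$ has purely discrete spectrum and $e^{-\beta H^L}$ is trace class for every $\beta>0$ (Golden--Thompson, as noted above), the density matrix $e^{-\beta H^L}/\tr(e^{-\beta H^L})$ defines a normal state $\omega^\beta_L$ on $\Lb(\FS)$ which is a $\beta$-KMS state for the dynamics $\alpha^L$ generated by $H^L$; this is classical (see e.g.\ \cite{BR2}). Via the equivalent integral form of the KMS condition over test functions (e.g.\ \cite[Prop.~5.3.12]{BR2}), this yields, for every $k\in\IN$, all $A,B\in\Lb(\FS)$, and every $f$ with $\widehat f\in\Cci(\IR)$,
\begin{align*}
	\int_\IR f(t)\, \omega_{L_k}(A\alpha^{L_k}_t(B)) \,\d t = \int_\IR f(t+\i\beta)\, \omega_{L_k}(\alpha^{L_k}_t(B)A) \,\d t;
\end{align*}
since $\omega_{L_k}$ is a normal state on all of $\Lb(\FS)$, no invariance of $\Afe$ under $\alpha^{L_k}$ is required here, and in particular this holds for $A,B\in\Afe$. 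Next I would note that $\omega$ is $\alpha$-invariant: applying \eqref{eq:sufficient} with $A=\Id$ and using $\omega_{L_k}\circ\alpha^{L_k}_t=\omega_{L_k}$ gives $\omega(\alpha_t(B))=\lim_k\omega_{L_k}(\alpha^{L_k}_t(B))=\lim_k\omega_{L_k}(B)=\omega(B)$. Hence, for all $A,B\in\Afe$,
\begin{align*}
	\omega_{L_k}(\alpha^{L_k}_t(B)A)=\omega_{L_k}(B\alpha^{L_k}_{-t}(A)), \qquad \omega(\alpha_t(B)A)=\omega(B\alpha_{-t}(A)),
\end{align*}
which brings the right-hand side of the previous identity into the precise shape covered by \eqref{eq:sufficient} (apply it with $A,B,t$ replaced by $B,A,-t$).

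Then I would pass to the limit $k\to\infty$. Since $\widehat f\in\Cci(\IR)$, the function $f$ extends to an entire function, and both $t\mapsto f(t)$ and $t\mapsto f(t+\i\beta)$ are Schwartz, hence lie in $L^1(\IR)$ (the Fourier transform of the latter is $\xi\mapsto e^{-\beta\xi}\widehat f(\xi)\in\Cci(\IR)$); meanwhile the integrands are bounded uniformly in $k$ and $t$ by $\nn A\nn B$. By \eqref{eq:sufficient} they converge, for each fixed $t$, to $\omega(A\alpha_t(B))$ and to $\omega(B\alpha_{-t}(A))=\omega(\alpha_t(B)A)$. Dominated convergence applied to both sides of the identity from the first step then gives \eqref{eq:KMS condition}. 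Finally, to obtain the last assertion, I would invoke \Cref{th:main coro}: on $\Afe_{0,H}$ the orbits $t\mapsto\alpha_t(A)$ are norm-continuous, so $(\Afe_{0,H},(\alpha_t)_{t\in\IR})$ is a $C^*$-dynamical system, and $\omega$ restricts to a state on $\Afe_{0,H}$ ($\omega(\Id)=\lim_k\omega_{L_k}(\Id)=1$, and a weak-$*$ limit of states is positive). As \eqref{eq:KMS condition} holds in particular for all $A,B\in\Afe_{0,H}\subseteq\Afe$, the converse direction of the test-function characterization of the KMS condition (\cite[Prop.~5.3.12]{BR2}) shows that $\omega|_{\Afe_{0,H}}$ is a $\beta$-KMS state for $(\alpha_t)_{t\in\IR}$.

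Since \eqref{eq:sufficient} is taken as a hypothesis, none of this is genuinely difficult; the only points needing a little care are the bookkeeping with the two invariance properties used to bring the right-hand side into the form of \eqref{eq:sufficient}, and the observation that $f(\cdot+\i\beta)\in L^1(\IR)$ so that dominated convergence applies. The real obstacle, deliberately bypassed by the hypothesis, is \eqref{eq:sufficient} itself --- the simultaneous convergence of the regularized states $\omega_{L_k}$ and their dynamics $\alpha^{L_k}$ --- whose proof would require controlling $\alpha^{L_k}_t$ on $\Afe$ in the seminorm topology with estimates uniform along the sequence, in the spirit of \Cref{th:main}.
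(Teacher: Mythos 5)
Your proposal is correct and follows essentially the same route as the paper: establish the shifted test-function identity for each regularized Gibbs state $\omega_{L_k}$, pass to the limit via \eqref{eq:sufficient} and dominated convergence, and conclude with the test-function characterization of KMS states \cite[Prop.~5.3.12]{BR2}. The only differences are minor: the paper derives the fixed-cutoff identity explicitly via analytic elements, the trace-class relation $\omega_L(A\alpha^L_t(B))=\omega_L(\alpha^L_{t-\i\beta}(B)A)$, Paley--Wiener and Cauchy's theorem, where you simply cite that the Gibbs state is KMS for $\alpha^{L_k}$, while your use of the $\alpha$-invariance of $\omega$ and $\omega_{L_k}$ to bring the right-hand integrand into the exact form covered by \eqref{eq:sufficient} is a point of care the paper leaves implicit.
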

\begin{proof}
	We use  \cite[Prop. 5.3.12]{BR2} for the characterization of KMS states.	By \cite[Proposition 2.5.22]{BR1} there exists a dense algebra $\Afe^L \subset \Afe$ of analytic elements, i.e.,  for all $B \in \Afe^L$, there exists an entire analytic (with respect to the seminorms topology) extension $\IC \ni z \mapsto  \alpha^L_z(B)$. In particular, for all $A, B \in \Afe^L$, $\IC \rightarrow \IC$, $z \mapsto \omega_L(A \alpha^L_z(B) )$ is entire analytic. As $e^{-\beta H^L}$ is trace-class, a standard calculation shows $\omega_L(A \alpha^L_t(B)) = \omega_L(\alpha^L_{t-\i \beta}(B) A)$ for all $t \in \IR$. Then for any $f$ with $\widehat f \in \Cci(\IR)$ we obtain by Paley-Wiener  \cite[Prop. 5.3.11]{BR2} that $f$ is entire analytic, hence also $z \mapsto f(z) \omega_L(\alpha^L_z(B) A)$. Then, again by Paley-Wiener, the latter function restricted to $\{z \in \IC : \abs{\Im z} \leq \beta \}$  decays faster than $\abs{\Re z}^{-2}$ as $\abs{\Re z} \to \infty$. Therefore, Cauchy's integral theorem yields
	\begin{align}
		\label{eq:integral kms}
		\int_{\IR} f(t) \omega_L(A\alpha^L_t(B)) \d t = \int_{\IR} f(t) \omega_L(\alpha^L_{t-\i \beta}(B) A) \d t = \int_{\IR} f(t + \i \beta) \omega_L(\alpha^L_t(B) A) \d t.  
	\end{align}
	By approximation, the left-hand and right-hand sides of \eqref{eq:integral kms} also coincide for all $A,B \in \Afe$.
Then we obtain by dominated convergence for all $A,B \in \Afe_{0,H}$,
	\begin{align*}
		\int_{\IR} f(t) \omega(A \alpha_t(B)) \d t &= \lim_{k\to\infty} \int_{\IR} f(t) \omega_{L_k}(A \alpha^{L_k}_t(B)) \d t =  \lim_{k\to\infty} \int_{\IR} f(t+\i \beta) \omega_{L_k}(\alpha^{L_k}_t(B) A ) \d t \\  &= \int_{\IR} f(t+\i \beta) \omega(\alpha_t(B) A ) \d t.
	\end{align*}
	The KMS property now follows again from  \cite[Prop. 5.3.12]{BR2}. 
\end{proof}
The assumption \eqref{eq:sufficient} is difficult to check, since $\alpha^{L_k}_t(B) \to \alpha_t(B)$ does not converge in operator norm. At least, if one restricts to the particle number preserving case, one can show that it converges in the seminorms or in weighted supremum norms of the form
\begin{align*}
	\nnw{A} := \sup_{n \in \IN} w_n^{-1} \nn{A}_n, \qquad A \in \Afe_0,
\end{align*}
for some sequence $(w_n)_{n \in \IN_0}$ which increases sufficiently fast. However, as such topologies are non-complete, it seems to be difficult to obtain the required equicontinuity for the sequence of states. In order to have a uniform boundedness principle one can try to find a topology on $\Afe_0$ such that it becomes a barrelled space \cite[Ch. 33]{treves} and $\alpha^{L_k}_t(B) \to \alpha_t(B)$ still holds. 

An alternative, technically more involved and open strategy in order to avoid this problem is to use a regularization of the interaction as in \cite{narnhofer1990quantum,narnhofer1991galilei,GebertNachtergaeleReschkeSims.2020,hinrichs2024lieb}. KMS states for such regularized models can be obtained by perturbative arguments, as the regularized interactions are bounded. One can then apply the same Banach-Alaoglu strategy as before for a sequence with vanishing regularization parameter.  Such an approach has recently been proposed in \cite{narnhofer2020local} for the model of \cite{narnhofer1990quantum,narnhofer1991galilei} with smeared-out potentials, using the auto-correlation lower bounds for KMS states \cite[Theorem 5.3.15]{BR2}. The advantage of this characterization is that it allows to use operator inequalities instead of equicontinuity arguments. We leave it as an open question for further work if such an approach also works for the Gaussian regularization of \cite{GebertNachtergaeleReschkeSims.2020,hinrichs2024lieb} or for other smearing functions, which are also admissible in the analysis of \cite{hinrichs2024lieb}.

\section{The structure of the particle number preserving CAR algebra}
\label{sec:structure}

In this section we will describe the structure of the particle-number preserving subalgebra $\Af_0$, in particular, its $n$-particle sectors and the relations between them. We will use its inductive limit property, which was already studied by Bratteli in the 70's \cite{bratteli1972inductive}. It then turns out that we find an analog structure like the one of the bosonic resolvent algebra \cite{bh1}. 

For bounded operators $A_1, \ldots, A_n$ on Hilbert spaces $\H_i$, we can create a bounded operator on the $n$-times antisymmetric tensor product by
\[
A_1 \wedge \ldots \wedge A_n := \frac{1}{n!} \sum_{\sigma \in S_n}  A_{\sigma(1)} \otimes \cdots   \otimes A_{\sigma(n)}  \in \Lb(\H_1 \wedge \ldots \wedge \H_n),
\]
see e.g. \cite{garcia2023symmetric} for a thorough discussion. 
It is easy to check that
\begin{align}
	\label{eq:antisymm operator relation}
	\Pa{n} (A_1 \otimes \ldots \otimes A_n) \Pa{n} = (A_1 \wedge \ldots \wedge A_n) \Pa{n}. 
\end{align}
We will now see that the $n$-particle sectors of $\Af_0$ are given by the following algebra.
\begin{defn}
	Let $\kh_n \subseteq \Lb(\FS_n)$ be the $C^*$-subalgebra generated by elements of the form 
	\[
	C_1 \wedge \cdots \wedge C_m \wedge \Id_{n-m},
	\]
	where $m \leq n$, $C_i$ are compact operators on $\hs$ and $\Id_{n-m} :=  \Id \wedge \cdots \wedge \Id$ with $n-m$ factors. 
	This is equal to the algebra
	\[
	 \bigoplus_{m=0}^n \cK(\cF_m) \wedge \Id_{n-m}.
	\]
\end{defn}

\begin{prop}
	\label{th:restriction is khn}
	We have $\Af_0|_{\FS_n} = \kh_n$. 
\end{prop}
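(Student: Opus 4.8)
The plan is to prove the two inclusions $\Af_0\res{\FS_n}\subseteq\kh_n$ and $\kh_n\subseteq\Af_0\res{\FS_n}$ separately, exploiting the explicit linear-span description \eqref{eq:pp algebra} of $\Af_0$ together with the operator identity \eqref{eq:antisymm operator relation}. For the inclusion $\Af_0\res{\FS_n}\subseteq\kh_n$, I would first compute the restriction of a monomial $a^*(f_1)\cdots a^*(f_p)a(g_1)\cdots a(g_p)$ to the sector $\FS_n$. Since each creation operator is (up to the normalization factor coming from \eqref{eq:creation op}) a sum of terms that tensor $\ketbra{f_i}{g_j}$-type rank-one operators into one tensor slot while acting as the identity on the others, and since $\FS_n=\Pa{n}\hs^{\otimes n}$, conjugating by $\Pa{n}$ and using \eqref{eq:antisymm operator relation} shows that such a restricted monomial lands in $\bigoplus_{m}\cK(\FS_m)\wedge\Id_{n-m}$: concretely it is a finite linear combination of operators of the form $(\ketbra{f_1}{g_1}\wedge\cdots\wedge\ketbra{f_p}{g_p})\wedge\Id_{n-p}$ when $p\le n$, and $0$ when $p>n$. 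Taking norm-closed linear spans and using that finite-rank operators are dense in the compacts gives $\Af_0\res{\FS_n}\subseteq\kh_n$; one should note here that restriction to a fixed sector is a $*$-homomorphism, hence norm-continuous, so closures are preserved.

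For the reverse inclusion $\kh_n\subseteq\Af_0\res{\FS_n}$, the key observation is that the generators $C_1\wedge\cdots\wedge C_k\wedge\Id_{n-k}$ of $\kh_n$ can be approximated, and in fact built, from the images of the monomials above. By density of finite-rank operators in $\cK(\hs)$ and continuity of the $\wedge$-product (which follows from $\norm{A_1\wedge\cdots\wedge A_n}\le\norm{A_1}\cdots\norm{A_n}$, proved in \cite{garcia2023symmetric}), it suffices to realize each $(\ketbra{f_1}{g_1}\wedge\cdots\wedge\ketbra{f_k}{g_k})\wedge\Id_{n-k}$ as the restriction of some element of $\Af_0$. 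From the computation in the first half, the monomial $a^*(f_1)\cdots a^*(f_k)a(g_k)\cdots a(g_1)$ restricted to $\FS_n$ equals a scalar multiple of exactly this operator plus possibly lower-rank correction terms supported on the same sector; those corrections are again of the same shape with fewer rank-one factors, so a finite induction on $k$ (subtracting off the already-realized lower-order pieces) produces the desired element. This shows every generator of $\kh_n$ lies in $\Af_0\res{\FS_n}$, and since the latter is a $C^*$-algebra (image of a $*$-homomorphism), $\kh_n\subseteq\Af_0\res{\FS_n}$.

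The remaining point is the identification $\kh_n=\bigoplus_{m=0}^n\cK(\cF_m)\wedge\Id_{n-m}$ asserted in the definition; I would treat this as a lemma to be dispatched first, since it is used on both sides. One direction is clear because each $\cK(\cF_m)$ is generated by antisymmetrized rank-one operators $\ketbra{f_1\wedge\cdots\wedge f_m}{g_1\wedge\cdots\wedge g_m}$, which by \eqref{eq:antisymm operator relation} coincide (up to constants) with $(\ketbra{f_1}{g_1}\wedge\cdots\wedge\ketbra{f_m}{g_m})\Pa{m}$. For the other direction one checks that the span of the stated direct sum is already a $*$-algebra — the product of $C_1\wedge\cdots\wedge C_k\wedge\Id_{n-k}$ and $D_1\wedge\cdots\wedge D_l\wedge\Id_{n-l}$ re-expands, via the antisymmetrization, into terms of the same type with compact entries — and is norm-closed, hence equals the generated $C^*$-algebra.

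The main obstacle I expect is the bookkeeping in the first half: carefully tracking how a product of $k$ creation operators and $k$ annihilation operators, restricted to $\FS_n$, decomposes under \eqref{eq:antisymm operator relation} into a leading term $(\ketbra{f_1}{g_1}\wedge\cdots)\wedge\Id_{n-k}$ plus genuinely lower-rank terms, with all combinatorial signs and $n$-dependent normalization constants accounted for. Once that normal-form computation is in hand, both inclusions and the inductive subtraction argument follow routinely; the functional-analytic inputs (density of finite-rank operators, continuity of $\wedge$, $*$-homomorphism property of restriction) are standard and can be invoked without much further comment. This is also exactly the place where one uses that $\Af_0$ is the norm closure of the linear span in \eqref{eq:pp algebra} rather than something larger, which is what pins $\Af_0\res{\FS_n}$ down to $\kh_n$ and not merely inside $\Lb(\FS_n)$.
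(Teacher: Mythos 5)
Your proposal is correct and follows essentially the same route as the paper: both inclusions are obtained from the linear-span description \eqref{eq:pp algebra}, the exact identity between restricted normal-ordered monomials and operators of the form $\ketbra{f_{1}}{g_{1}}\wedge\cdots\wedge\ketbra{f_{m}}{g_{m}}\wedge\Id_{n-m}$ via \eqref{eq:antisymm operator relation}, density of finite-rank operators in the compacts, and closedness of the image of the restriction $*$-homomorphism. The only cosmetic difference is that the paper verifies the first inclusion on the quadratic generators $a^*(f_i)a(f_j)$ of an orthonormal basis while you work with general monomials and hedge with lower-order correction terms (which in fact do not occur, so your inductive subtraction is unnecessary but harmless).
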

\begin{proof}
	Let $(f_k)_{k \in\IN}$ be an orthonormal basis of $\hs$. Then we know that  $\Af$ is the closure of the linear hull of polynomials in $a(f_i)$ and $a^*(f_i)$ \cite{bratteli1972inductive}. Therefore,
	\[
	\Af_0 =  \overline{ \lin \{ a^*(f_{i_1}) \cdots a^*(f_{i_k}) a(f_{j_1}) \cdots a(f_{j_k}) :  k \in \IN_0,~i_1, \ldots, i_k, j_1, \ldots, j_k \in \IN\} }.
	\]
	Now notice that 
	\[
	 a^*(f_i) a(f_j)|_{\FS_n} = \sum_{k=1}^n \Id \otimes \cdots \otimes \overbrace{\ketbra{f_i}{f_j}}^{k\text{th factor}} \otimes \cdots \otimes \Id,
	\]
	where $\ketbra{f_i}{f_j} \psi := f_i \cs{f_j, \psi}$. This shows $\Af_0|_{\FS_n} \subseteq \kh_n$.
	
	On the other hand, for $m \leq n$, we have 
	\[
	F_1 \wedge \ldots \wedge F_m \wedge  \Id_{n-m} \in \kh_n
	\]
	for all finite rank operators $F_p$ on $\hs$, since each $F_p$ is a linear combination of operators of the form $		\ketbra{f_i}{f_j}$, and
		\[
	\ketbra{f_{i_1}}{f_{j_1}} \wedge \ldots \wedge	\ketbra{f_{i_m}}{f_{j_m}} \wedge \Id_{n-m} = \frac{1}{n!} a^*(f_{i_1}) \cdots a^*(f_{i_m}) a(f_{j_1}) \cdots a(f_{j_m}) |_{\FS_n}.
	\]
	Approximating compact operators by finite-rank operators and taking the closure yields
	\[
	\kh_n \subset \overline{ \Af_0|_{\FS_n} } =  \Af_0|_{\FS_n}.
	\]
	Notice that the last equality follows from the fact that the restriction map
	\[
	\Af_0 \rightarrow \Lb(\H), \quad A \mapsto A|_{\FS_n}
	\]
	is a $*$-homomorphism, so its image is closed \cite[Prop 2.3.1]{BR2}.
\end{proof}
Next, we discuss the relations between different sectors. To this end, let $T_x$, $x\in\IR^d$, denote the translation operators by $x$, i.e., $(T_x f)(y) = f(y+x)$, $f \in L^2(\IR^d)$. Obviously, it holds $T_x f \to 0$, $\abs x \to \infty$, in the weak sense. 
\begin{lem}
	\label{th:r to infinity}
	Let $f_1, \ldots, f_n, g_1, \ldots, g_n \in \hs$. Then we have for any $A \in \Af_0$,
	\begin{align}
	\lim_{\abs x\to\infty} &\cs{a^*(g_1) \cdots a^*(T_x g_n) \Omega, A  a^*(f_1) \cdots a^*(T_x f_n)\Omega } \label{eq:lhs} \\ &=  \cs{a^*(g_1) \cdots a^*(g_{n-1}) \Omega, A  a^*(f_1) \cdots a^*(f_{n-1}) \Omega } \cs{g_n,f_n}. \nonumber
	\end{align}
	In particular, $\nn{A}_{n-1} \leq \nn A_n$ for all $n \in \IN$. 
\end{lem}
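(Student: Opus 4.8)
The plan is to move the two ``distant'' creation operators $a^*(T_xg_n)$ and $a^*(T_xf_n)$ to the outside of the words, so that the $n$-th particle decouples in the limit. Since creation operators anticommute, $a^*(g_1)\cdots a^*(g_{n-1})a^*(T_xg_n)\Omega=(-1)^{n-1}a^*(T_xg_n)\Psi'$ with $\Psi':=a^*(g_1)\cdots a^*(g_{n-1})\Omega\in\FS_{n-1}$, and similarly $a^*(f_1)\cdots a^*(f_{n-1})a^*(T_xf_n)\Omega=(-1)^{n-1}a^*(T_xf_n)\Phi'$ with $\Phi':=a^*(f_1)\cdots a^*(f_{n-1})\Omega$. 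The two signs cancel and, using $a^*(u)^*=a(u)$, the left-hand side of \eqref{eq:lhs} becomes $\cs{\Psi',\,a(T_xg_n)\,A\,a^*(T_xf_n)\,\Phi'}$. Hence it is enough to prove that $a(T_xg_n)\,A\,a^*(T_xf_n)\,\Phi'\to\cs{g_n,f_n}\,A\Phi'$ in norm as $\abs x\to\infty$, since then the expression above converges to $\cs{g_n,f_n}\cs{\Psi',A\Phi'}$, which is the right-hand side of \eqref{eq:lhs}.

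The key intermediate estimate I would establish is
\begin{align*}
	\lim_{\abs x\to\infty}\nn{\,a(T_xu)\,A-A\,a(T_xu)\,}=0\qquad\text{for every }A\in\Af_0,\ u\in\hs.
\end{align*}
Since $A\mapsto a(T_xu)A-Aa(T_xu)$ is linear with operator norm at most $2\nn u$, by the density statement \eqref{eq:pp algebra} it suffices to check this for a monomial $A=a^*(h_1)\cdots a^*(h_k)\,a(h_1')\cdots a(h_k')$ with equally many creators and annihilators. For such $A$ one pushes $a(v)$ to the right through the $a^*(h_i)$ using the CAR \eqref{eq:car} and then past the $a(h_j')$ (picking up two factors $(-1)^k$ that cancel), obtaining the explicit identity
\begin{align*}
	a(v)A-Aa(v)=\sum_{j=1}^{k}(-1)^{j-1}\cs{v,h_j}\;a^*(h_1)\cdots\widehat{a^*(h_j)}\cdots a^*(h_k)\,a(h_1')\cdots a(h_k').
\end{align*}
With $v=T_xu$ every coefficient $\cs{T_xu,h_j}$ tends to $0$ because $T_xu\rightharpoonup 0$ weakly as $\abs x\to\infty$. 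I expect this to be the main obstacle: carrying out the CAR bookkeeping cleanly, and the observation that it is precisely the creator/annihilator balance of the words in $\Af_0$ that makes the commutator asymptotically negligible --- for a general element of $\Af$ it would not be.

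Granting this, I would write $a(T_xg_n)Aa^*(T_xf_n)\Phi'=A\,a(T_xg_n)a^*(T_xf_n)\Phi'+R_x$ with $\nn{R_x}\le\nn{a(T_xg_n)A-Aa(T_xg_n)}\,\nn{f_n}\,\nn{\Phi'}\to 0$, and then use the CAR once more: $a(T_xg_n)a^*(T_xf_n)\Phi'=\cs{g_n,f_n}\Phi'-a^*(T_xf_n)a(T_xg_n)\Phi'$. Expanding $\Phi'=a^*(f_1)\cdots a^*(f_{n-1})\Omega$ gives $a(T_xg_n)\Phi'=\sum_{i=1}^{n-1}(-1)^{i-1}\cs{T_xg_n,f_i}\,a^*(f_1)\cdots\widehat{a^*(f_i)}\cdots a^*(f_{n-1})\Omega\to 0$, so that $\nn{a^*(T_xf_n)a(T_xg_n)\Phi'}\le\nn{f_n}\nn{a(T_xg_n)\Phi'}\to 0$. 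Combining these three limits yields $a(T_xg_n)Aa^*(T_xf_n)\Phi'\to\cs{g_n,f_n}A\Phi'$ in norm, which proves \eqref{eq:lhs}.

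Finally, for the inequality $\nn{A}_{n-1}\le\nn{A}_n$: as the $f_i,g_i$ range over $\hs$, the vectors $\Phi',\Psi'$ run through a dense subset of $\FS_{n-1}$, since simple wedge products span a dense subspace. Fixing such $\Phi',\Psi'$, choosing a unit vector $e\in\hs$, and applying \eqref{eq:lhs} to $A$ with $f_n=g_n=e$, the vectors $\Psi_x:=(-1)^{n-1}a^*(T_xe)\Psi'$ and $\Phi_x:=(-1)^{n-1}a^*(T_xe)\Phi'$ lie in $\FS_n$ with $\nn{\Psi_x}\le\nn{\Psi'}$ and $\nn{\Phi_x}\le\nn{\Phi'}$; hence $\abs{\cs{\Psi_x,A\Phi_x}}\le\nn{A}_n\nn{\Psi'}\nn{\Phi'}$, and letting $\abs x\to\infty$ (with $\cs{e,e}=1$) gives $\abs{\cs{\Psi',A\Phi'}}\le\nn{A}_n\nn{\Psi'}\nn{\Phi'}$. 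Taking the supremum over this dense family yields $\nn{A}_{n-1}=\nn{A|_{\FS_{n-1}}}\le\nn{A}_n$.
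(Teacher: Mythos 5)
Your argument is correct and is essentially the paper's proof in expanded form: both rest on anticommuting the translated operator through $A$ via the CAR, on the fact that monomials in $\Af_0$ carry equally many creators and annihilators so that the only surviving error terms are inner products $\cs{T_xu,h}\to 0$, and on a uniform-boundedness/density argument to pass from monomials to general $A\in\Af_0$; your reorganization (pulling both distant creators to the outside and phrasing the middle step as the norm estimate $\nn{a(T_xu)A-Aa(T_xu)}\to 0$) is just cleaner bookkeeping of the same idea. One small imprecision in the last step: the simple wedge vectors $\Phi'=a^*(f_1)\cdots a^*(f_{n-1})\Omega$ are \emph{not} dense in $\FS_{n-1}$ for $n-1\ge 2$ (only their linear span is), so before ``taking the supremum'' you should extend the bound $\abs{\cs{\Psi',A\Phi'}}\le\nn{A}_n\nn{\Psi'}\nn{\Phi'}$ to finite linear combinations, which follows at once because \eqref{eq:lhs} is sesquilinear in the pair of wedge vectors and $\nn{a^*(T_xe)\Psi'}\le\nn{\Psi'}$ still holds for such combinations.
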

\begin{proof}
	First assume that $A$ is a polynomial in an even number of creation and annihilation operators. Then since $\lim_{\abs x\to\infty} \cs{h,T_x f_n} = 0$ for all $h \in \hs$, we can anticommute the term $a^*(T_x f_n)$ through all  creation operators on the right, through $A$ and the  creation operators on the left, so that the left-hand side of \eqref{eq:lhs} equals 
	\begin{align*}
			\lim_{\abs x\to\infty} \cs{a^*(g_1) \cdots a(T_x f_n)  a^*(T_x g_n) \Omega, A  a^*(f_1) \cdots a^*(f_{n-1}) \Omega}.
	\end{align*}
	This coincides with the expression on the right-hand side. By approximation, the result then extends to all $A \in \Af_0$. Finally, the norm inequality follows from the fact that the vectors $a^*(f_1) \cdots a^*(f_{n}) \Omega$, $n \in \IN$, when $(f_n)_{n \in \IN}$ is an orthogonal basis of $\hs$, in turn form an orthogonal basis of $\FS_n$. 
\end{proof}
\begin{lem}
	\label{th:limit key relation}
		Let $n \in \IN$, $f_1, \ldots, f_n, g_1, \ldots, g_n \in \hs$ and set
	\begin{align*}
		\psi^{(n-1)} &:= a^*(f_1) \cdots a^*(f_{n-1}) \Omega, \qquad \ph^{(n-1)} := a^*(g_1) \cdots a^*(g_{n-1}) \Omega,  \\
		\psi^{(n)}(x) &:= a^*(T_x f_n) \psi^{(n-1)}  , \qquad 	\ph^{(n)}(x) := a^*(T_x g_n) \ph^{(n-1)} .
	\end{align*}
	Then for any $m < n$ and $C^{(m)} \in \cK(\FS_m)$,
	\begin{align*}
		\lim_{\abs x\to\infty} \cs{\ph^{(n)}(x), (C^{(m)} \wedge \Id_{n-m} )\psi^{(n)}(x) } = 
			\frac{n-m}{n} \cs{\ph^{(n-1)}, (C^{(m)} \wedge \Id_{n-m-1} ) \psi^{(n-1)}} \cs{g_n,f_n},
	\end{align*}
	and the limit vanishes if $m=n$. 
\end{lem}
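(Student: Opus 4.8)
The plan is to reduce everything to the case where $C^{(m)}$ is a rank-one operator, say $C^{(m)} = \ketbra{h_1\wedge\cdots\wedge h_m}{k_1\wedge\cdots\wedge k_m}$ on $\FS_m$, and then to expand $C^{(m)}\wedge\Id_{n-m}$ explicitly against the wedge vectors $\psi^{(n)}(x)$ and $\ph^{(n)}(x)$. Since finite-rank operators are dense in $\cK(\FS_m)$ and the vectors $\psi^{(n)}(x),\ph^{(n)}(x)$ have norms bounded uniformly in $x$, once the rank-one case is established it extends to all $C^{(m)}\in\cK(\FS_m)$ by a standard $\eps/3$-argument (the passage to the limit commutes with the uniform approximation).

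For the rank-one case, the key step is to write out the matrix element of $C^{(m)}\wedge\Id_{n-m}$ using the combinatorial formula for the antisymmetric tensor product. Concretely, for wedge vectors one has
\begin{align*}
	\cs{\ph^{(n)}(x), (C^{(m)}\wedge\Id_{n-m})\psi^{(n)}(x)}
	= \binom{n}{m}^{-1}\!\! \sum \cs{\ph^{(m)}_{I}(x), C^{(m)}\psi^{(m)}_{J}(x)}\,\cs{\ph^{(n-m)}_{I^c}(x), \psi^{(n-m)}_{J^c}(x)} ,
\end{align*}
the sum running over subsets $I,J\subseteq\{1,\ldots,n\}$ of size $m$, with appropriate signs, where $\psi^{(m)}_J(x)$ denotes the $m$-fold wedge of the $f$'s (or $T_xf_n$) indexed by $J$. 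Now I would split the sum according to whether the index $n$ (the one carrying the translation $T_x$) lies in $I$, in $J$, in both, or in neither. Because $\cs{h,T_x f_n}\to 0$ and $\cs{h,T_x g_n}\to 0$ weakly for every fixed $h\in\hs$, any term in which $T_x f_n$ is paired against a fixed vector (i.e.\ $n\in J$ but the "matching" slot on the bra side is a fixed $g_i$) tends to $0$; similarly for $T_x g_n$. The only terms that survive are those in which the two translated vectors $T_x f_n$ and $T_x g_n$ are paired with each other, contributing the factor $\cs{g_n,f_n}$, while all remaining pairings are among the fixed vectors. Collecting those surviving terms and re-identifying them as the matrix element $\cs{\ph^{(n-1)},(C^{(m)}\wedge\Id_{n-m-1})\psi^{(n-1)}}$ yields the claimed formula, with the combinatorial prefactor collapsing to $\tfrac{n-m}{n}$: the factor counts, among the $n$ wedge slots, the fraction of configurations in which the translated pair lands in the "$\Id$" block rather than in the "$C^{(m)}$" block. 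When $m=n$ there is no $\Id$ block, so the translated pair is always forced into the $C^{(m)}$ factor, where at least one of $T_xf_n$, $T_xg_n$ must be paired against a fixed vector, and hence the whole expression vanishes in the limit.

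The main obstacle I anticipate is purely bookkeeping: tracking the signs and the subset-sum structure in the expansion of $A_1\wedge\cdots\wedge A_n$ so that the surviving terms genuinely reassemble into $C^{(m)}\wedge\Id_{n-m-1}$ on $\FS_{n-1}$ with the correct normalization $\tfrac{n-m}{n}$ rather than some other rational function of $n,m$. A clean way to avoid drowning in indices is to use \eqref{eq:antisymm operator relation} together with the identity $a^*(T_x f_n)\psi^{(n-1)} = \Pa{n}(T_xf_n\otimes\psi^{(n-1)})\cdot\sqrt{n}$ (up to the normalization in \eqref{eq:creation op}) and compute $\cs{\Pa{n}(T_xg_n\otimes\ph^{(n-1)}),\,(C^{(m)}\otimes\Id_{n-m})\Pa{n}(T_xf_n\otimes\psi^{(n-1)})}$ by moving one $\Pa{n}$ across and expanding it as an average over $S_n$; the weak convergence $T_x f_n\rightharpoonup 0$ then kills every permutation except those fixing the first slot or transposing it consistently on both sides, and the count of the latter gives exactly the $(n-m)/n$ weight. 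I would carry this out rather than the raw subset expansion, as it keeps the signs automatic.
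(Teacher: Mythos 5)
Your proposal is correct and follows essentially the same route as the paper: reduce by density (uniformly in $x$) to simple operators, expand the antisymmetrization of $C^{(m)}\otimes\Id_{n-m}$ against the wedge vectors, use $T_xf_n\rightharpoonup 0$ (together with compactness, so that $C T_x f_n\to 0$ in norm) to kill every term except those where the two translated vectors are paired with each other in an identity slot, and count those slots to get the weight $\tfrac{n-m}{n}=(n-m)\cdot\tfrac{(n-1)!}{n!}$. The paper carries out exactly the permutation-sum computation you sketch in your final paragraph, reducing via \eqref{eq:antisymm operator relation} to unsymmetrized products $C_1\otimes\cdots\otimes C_m$ and summing over $\sigma,\tau\in S_n$ rather than over subsets, but this is only a difference of bookkeeping.
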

\begin{proof}
		Each $C^{(m)}$ can be approximated by linear combinations of operators $C_1 \wedge \ldots \wedge C_m$ where $C_j \in \cK(\hs)$ and such a limit in the above inner product is uniform in $x$. Therefore, it suffices to consider those operators instead of $C^{(m)}$. Moreover, because of \eqref{eq:antisymm operator relation}, it even suffices to consider unsymmetrized operators $C_1 \otimes \ldots \otimes C_m$, $C_j \in \cK(\hs)$, instead of $C^{(m)}$. By the definition of the creation operators \eqref{eq:creation op}, one sees
		\begin{align*}
			a^*(f_1) \cdots a^*(f_n) \Omega = \frac{1}{\sqrt{n!}}\sum_{\sigma\in S_n} (-1)^\sigma f_{\sigma(1)} \otimes \cdots \otimes f_{\sigma(n)}.
		\end{align*}
		 We then compute
		\begin{align}
			&\cs{\ph^{(n)}(x), (C_1 \otimes \ldots \otimes C_m  \otimes \Id_{n-m} )\psi^{(n)}(x) } \nonumber \\ &= \frac{1}{n!} \sum_{\sigma,\tau \in S_n} (-1)^{\sigma\circ\tau} \cs{\gw_{\tau(1)}, C_1 \fw_{\sigma(1)}} \cdots  \cs{\gw_{\tau(m)}, C_m \fw_{\sigma(m)}}  \cs{\gw_{\tau(m+1)}, \fw_{\sigma(m+1)}} \cdots   \cs{\gw_{\tau(n)}, \fw_{\sigma(n)}}, \label{eq:expansion1}
		\end{align}
		where we set 
		\begin{align*}
			\fw_j := \begin{cases}
				f_j &: j < n, \\
				T_x f_n &: j = n,
			\end{cases} \qquad \gw_j := \begin{cases}
				g_j &: j < n, \\
				T_x g_n &: j = n.
			\end{cases} 
		\end{align*}
		Now notice that $C_j T_x \to 0$ strongly , $\abs x\to\infty$, because of compactness and weak convergence of the $T_x$,  and $\lim_{\abs x\to\infty}\cs{\gw_j, \fw_n} = 0$ for any $j < n$ (and vice-versa with $\fw$ and $\gw$ interchanged). This means that only those summands are non-zero where $\sigma^{-1}(n) = \tau^{-1}(n) \in \{ m+1, \ldots, n \}$. Thus, in case of $n=m$, the whole sum vanishes. If $m <n$, we have $n-m$ possibilities for the choice of $\sigma^{-1}(n) = \tau^{-1}(n)$. Without loss of generality assume that $\sigma^{-1}(n) = \tau^{-1}(n) = n$, otherwise change the values of $\sigma(k)$, $\tau(k)$, $k=m+1, \ldots, n$ in the same way, which does not change $(-1)^{\sigma\circ\tau}$.  Then \eqref{eq:expansion1} equals
		\begin{align*}
			&(n-m) \cs{\gw_n,\fw_n} \sum_{\sigma,\tau \in S_{n-1}} (-1)^\sigma(-1)^\tau \cs{\gw_{\tau(1)}, C_1 \fw_{\sigma(1)}} \cdots  \cs{\gw_{\tau(m)}, C_m \fw_{\sigma(m)}}  \\ &\qquad \times \cs{\gw_{\tau(m+1)}, \fw_{\sigma(m+1)}} \cdots   \cs{\gw_{\tau(n-1)}, \fw_{\sigma(n-1)}} \\
			&= (n-m) \cs{g_n, f_n} \frac{(n-1)!}{n!} \cs{ \ph^{(n-1)}, \left( C_1 \otimes \ldots \otimes C_m \otimes \Id_{n-m-1} \right)\psi^{(n-1)} }. \qedhere
		\end{align*}
\end{proof}	
\begin{cor}
	\label{th:n-1 restriction}
	Let $A \in \Af_0$, and write $A|_{\FS_n} = \sum_{m=0}^{n} C^{(m)} \wedge \Id_{n-m}$ with $C^{(m)} \in \cK(\FS_m)$, $m \leq n$. Then we have 
	\begin{align*}
		A|_{\FS_{n-1}} = \sum_{m=0}^{n-1} \frac{n-m}{n} C^{(m)} \wedge \Id_{n-m-1}.
	\end{align*}
\end{cor}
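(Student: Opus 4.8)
The plan is to read off the matrix elements of $A|_{\FS_{n-1}}$ in two ways and match them. Fix $f_1,\dots,f_{n-1},g_1,\dots,g_{n-1}\in\hs$ together with a unit vector $h\in\hs$, and put $f_n:=g_n:=h$, so that $\cs{g_n,f_n}=1$. Let $\psi^{(n-1)},\ph^{(n-1)},\psi^{(n)}(x),\ph^{(n)}(x)$ be the associated vectors from \Cref{th:limit key relation}. First I would apply \Cref{th:r to infinity} to $A$ with these vectors, which gives
\[
\lim_{\abs x\to\infty}\cs{\ph^{(n)}(x),\,A\,\psi^{(n)}(x)}=\cs{\ph^{(n-1)},\,A\,\psi^{(n-1)}}.
\]

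Second, by \Cref{th:restriction is khn} we may write $A|_{\FS_n}=\sum_{m=0}^{n}C^{(m)}\wedge\Id_{n-m}$ with $C^{(m)}\in\cK(\FS_m)$, and expand the same inner product as the finite sum $\sum_{m=0}^{n}\cs{\ph^{(n)}(x),(C^{(m)}\wedge\Id_{n-m})\psi^{(n)}(x)}$. Since the sum is finite the limit $\abs x\to\infty$ passes inside, and \Cref{th:limit key relation} applies termwise: the $m=n$ term tends to $0$, while for $m<n$ the term tends to $\tfrac{n-m}{n}\cs{\ph^{(n-1)},(C^{(m)}\wedge\Id_{n-m-1})\psi^{(n-1)}}$ (using $\cs{g_n,f_n}=1$). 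Comparing with the previous display,
\[
\cs{\ph^{(n-1)},\,A\,\psi^{(n-1)}}=\sum_{m=0}^{n-1}\frac{n-m}{n}\cs{\ph^{(n-1)},(C^{(m)}\wedge\Id_{n-m-1})\psi^{(n-1)}}
\]
for every choice of $f_1,\dots,f_{n-1},g_1,\dots,g_{n-1}$.

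To conclude, I would use that when $(f_k)_{k\in\IN}$ runs over an orthonormal basis of $\hs$ the vectors $a^*(f_1)\cdots a^*(f_{n-1})\Omega$ form, up to normalization, an orthogonal basis of $\FS_{n-1}$ (as already noted in the proof of \Cref{th:r to infinity}), so their linear span is dense; two bounded operators whose matrix elements agree on a total set coincide, which yields $A|_{\FS_{n-1}}=\sum_{m=0}^{n-1}\frac{n-m}{n}C^{(m)}\wedge\Id_{n-m-1}$, and by uniqueness of the direct-sum decomposition $\kh_{n-1}=\bigoplus_m\cK(\FS_m)\wedge\Id_{n-1-m}$ this identifies the coefficients as claimed. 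Everything here is bookkeeping once the two preceding lemmas are in hand; the only points to watch are correctly tracking the combinatorial factor $\tfrac{n-m}{n}$ and the (harmless) interchange of the finite sum with the limit, so there is no genuine obstacle.
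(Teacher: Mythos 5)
Your proof is correct and follows essentially the same route as the paper: equate the limit from \Cref{th:r to infinity} with the termwise limits from \Cref{th:limit key relation} applied to the decomposition $A|_{\FS_n}=\sum_m C^{(m)}\wedge\Id_{n-m}$, then conclude by density of the span of the vectors $\psi^{(n-1)}$, $\ph^{(n-1)}$ in $\FS_{n-1}$. The paper's proof is just a terser version of the same argument; your extra bookkeeping (normalizing $f_n=g_n=h$ and noting the uniqueness of the $\bigoplus_m\cK(\FS_m)\wedge\Id_{n-1-m}$ decomposition) is harmless.
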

\begin{proof}
	Combining \cref{th:limit key relation} with \cref{th:r to infinity} yields
	\begin{align*}
		\cs{\ph^{(n-1)}, A|_{\FS_{n-1}} \psi^{(n-1)}} =   \sum_{m=0}^{n-1} \frac{n-m}{n} 	\cs{\ph^{(n-1)},\left(  C^{(m)} \wedge \Id_{n-m-1} \right) \psi^{(n-1)}}.
	\end{align*}
	This proves the statement, as the linear span of the vectors $\psi^{(n-1)}$ and $\ph^{(n-1)}$ lies dense in $\FS_{n-1}$. 
\end{proof}
This corollary leads to the correct choice of the coherent maps between different particle sectors.
\begin{defn}\thmenum
	\begin{enumerate}[label=(\arabic*)]
		\item 
We define a map 
\begin{align*}
	\kappa_n \: \kh_n \rightarrow \kh_{n-1}, \qquad \kappa_n\left( \sum_{m=0}^{n} C^{(m)}  \wedge \Id_{n-m} \right)  := \sum_{m=0}^{n-1} \frac{n-m}{n} C^{(m)} \wedge \Id_{n-m-1},
\end{align*}
where $C^{(m)} \in \cK(\FS_m)$, $m = 1, \ldots, n$. 
\item 
Then the inverse limit $\kh$ is defined as the space of sequences $(K_n)_{n \in \IN_0}$, such that $K_n \in \kh_n$, $n \in\IN_0$, $\kappa_n(K_n) = K_{n-1}$ for all $n \in \IN$, and 
\[
\nn{K}_\infty := \sup_{n \in \IN_0} \nn{K_n} < \infty.
\]
It is straightforward to see that $\kh$ equipped with the norm $\nn{\cdot}_\infty$ and pointwise addition, multiplication and adjoint operation is a $C^*$-algebra.
\end{enumerate}
\end{defn}
Moreover, recall that $\Afe_0 \subseteq \Lb(\cF)$ was defined as the closure of $\Af_0$ with respect to the $\nn\cdot_n$ seminorms. 
\begin{prop}
The space $\kh$ equipped with the norm $\nn{\cdot}_\infty$, and pointwise addition, multiplication and adjoint operation is a $C^*$-algebra. Moreover, we have a $*$-isomorphism
\begin{align*}
	\Phi \: \Afe_0 \rightarrow \kh, \qquad A \mapsto (A|_{\FS_n})_{n\in\IN_0}.
\end{align*}
\end{prop}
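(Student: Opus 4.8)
I would argue as follows. For the first claim, realise $\kh$ as a closed $*$-subalgebra of the $C^*$-algebra $\cB:=\{(K_n)_{n\in\IN_0}:K_n\in\kh_n,\ \sup_n\nn{K_n}<\infty\}$ equipped with the supremum norm and pointwise operations. The key point is that every $\kappa_n\:\kh_n\to\kh_{n-1}$ is a $*$-homomorphism: by \cref{th:restriction is khn} we have $\kh_n=\Af_0|_{\FS_n}$ and the restriction $A\mapsto A|_{\FS_n}$ is a surjective $*$-homomorphism, while \cref{th:r to infinity} gives $\nn A_{n-1}\le\nn A_n$, so $A|_{\FS_n}=0$ forces $A|_{\FS_{n-1}}=0$, and hence $A|_{\FS_n}\mapsto A|_{\FS_{n-1}}$ is a well-defined map on $\kh_n$; by \cref{th:n-1 restriction} this map is exactly $\kappa_n$, and being obtained from two restriction $*$-homomorphisms (the first one surjective) it is itself a $*$-homomorphism, in particular contractive. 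Therefore the defining relations $\kappa_n(K_n)=K_{n-1}$ are stable under $+,\cdot,{}^*$ and scalar multiplication, and, by contractivity of $\kappa_n$, also under norm limits, so $\kh$ is a sub-$C^*$-algebra of $\cB$ and hence a $C^*$-algebra under $\nn{\cdot}_\infty$.

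For the $*$-isomorphism I would first verify that $\Phi$ is a well-defined $*$-homomorphism, and then exhibit an explicit inverse. Since $\Afe_0$ is the seminorm-closure of $\Af_0$ in $\Lb(\FS)$, any $A\in\Afe_0$ is a seminorm-limit of a sequence $(A^{(j)})$ in $\Af_0$; for $\psi\in\FS_n$ one has $\nn{(A-A^{(j)})\psi}\le\nn{A-A^{(j)}}_n\nn\psi\to0$ with $A^{(j)}\psi\in\FS_n$ and $\FS_n$ closed, so $A$ preserves every sector $\FS_n$, whence $\nn A=\sup_n\nn{A|_{\FS_n}}$. Moreover $A|_{\FS_n}=\lim_j A^{(j)}|_{\FS_n}$ in $\Lb(\FS_n)$, so $A|_{\FS_n}\in\overline{\kh_n}=\kh_n$; applying the continuous map $\kappa_n$ together with \cref{th:n-1 restriction} gives $\kappa_n(A|_{\FS_n})=\lim_j A^{(j)}|_{\FS_{n-1}}=A|_{\FS_{n-1}}$. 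Hence $\Phi(A)=(A|_{\FS_n})_{n}\in\kh$ with $\nn{\Phi(A)}_\infty=\nn A$, so $\Phi$ is isometric and in particular injective; and since each element of $\Afe_0$ preserves all sectors, $(AB)|_{\FS_n}=(A|_{\FS_n})(B|_{\FS_n})$ and $(A^*)|_{\FS_n}=(A|_{\FS_n})^*$, so $\Phi$ is a $*$-homomorphism.

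It remains to show $\Phi$ is onto. Given $K=(K_n)\in\kh$, put $\Psi(K):=\bigoplus_{n}K_n$, a bounded operator on $\FS$ of norm $\nn K_\infty<\infty$ that preserves every sector. To see $\Psi(K)\in\Afe_0$: for each $N$, \cref{th:restriction is khn} provides $A_N\in\Af_0$ with $A_N|_{\FS_N}=K_N$, and then, by downward induction using the coherence relations $\kappa_n(K_n)=K_{n-1}$ and \cref{th:n-1 restriction}, $A_N|_{\FS_n}=K_n$ for \emph{all} $n\le N$; hence $\nn{\Psi(K)-A_N}_n=0$ for $n\le N$, so $A_N\to\Psi(K)$ in the seminorm topology and $\Psi(K)$ lies in the seminorm-closure $\Afe_0$ of $\Af_0$. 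By construction $\Phi(\Psi(K))=(\Psi(K)|_{\FS_n})_n=K$, so $\Phi$ is surjective. Being a bijective $*$-homomorphism, $\Phi$ is a $*$-isomorphism, with inverse $\Psi$.

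The heavy lifting has already been done in \cref{th:restriction is khn,th:r to infinity,th:n-1 restriction}, so I do not anticipate a genuine obstacle; the one point that deserves care is that $\Af_0$ is dense in $\Afe_0$ only for the seminorm topology and not in norm, which is why the $*$-homomorphism property of $\Phi$ must be read off directly from sector preservation of the (bounded) operators in $\Afe_0$ rather than extended from $\Af_0$ by norm-continuity, and why the boundedness requirement $\nn K_\infty<\infty$ built into the definition of $\kh$ is precisely what makes $\Psi(K)$ an element of $\Lb(\FS)$ in the surjectivity step.
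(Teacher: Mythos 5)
Your proposal is correct and follows essentially the same route as the paper: it rests on the same three preceding results (the identification $\Af_0|_{\FS_n}=\kh_n$, the norm monotonicity $\nn{A}_{n-1}\le\nn{A}_n$, and the restriction formula identifying $\kappa_n(A|_{\FS_n})=A|_{\FS_{n-1}}$), establishes that each $\kappa_n$ is a contractive $*$-homomorphism, and proves surjectivity by the same coherence-plus-seminorm-limit construction. You are merely somewhat more explicit than the paper about the well-definedness of $\kappa_n$ and about reading off the $*$-homomorphism property of $\Phi$ from sector preservation rather than from norm-density of $\Af_0$, which is a welcome clarification but not a different argument.
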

\begin{proof}
	In the following we use \cref{th:restriction is khn}, i.e., that for each $K_n \in \kh_n$ we can find an $A \in \Af_0$ such that $A|_{\FS_n} = K_n$.
	It is then straightforward to verify that $\kappa_n$ is in fact a $*$-homomorphism, since for $A_1,A_2 \in \Af_0$,
	\begin{align*}
		\kappa_n(A_1|_{\FS_n} A_2|_{\FS_n}) &= A_1 A_2|_{\FS_{n-1}} = \kappa_n(A_1|_{\FS_{n}}) \kappa_n(A_2|_{\FS_{n}}), \\
			\kappa_n(A_1|_{\FS_n} )^* &= (A_1|_{\FS_{n-1}})^* = A_1^*|_{\FS_{n-1}} = \kappa_n(A_1^*|_{\FS_n}).
	\end{align*}
	Furthermore, \cref{th:r to infinity} yields that $\kappa_n$ is continuous with $\nn{ \kappa_n(K_n) } \leq \nn{K_n}$ for $K_n \in \kh_n$. Thus, it is easy to conclude that $\kh$ is a $C^*$-algebra. 
	
 The extension to $\Afe_0$ is well-defined: if $A_m \to A \in \Lb(\FS)$ in the $\nn \cdot_n$ seminorms, then $\sup_{n \in \IN_0} \nn{A|_{\FS_n}} \leq \nn A$ and the coherence condition is satisfied because of the continuity of the $\kappa_n$. 
 	Clearly, $\Phi$ is an injective  $*$-homomorphism. It remains to show the surjectivity. Let $K \in \kh$. Then,  by \cref{th:restriction is khn} and the coherence condition, we find for each $n \in \IN_0$ an $A_n \in \Af_0$ such that $A_n|_{\FS_m} = K_m$ for all $m \leq n$. This means $A_n|_{\FS_m} \to K_m$, $n \to \infty$ and therefore, $A_n$ converges in the seminorms topology to an operator $A \in \Afe_0 \subseteq \Lb(\FS)$ given by $(A\psi)_n := K_n \psi_n$, which satisfies $\Phi(A) = K$. 
\end{proof}
Therefore, from now on we will identify $\Afe_0$ and $\kh$ without further emphasizing it. 
\begin{ex}
	\label{ex:counter}
	We have $\Af_0 \subsetneq \Afe_0$. Let  $(f_k)_{k\in\IN}$  be an orthonormal basis of $\hs$ and let $A_n \in \kh_n$ be given by 
	\begin{align*}
		A_n = \sum_{k=1}^n (-1)^k n(f_1) \cdots n(f_k) |_{\FS_n}.
	\end{align*}
	Then we see that $f_{i_1} \wedge \ldots \wedge f_{i_n}$, $i_1, \ldots, i_n \in \IN$, are eigenvectors of $A_n$ with eigenvalue $0$ or $-1$. Therefore, $\nn{A_n} \leq 1$ for all $n \in \IN$. Since also the coherence relation $\kappa_n(A_n) = A_{n-1}$ is satisfied by construction, we get that $A = (A_n)_{n \in \IN_0} \in \Afe_0$. But $A \not\in \Af_0$: assume that there is $B \in \Af_0$, being a polynomial in $\Id$ and the creation and annihilation operators with $f_j$'s as arguments, such that $\nn{A-B} < \epsilon$ for some small $\epsilon > 0$. We can write without loss of generality
	\[
	B = b \Id + B', 
	\]
	where $b \in \IC$ and $B'$ is a polynomial in creation and annihilation operators in the $f_j$ such that all creation operators are shifted to the right. Let $k \in \IN$ be an even number such that all the indices $j$ of the $f_j$ in the creation operators in $B'$ appear in $1, \ldots, k$. Then $A( f_1 \wedge \ldots \wedge f_k) = 0$, $B' (f_1 \wedge \ldots \wedge f_k )= 0$, and hence, 
	\[
	\nn{(A-B) (f_1 \wedge \ldots \wedge f_k) } = \abs b \nn{f_1 \wedge \ldots \wedge f_k}.
	\]
	 On the other hand, 
	 \[
	 \nn{(A-B) (f_1 \wedge \ldots \wedge f_k \wedge f_{k+1}) } = \abs {1+b} \nn{f_1 \wedge \ldots \wedge f_k\wedge f_{k+1}}.
	 \]
	 In sum, we have $\abs b < \epsilon$ and $\abs{1+b} < \epsilon$, which is a contradiction. 
\end{ex}

\section{Invariance under the dynamics for particle number preserving observables}
\label{sec:invariance}
In this part, we study the dynamics induced by the Hamiltonian \eqref{eq:Hamiltonian} on the  algebra $\Afe_0$. In particular, we prove the particle number preserving version of \cref{th:main} involving $\Afe_0$. 
For the proof we can directly use some results of \cite{bh1} for the general unsymmetrized case and adapt the results from the symmetric to the antisymmetric setting. To this end, let $\FS_n^\otimes := \hs^{\otimes n}$ denote the unsymmetrized tensor product and $\khs_n$ the corresponding unsymmetrized version of $\kh_n$, which is defined as follows. For any set $I \subseteq \{1,\ldots n\}$ we consider a subalgebra of $\Lb(\FS_n^\otimes)$ given by
\begin{align*}
	\kh(I) := \cK_1 \otimes \ldots \otimes \cK_n, \quad \cK_j = \begin{cases}
		\cK(\hs) &: j \in I, \\
		\IC \cdot \Id &: j \not\in I,
	\end{cases}
\end{align*}
where the tensor product denotes the tensor product of $C^*$-algebras. Then $\khs_n$ is defined as
\begin{align*}
	\khs_n := \bigoplus_{I \subseteq \{1,\ldots,n\}} \kh(I).
\end{align*}%
We see that $\kh_n$ is the antisymmetrized version of $\khs_n$, i.e.,
\begin{align}
	\label{eq:khs kh relation}
	\kh_n = \Pa{n} \khs_n  |_{\FS_n},
\end{align}
and, with a slight abuse of notation, we will identify $\kh_n$ from now on with $\Pa{n} \khs_n \Pa{n}$, which is a subalgebra of $\khs_n$.

Furthermore, notice that the components $H_n$ of the Hamiltonian \eqref{eq:Hamiltonian n} can be  defined on $\FSo{n}$  as well as self-adjoint operators. Moreover, we  write
\begin{align*}
	\Vb := \bigoplus_{n=0}^\infty V_n, \qquad V_n = \ssum{i,j=1, \\ i \not= j}^n V_{ij}
\end{align*}
and we will use this notation both on the antisymmetric and non-symmetrized Fock space sectors.  Let $H_n^0$ denote the free Hamiltonian without the two-body interaction, i.e.,  \eqref{eq:Hamiltonian n} with $V = 0$. Let $\An \in \Lb(\FSo{n})$ or $\An \in \Lb(\FS_n)$.  We set $\alpha^0_t(\An) :=  e^{\i t H_n^0} \An e^{-\i t H_n^0}$  and $\gamma_{t} :=\alpha_{t} \circ \alpha^0_{-t}$. The latter can be written as a Dyson series,
\begin{align}
	\label{eq:dyson}
\gamma_{t}(\An) &= \sum_{l=0}^\infty D_{n,l}(t)(\An),
\end{align}
where $D_{n,0}(t)(\An) := \An$ and, for $l \geq 1$,
\begin{align}
	\label{eq:Dl defn}
	  D_{n,l}(t)(\An) &:=  (-\i)^l \int_0^t \int_0^{s_l} \cdots \int_0^{s_2} [\ldots[\An,V_n(s_1)] \ldots, V_n(s_l) ] \d s_1 \ldots \d s_l,
\end{align}
with $V_n(s) := \alpha^0_{s}(V_n)$, $s \in \IR$. The integrals are to be understood in terms of the strong operator topology (as are all operator-valued integrals in the following). Introducing
\begin{align}
	\nonumber
\delta_n(t)(\An) &:= - \i [\An, V_n(t) ], \\
\label{eq:Deltan defn}
\Delta_n(t)(\An) &:= \int_0^t \delta_n(s)(\An) \d s \in \Lb(\FSo{n}),
\end{align}
we can also write \eqref{eq:Dl defn} recursively in the form
\begin{align*}
	D_{n,1}(t)(\An) = \Delta_n(t)(\An), \qquad D_{n,l}(t)(\An) =  \int_0^t \delta_n(s)( D_{n,l-1}(s)(\An) ) \d s. 
\end{align*}

In \cite{bh1}, the following three lemmas were proven, cf. Lemma 4.1, 4.2 and 4.3 therein. 
\begin{lem}
 For all $n \in \IN_0$ and $t \in \IR$, $\Delta_n(t)$ maps $\khs_n$ into itself. It is pointwise norm continuous in $t$ and bounded by $\nn{\Delta_n(t)(\An)}_n \leq 2 \abs t \nn{V_n} \nn{\An}$, $\An \in \khs_n$. 
\end{lem}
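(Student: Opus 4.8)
The plan is to verify each of the three assertions—that $\Delta_n(t)$ preserves $\khs_n$, that it is norm continuous in $t$, and that it obeys the stated bound—directly from the definitions \eqref{eq:Deltan defn} and the structure of $\khs_n$ as a direct sum of the algebras $\kh(I)$. First I would recall that $V_n = \sum_{i\neq j} V_{ij}$ and that each $V_{ij}$ is a bounded multiplication operator, and that $V_n(s) = \alpha^0_s(V_n) = e^{\i s H_n^0} V_n e^{-\i s H_n^0}$, which is again a bounded operator of norm $\nn{V_n(s)} = \nn{V_n}$ since $\alpha^0_s$ is a $*$-automorphism of $\Lb(\FSo{n})$. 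Hence for $\An \in \Lb(\FSo{n})$ the integrand $s \mapsto \delta_n(s)(\An) = \i[\An, V_n(s)]$ is strongly continuous (since $s\mapsto V_n(s)$ is strongly continuous and $\An$ is bounded) and bounded in norm by $2\nn{V_n}\nn{\An}$, so the Bochner-type integral $\Delta_n(t)(\An) = \int_0^t \delta_n(s)(\An)\,\d s$ is a well-defined element of $\Lb(\FSo{n})$ with $\nn{\Delta_n(t)(\An)}_n \le 2|t|\,\nn{V_n}\,\nn{\An}$; this gives the bound immediately.

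For norm continuity in $t$, I would write $\Delta_n(t)(\An) - \Delta_n(t')(\An) = \int_{t'}^t \delta_n(s)(\An)\,\d s$ and estimate its norm by $2|t-t'|\,\nn{V_n}\,\nn{\An}$, so in fact $t \mapsto \Delta_n(t)(\An)$ is Lipschitz in norm; continuity of the map $\Delta_n(t)\colon\khs_n\to\khs_n$ as a linear map is then clear from the same bound.

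The one genuine point is the invariance $\Delta_n(t)(\khs_n) \subseteq \khs_n$. Since $\khs_n = \bigoplus_I \kh(I)$ and $\Delta_n(t)$ is linear, it suffices to check that $[\An, V_{ij}(s)]$ lands back in $\khs_n$ for $\An \in \kh(I)$ and each pair $i\neq j$, and then integrate. Here $V_{ij}(s) = \alpha^0_s(V_{ij})$; since $H_n^0 = \sum_k (-\Delta_k)$ acts as $e^{\i s(-\Delta_i)} \otimes e^{\i s(-\Delta_j)}$ on the $i,j$ tensor factors, $V_{ij}(s)$ is a bounded operator acting only on factors $i$ and $j$ (it is the conjugation of the pair-multiplication operator by the free two-particle evolution). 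If both $i,j \notin I$, then $\An$ acts as a multiple of the identity on factors $i,j$; but $V_{ij}(s)$ need not be compact there, so one cannot simply appeal to compactness—however, the key structural fact (which is exactly the content of the analogous bosonic lemma in \cite{bh1,bh2}) is that commuting $\kh(I)$ with an operator supported on two factors either leaves the "compact" index set unchanged or enlarges it by at most those two indices, i.e. the commutator lies in $\kh(I) + \kh(I\cup\{i\}) + \kh(I\cup\{j\}) + \kh(I\cup\{i,j\})$, and crucially the genuinely non-compact contributions cancel in the commutator. Concretely: when $i,j\in I$, both $\An$ and $V_{ij}(s)$ restricted to those factors are bounded and the commutator stays in $\kh(I)$ after noting $\cK(\hs)\otimes\cK(\hs)$ absorbs the (bounded, not necessarily compact) $V_{ij}(s)$ on the left and right because $C V_{ij}(s) C'$ is compact when $C,C'$ are; when exactly one of $i,j$, say $i$, lies in $I$, one gets a term in $\kh(I\cup\{j\})$ since $C_i V_{ij}(s)$ and $V_{ij}(s) C_i$ are compact on the $(i,j)$ block; and when neither lies in $I$, the two terms $\An V_{ij}(s)$ and $V_{ij}(s)\An$ differ only by the action on factors $i,j$, where $\An$ is scalar, so they are equal and the commutator vanishes. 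I expect this case analysis—carefully tracking which tensor factors acquire compact operators and checking that the dangerous non-compact terms cancel—to be the main obstacle, though it is essentially a transcription of the bosonic argument of \cite{bh2} to the present setting and the antisymmetrization plays no role at the level of $\khs_n$. Assembling the three pieces completes the proof.
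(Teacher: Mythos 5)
Your norm bound and the Lipschitz continuity in $t$ are correct and routine. The gap is in the invariance statement, precisely where you locate ``the one genuine point'': the fixed-$s$ case analysis of $[\An,V_{ij}(s)]$ fails in the mixed case, and more fundamentally a pointwise-in-$s$ strategy cannot work. If exactly one of $i,j$ lies in $I$, say $i$, then on the $(i,j)$-block the relevant products have the form $(C\otimes\Id)\,V_{ij}(s)$ with $C\in\cK(\hs)$. This is \emph{not} compact on $\hs\otimes\hs$: take $C=\ketbra{f}{g}$ and test against $g\otimes u_k$ with $(u_k)_k$ orthonormal; up to normalization the image is $f\otimes(c\,u_k)$ with $c=|g|^2\ast V\not\equiv 0$, which does not tend to $0$ in norm. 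The same test applied to $[C\otimes\Id,V_{ij}(s)]$ shows the commutator is not compact either, so the cancellation of ``dangerous non-compact terms'' you invoke does not occur. Consequently $\delta_n(s)(\An)\notin\khs_n$ for fixed $s$ in general --- consistent with the fact that \cref{th:approximate Dl integral} only asserts membership of the \emph{integral} in $\khs_n$ while the integrand is merely strongly continuous.

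The missing idea is to pull the integral inside the commutator before doing any algebra: since $\An$ does not depend on $s$, one has $\Delta_n(t)(\An)=\i\,\bigl[\An,\sum_{i\neq j}\int_0^t V_{ij}(s)\,\d s\bigr]$, and the essential analytic input --- \cref{th:int V compact}, quoted from the appendix of \cite{bh2} --- is that the time-averaged pair potential $\int_0^t V_{ij}(s)\,\d s$ \emph{is} compact on the $(i,j)$ tensor pair, hence lies in $\kh(\{i,j\})\subseteq\khs_n$. Since $\khs_n$ is an algebra ($\kh(I)\kh(J)\subseteq\kh(I\cup J)$ because compact times bounded is compact), the commutator of $\An\in\khs_n$ with this element stays in $\khs_n$, and invariance follows with no case analysis at all. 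Note that the paper itself does not prove this lemma but cites \cite{bh2}; the compactness of the time average is the one genuinely nontrivial ingredient, and your proposal never invokes it --- indeed you explicitly set compactness aside in the very case where it is needed.
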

\begin{lem}
	\label{th:approximate Dl integral}
Let $D \: \IR \rightarrow \khs_n$ be norm-continuous. Then, for all $t \in \IR$, 
\begin{enumerate}[label=(\alph*)]
	\item 
 $\IR \rightarrow \Lb(\FSo{n})$, $s \mapsto \delta_n(s)(D(s))$ is continuous in the strong operator topology, 
 \item $\int_0^t \delta_n(s)(D(s)) \d s \in \khs_n$, 
 \item the approximation 
\begin{align*}
	\int_0^t \delta_n(s)(D(s)) \d s = \lim_{k\to\infty} \sum_{j=1}^k (\Delta_n(jt/k) - \Delta_n((j-1)t/k))(D(jt/k))
\end{align*}
holds in the sense of norm convergence,
\item  $\IR \rightarrow \Lb(\FSo{n})$, $t \mapsto 	\int_0^t \delta_n(s)(D(s)) \d s$ is norm-continuous, and
\item $\nn{ \int_0^t \delta_n(s)(D(s)) \d s} \leq  2 \nn{V_n} \int_0^{\abs t} \nn{D(s)} \d s$.
\end{enumerate}
\end{lem}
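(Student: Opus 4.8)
The plan is to deduce all five parts from the previous lemma together with the additivity identity $(\Delta_n(b) - \Delta_n(a))(A) = \int_a^b \delta_n(s)(A)\,\d s$, valid for $A \in \khs_n$ with right-hand side again in $\khs_n$ (it is a difference of two elements of $\khs_n$), and from the elementary pointwise bound $\nn{\delta_n(s)(A)} \le 2\nn{V_n}\nn{A}$, which holds because $V_n(s) = \e^{\i s H_n^0} V_n \e^{-\i s H_n^0}$ is a unitary conjugate of $V_n$.

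I would start with (a). Writing $\delta_n(s)(D(s)) = \i\bigl(D(s)V_n(s) - V_n(s)D(s)\bigr)$, observe that $s \mapsto \e^{\pm\i s H_n^0}$ is strongly continuous and $s \mapsto V_n(s)$ is uniformly bounded by $\nn{V_n}$, so $s \mapsto V_n(s)$ is strongly continuous as a product of uniformly bounded strongly continuous operator-valued maps. Since $D$ is norm-continuous, hence strongly continuous, and a product of two uniformly bounded strongly continuous families is again strongly continuous, $s \mapsto \delta_n(s)(D(s))$ is strongly continuous. It is also locally norm-bounded, by $\nn{\delta_n(s)(D(s))} \le 2\nn{V_n}\nn{D(s)}$ and continuity of $s\mapsto\nn{D(s)}$, so the strong operator integral in (b) is well defined. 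Then (e) is just the triangle inequality for that integral, $\nn{\int_0^t g(s)\,\d s} \le \bigl|\int_0^t \nn{g(s)}\,\d s\bigr|$ with $g(s) = \delta_n(s)(D(s))$, combined with the pointwise bound; and (d) follows from (e) applied on $[t,t']$, since $\nn{\int_t^{t'}\delta_n(s)(D(s))\,\d s} \le 2\nn{V_n}\,\bigl|\int_t^{t'}\nn{D(s)}\,\d s\bigr| \to 0$ as $t'\to t$.

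The crux is (c), and (b) drops out of it. Set $t_j := jt/k$ and $I_j := [t_{j-1},t_j]$. By the additivity identity the $j$-th summand equals $\int_{I_j}\delta_n(s)(D(t_j))\,\d s$, so the $k$-th partial sum is exactly $\int_0^t \delta_n(s)(D^{(k)}(s))\,\d s$, where $D^{(k)}$ is the step function equal to $D(t_j)$ on $I_j$. Subtracting $\int_0^t\delta_n(s)(D(s))\,\d s$ and using the pointwise bound yields a norm error $\le 2\nn{V_n}\sum_{j}\int_{I_j}\nn{D(t_j) - D(s)}\,\d s \le 2\nn{V_n}\,\abs t\,\omega_D(\abs t/k)$, where $\omega_D$ is the modulus of continuity of $D$ on the compact interval between $0$ and $t$; uniform continuity there forces $\omega_D(\abs t/k)\to 0$, which is (c). Finally, each partial sum lies in $\khs_n$ by the previous lemma, and $\khs_n$ is norm-closed, so the norm limit $\int_0^t\delta_n(s)(D(s))\,\d s$ lies in $\khs_n$, proving (b).

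The only part requiring genuine care is the topology bookkeeping in (a): one must consistently distinguish the strong-operator integrals (those defining $\Delta_n$, and the one in (b)) from the norm statements in (c)--(e), and justify that $s\mapsto V_n(s)$ and the relevant operator products are strongly — not merely weakly — continuous, together with the elementary but essential fact that products of uniformly bounded strongly continuous operator families are strongly continuous. Once that and the previous lemma are in hand, everything else is bookkeeping.
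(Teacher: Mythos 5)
Your proof is correct, and it follows the route the paper intends: the paper itself gives no proof here but imports the lemma from \cite{bh2} (Lemmas 4.2--4.3 there), whose argument is exactly this one --- rewrite each summand as $\int_{t_{j-1}}^{t_j}\delta_n(s)(D(t_j))\,\d s$ via the increment identity for $\Delta_n$, control the error by the uniform continuity of $D$ and the bound $\nn{\delta_n(s)(A)}\le 2\nn{V_n}\nn{A}$, and then deduce (b) from (c) by norm-closedness of $\khs_n$. The only cosmetic point is that your bound $\bigl|\int_0^t\nn{D(s)}\,\d s\bigr|$ in (e) is the correct form for $t<0$, where the lemma's literal $\int_0^{\abs t}\nn{D(s)}\,\d s$ is a slight notational slip.
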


\begin{lem}
	\label{th:unsymmetrized main result}
	For each $n \in \IN_0$ and $t \in \IR$, the Dyson maps $\gamma_t$ map $\khs_n$ onto itself, i.e., they are automorphisms of $\khs_n$, and the functions
	\[
	\IR \rightarrow \Lb(\FSo{n}), \quad t \mapsto \gamma_t(\An)
	\]
	are norm-continuous for all $\An \in \Lb(\FSo{n})$.
\end{lem}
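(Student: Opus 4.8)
The plan is to read off both claims from the Dyson expansion \eqref{eq:dyson}. Since $\gamma_t=\alpha_t\circ\alpha^0_{-t}$ is already a $*$-automorphism of $\Lb(\FSo{n})$ (a composition of conjugations by unitaries), the two substantive points are that $\gamma_t$ restricts to a \emph{surjection} of $\khs_n$ onto itself, and that $t\mapsto\gamma_t(\An)$ is norm-continuous for every bounded $\An$; both follow from a term-by-term analysis of \eqref{eq:dyson}. Starting from $\nn{\Delta_n(t)(\An)}\le 2\abs t\,\nn{V_n}\nn{\An}$ and iterating \cref{th:approximate Dl integral}(e) along the recursion $D_{n,l}(t)=\int_0^t\delta_n(s)(D_{n,l-1}(s))\,\d s$ — equivalently, estimating the iterated commutator integral \eqref{eq:Dl defn} directly over the simplex $\{0\le s_1\le\dots\le s_l\le t\}$, of volume $\abs t^{\,l}/l!$ — one obtains
\begin{align*}
	\nn{D_{n,l}(t)(\An)}\le\frac{\bigl(2\abs t\,\nn{V_n}\bigr)^{l}}{l!}\,\nn{\An}\qquad\text{for all }\An\in\Lb(\FSo{n}),\ l\in\IN_0 .
\end{align*}
Hence \eqref{eq:dyson} converges absolutely in operator norm, uniformly for $t$ in bounded intervals, so $\gamma_t(\An)$ is well defined by the series for every bounded $\An$. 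If $\An\in\khs_n$, then each term $D_{n,l}(t)(\An)$ again lies in $\khs_n$ — by the first of the two lemmas quoted above together with \cref{th:approximate Dl integral}(b), applied inductively — and since $\khs_n$ is norm-closed we conclude $\gamma_t(\khs_n)\subseteq\khs_n$.

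For the continuity statement I would observe that each $t\mapsto D_{n,l}(t)(\An)$ is norm-continuous: for $\An\in\khs_n$ this is \cref{th:approximate Dl integral}(d) used inductively, and for arbitrary bounded $\An$ it follows from the elementary bound
\begin{align*}
	\nn{D_{n,l}(t)(\An)-D_{n,l}(t')(\An)}\le\frac{(2\nn{V_n})^{l}}{(l-1)!}\,\max(\abs t,\abs{t'})^{l-1}\,\abs{t-t'}\,\nn{\An}
\end{align*}
for the difference of the two simplex integrals. Since \eqref{eq:dyson} converges locally uniformly, $t\mapsto\gamma_t(\An)$ is then norm-continuous on all of $\Lb(\FSo{n})$. (Alternatively, one may write $\gamma_t(\An)=W(t)\An W(t)^{*}$ with the unitary cocycle $W(t)=e^{\i tH_n}e^{-\i tH_n^0}$ and deduce even local Lipschitz continuity from $\nn{W(t)-W(t')}\le\nn{V_n}\,\abs{t-t'}$.)

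It remains to upgrade $\gamma_t(\khs_n)\subseteq\khs_n$ to surjectivity. Here I would use the identity $\gamma_t^{-1}=\alpha^0_t\circ\gamma_{-t}\circ\alpha^0_{-t}$, which holds because $\alpha_{-t}\circ\alpha^0_{t}=\gamma_{-t}$, together with the fact that the free dynamics preserves $\khs_n$: since $H_n^0=\sum_{i=1}^n(-\Delta_i)$, the propagator factorizes as $e^{\i sH_n^0}=u_s^{\otimes n}$ with a one-particle unitary $u_s$, and conjugation by $u_s^{\otimes n}$ maps every tensor factor $\cK(\hs)$ onto itself and fixes $\IC\cdot\Id$, hence carries each $\kh(I)$, and therefore $\khs_n$, onto itself. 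Applying the inclusion just proved at time $-t$ then yields $\gamma_t^{-1}(\khs_n)=\alpha^0_t\bigl(\gamma_{-t}(\alpha^0_{-t}(\khs_n))\bigr)\subseteq\khs_n$, so $\khs_n=\gamma_t\bigl(\gamma_t^{-1}(\khs_n)\bigr)\subseteq\gamma_t(\khs_n)\subseteq\khs_n$. Thus $\gamma_t$ is a bijection of the $C^*$-algebra $\khs_n$, and a bijective $*$-homomorphism is automatically a $*$-automorphism.

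The main difficulty is not conceptual but one of bookkeeping: the two lemmas imported from \cite{bh2} are formulated for $\khs_n$, whereas the norm-continuity assertion concerns all of $\Lb(\FSo{n})$, so that part has to be obtained either from the direct (routine but slightly tedious) simplex estimates or from the cocycle $W(t)$. The surjectivity step is short once one notices the tensor factorization $e^{\i sH_n^0}=u_s^{\otimes n}$, and this is the only place where the explicit form of the free Hamiltonian is used.
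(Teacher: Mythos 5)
Your argument is correct and follows the intended route: the paper does not actually prove this lemma but imports it from \cite{bh2} (Lemma 4.3 there), whose proof is exactly the term-by-term Dyson-series analysis you give, applying the two preceding lemmas inductively to get $D_{n,l}(t)(\An)\in\khs_n$ and norm-continuity, with the $1/l!$ bound for locally uniform absolute convergence. Your supplementary points — deducing continuity on all of $\Lb(\FSo{n})$ from the unitary cocycle $W(t)=e^{\i tH_n}e^{-\i tH_n^0}$, and upgrading $\gamma_t(\khs_n)\subseteq\khs_n$ to surjectivity via $\gamma_t^{-1}=\alpha^0_t\circ\gamma_{-t}\circ\alpha^0_{-t}$ together with the factorization $e^{\i sH_n^0}=u_s^{\otimes n}$ — are both sound and fill in details the paper leaves to the citation.
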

Similarly as for the bosonic case \cite[Prop. 4.4]{bh1}, we can restrict the statement of \cref{th:unsymmetrized main result} to the  antisymmetric setting. Remember that we identify $\kh_n$ with $\Pa{n} \khs_n \Pa{n}$.
\begin{cor}
	\label{th:khn invariance}
	We have $\Delta_n(t)(\kh_n) = \kh_n$, $\alpha_t(\kh_n) = \kh_n$, and the function $t \mapsto \alpha_t(\An)$ is norm continuous for all $\An \in \kh_n$. 
\end{cor}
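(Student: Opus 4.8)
The plan is to reduce \cref{th:khn invariance} to its already-established unsymmetrized counterparts (\cref{th:unsymmetrized main result} together with the two lemmas quoted from \cite{bh2}) via the single observation that the antisymmetrizing projection $\Pa{n}$ commutes with everything in sight. Introduce the compression $\mathcal{E}_n \colon \Lb(\FSo{n}) \to \Lb(\FSo{n})$, $\mathcal{E}_n(B) := \Pa{n} B \Pa{n}$. By \eqref{eq:khs kh relation} its restriction to $\khs_n$ surjects onto $\kh_n$ (with $\kh_n$ regarded, as in the excerpt, as the subalgebra of $\khs_n$ of operators supported on $\FS_n$), and $\mathcal{E}_n$ is the identity on $\kh_n$, since $A = \Pa{n} A \Pa{n}$ for any $A$ supported on $\FS_n$; so $\mathcal{E}_n$ is a norm-one projection of $\khs_n$ onto $\kh_n$.

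The key point is that $H_n^0 = \sum_{i=1}^n(-\Delta_i)$ and $V_n = \sum_{i\neq j} V_{ij}$ are both invariant under permutations of the tensor factors, hence commute with $\Pa{n}$; consequently $\Pa{n}$ commutes with the unitaries $e^{\i t H_n^0}$ and $e^{\i t H_n}$, and with $V_n(s) = \alpha^0_s(V_n)$ for all $s \in \IR$. This gives at once $\mathcal{E}_n \circ \alpha^0_t = \alpha^0_t \circ \mathcal{E}_n$ and $\mathcal{E}_n \circ \alpha_t = \alpha_t \circ \mathcal{E}_n$, whence $\mathcal{E}_n \circ \gamma_t = \gamma_t \circ \mathcal{E}_n$ because $\gamma_t = \alpha_t \circ \alpha^0_{-t}$; and from the identity $\Pa{n}[B,V_n(s)]\Pa{n} = [\Pa{n} B \Pa{n}, V_n(s)]$ (valid since $\Pa{n}V_n(s)=V_n(s)\Pa{n}$) one gets $\mathcal{E}_n \circ \delta_n(s) = \delta_n(s) \circ \mathcal{E}_n$ and, since $\mathcal{E}_n$ is bounded and commutes with the strong-operator integrals in \eqref{eq:Deltan defn} and \eqref{eq:Dl defn}, also $\mathcal{E}_n \circ \Delta_n(t) = \Delta_n(t) \circ \mathcal{E}_n$ and $\mathcal{E}_n \circ D_{n,l}(t) = D_{n,l}(t) \circ \mathcal{E}_n$. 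Now I would combine these with the unsymmetrized facts $\gamma_t(\khs_n) = \khs_n$ (\cref{th:unsymmetrized main result}), $\Delta_n(t)(\khs_n) \subseteq \khs_n$, and $\alpha^0_t(\khs_n) = \khs_n$ — the last being immediate because $e^{\i t H_n^0} = (e^{\i t(-\Delta)})^{\otimes n}$, so $\alpha^0_t$ acts slot by slot, carrying each $\cK(\hs)$ onto itself and fixing each $\IC\Id$, hence each summand of $\khs_n$ onto itself. Applying $\mathcal{E}_n$ and using $\mathcal{E}_n(\khs_n) = \kh_n$ then yields that $\Delta_n(t)$ leaves $\kh_n$ invariant and that $\gamma_t(\kh_n) = \kh_n = \alpha^0_t(\kh_n)$; since $\alpha_t = \gamma_t \circ \alpha^0_t$ on $\Lb(\FSo{n})$, also $\alpha_t(\kh_n) = \kh_n$.

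For the norm continuity of $t \mapsto \alpha_t(\An)$, $\An \in \kh_n$, I would again write $\alpha_t = \gamma_t \circ \alpha^0_t$ and treat the two factors separately. By \cref{th:unsymmetrized main result} the map $s \mapsto \gamma_s(B)$ is norm-continuous for every $B \in \Lb(\FSo{n})$, and $\gamma_s$ is a $*$-automorphism of $\khs_n$, in particular isometric there. For the free part, $t\mapsto \alpha^0_t(\An)$ is norm-continuous on $\kh_n$: on a tensor slot carrying $\IC\Id$ the action is the identity, while on a slot carrying $\cK(\hs)$ it is conjugation by the strongly continuous unitary group $e^{\i t(-\Delta)}$, which is norm-continuous on the compacts because $\|(e^{\i t(-\Delta)} - \Id)C\| \to 0$ and $\|C(e^{-\i t(-\Delta)} - \Id)\| \to 0$ for $C \in \cK(\hs)$; a finite tensor product, and finite direct sum, of isometric norm-continuous automorphism groups is again norm-continuous, so $\alpha^0_t$ is norm-continuous on $\khs_n \supseteq \kh_n$. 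Putting the pieces together, for $t \to t_0$,
\begin{align*}
\|\alpha_t(\An) - \alpha_{t_0}(\An)\|
&\le \|\gamma_t\big(\alpha^0_t(\An) - \alpha^0_{t_0}(\An)\big)\| + \|\gamma_t(\alpha^0_{t_0}(\An)) - \gamma_{t_0}(\alpha^0_{t_0}(\An))\| \\
&= \|\alpha^0_t(\An) - \alpha^0_{t_0}(\An)\| + \|\gamma_t(\alpha^0_{t_0}(\An)) - \gamma_{t_0}(\alpha^0_{t_0}(\An))\| \;\longrightarrow\; 0,
\end{align*}
where the first term vanishes by continuity of the free part (using that $\gamma_t$ is isometric on $\khs_n$) and the second by norm-continuity of $s \mapsto \gamma_s$ evaluated at the fixed element $\alpha^0_{t_0}(\An) \in \khs_n$.

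I expect the only nonformal points to be the two isolated above: first, that $\Pa{n} H_n = H_n \Pa{n}$ — i.e. the permutation symmetry of $H_n^0$ and $V_n$ — is exactly what lets the unsymmetrized results descend through $\mathcal{E}_n$; and second, that the free conjugation $\alpha^0_t$, which is \emph{not} norm-continuous on all of $\Lb(\FSo{n})$, nevertheless is norm-continuous on $\kh_n$ precisely because the generators of $\kh_n$ are built from compact operators (and identities), so that the strong convergence $e^{\i t(-\Delta)} \to \Id$ upgrades to norm convergence after composing with a compact operator. Everything else is routine bookkeeping with the unsymmetrized lemmas.
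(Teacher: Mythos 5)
Your proposal is correct and follows essentially the same route as the paper's proof: the paper likewise observes that $H_n$ and $V_n$ commute with $\Pa{n}$, deduces $\gamma_t(\Pa{n}\An\Pa{n}) = \Pa{n}\gamma_t(\An)\Pa{n}$ so that $\gamma_t(\kh_n)=\kh_n$, and then combines this with \cref{th:unsymmetrized main result} and the invariance and norm continuity of $\alpha^0_t$ on $\kh_n$. The only difference is that you spell out details the paper leaves implicit (the compression $\mathcal{E}_n$ as a norm-one projection, and the upgrade from strong to norm continuity of $\alpha^0_t$ on compacts), which is harmless.
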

\begin{proof}
	Both of the operators $H_n$ and $V_n$ commute with $\Pa{n}$. Therefore, using the definition of $\gamma_t$, we obtain $\gamma_t(\Pa{n} \An \Pa{n}) = \Pa{n} \gamma_t(\An) \Pa{n}$ for all $\An \in \khs_n$. This and \cref{th:unsymmetrized main result} imply that $\gamma_t(\kh_n) = \kh_n$, and therefore, $\alpha_t (\kh_n) = \alpha^0_t (\gamma_{-t}(\kh_n)) = \kh_n$. The continuity follows from the continuity of $ \alpha^0_t$ and the continuity statement in \cref{th:unsymmetrized main result}.
	\end{proof}

\begin{lem}
	\label{th:commutator clustering}
	Let $n \in \IN$, $f_1, \ldots, f_n, g_1, \ldots, g_n \in \hs$ and $t \in \IR$. Assume that either
	\begin{enumerate}[label=(\alph*)]
		\item \label{it:ahDeltaash} $B_m = a(h_1) \int_0^t [a^*(h_2),\Vb(s)] \d s|_{\FS_m}$ with $h_1,h_2 \in \hs$, or
		\item \label{it:VntA} $B_m = [V_m(t), A|_{\FS_m}]$ with $A \in \Af_0$,
	\end{enumerate}
	for $m \in \{n-1,n\}$.
	Then we have
	\begin{align*}
		\lim_{s\to\infty} &\cs{a^*(g_1) \cdots  a^*(g_{n-1}) a^*(T_x g_n) \Omega, B_n  a^*(f_1) \cdots a^*(f_{n-1}) a^*(T_x f_n)\Omega } \\ &=  \cs{a^*(g_1) \cdots a^*(g_{n-1}) \Omega, B_{n-1}  a^*(f_1) \cdots a^*(f_{n-1}) \Omega } \cs{g_n,f_n} .
	\end{align*}
\end{lem}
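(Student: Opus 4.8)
The statement to prove is \cref{th:commutator clustering}: a clustering identity of the same shape as \cref{th:r to infinity}, but now for two specific families of operators $B_m$ that arise in the Dyson-series analysis of the dynamics — namely $B_m = a(h)\Delta_m(t)(a^*(h))|_{\FS_m}$ in case \ref{it:ahDeltaash} and $B_m = [V_m(t),A|_{\FS_m}]$ in case \ref{it:VntA}. The key structural feature I would exploit is that these operators are \emph{not} in $\Af_0$ (they do not even preserve particle number in case (a) before the two $a^\#(h)$ are attached, and $V_m(t)$ is not quasi-local), yet each $B_m$ is built out of operators whose action on a fixed sector is a sum of one- and two-body terms with \emph{translation-covariant} kernels. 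So the plan is not to reduce to \cref{th:r to infinity} directly, but to mimic its proof: push the factor $a^\#(T_xf_n)$ (and its bra-counterpart $a^\#(T_xg_n)$) past everything using the CAR, and show that in the limit $\abs x\to\infty$ only the ``spectator'' contraction $\cs{g_n,f_n}$ survives, leaving exactly $B_{n-1}$ acting on the remaining $n-1$ particles.

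First I would treat case \ref{it:VntA}. Write $V_n(t)=e^{\i tH_n^0}V_n e^{-\i tH_n^0}=\sum_{i\neq j}V_{ij}(t)$, where each $V_{ij}(t)$ acts only on tensor factors $i$ and $j$ with a bounded, translation-covariant kernel $V_{ij}(t)(x_i,x_j;y_i,y_j)$ that decays in $\abs{x_i-x_j}$ in an appropriate averaged sense (this is where continuity of $V$ and vanishing at infinity enter, via the free propagator $e^{\i t\Delta}$ spreading the support). Expanding $a^*(g_1)\cdots a^*(T_xg_n)\Omega$ and $a^*(f_1)\cdots a^*(T_xf_n)\Omega$ into the (signed) sums over $S_n$ of elementary tensors, the matrix element of $[V_n(t),A|_{\FS_n}]$ becomes a finite sum of products of inner products of the form $\cs{\gw_{\tau(k)},(\cdots)\fw_{\sigma(k)}}$, exactly as in the proof of \cref{th:limit key relation}, where $\fw_j,\gw_j$ equal $f_j,g_j$ for $j<n$ and $T_xf_n,T_xg_n$ for $j=n$. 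The point is that a term survives the limit $\abs x\to\infty$ only if the $n$-th variable is matched to the $n$-th variable on both sides by a \emph{bare} inner product $\cs{g_n,f_n}$ — any term where $T_xf_n$ or $T_xg_n$ is paired through an operator of $A$, or through a $V_{ij}(t)$ linking the moving particle to a stationary one, tends to $0$ because $A|_{\FS_m}\in\kh_m$ is a limit of compact-operator tensor products (so $\cdot\, T_x\to 0$ strongly) and because $V_{ij}(t)T_x^{(\text{one slot})}\to 0$ strongly by the same spreading argument. After discarding those terms one is left with $(n-1)$-particle matrix elements, and a bookkeeping of signs and the combinatorial factor — identical to the computation in \cref{th:limit key relation}, where $(n-m)/n$ came from the number of admissible positions for the $n$-th index — collapses to exactly $\cs{g_n,f_n}\cs{\varphi^{(n-1)},B_{n-1}\psi^{(n-1)}}$. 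By density (linearity in $A$, approximation of general $A\in\Af_0$ by polynomials) and continuity in the $\nn\cdot_n$ seminorms (using \cref{th:r to infinity} to control the approximation uniformly in $x$), the identity extends from polynomial $A$ to all $A\in\Af_0$.

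For case \ref{it:ahDeltaash}, I would write $\Delta_m(t)(a^*(h))=\int_0^t[a^*(h),\Vb(s)]\,\mathrm ds$ and note that on $\FS_m$ this is $\int_0^t\sum_{i\neq j}[a^*(h),V_{ij}(s)]\,\mathrm ds$, which sandwiched between $a(h)$ and $a^*(h)$ gives an operator that changes particle number by zero and is again a finite sum of one- and two-body translation-covariant terms (now with an extra $h$-dependent rank-one factor coming from the $a^\#(h)$). The same CAR-pushing and dominated-convergence argument applies verbatim; one can either repeat the computation or observe that, modulo the outer $a(h)\,\cdots\,a^*(h)$, the integrand is of the same algebraic type as in case (b). The only mild subtlety is justifying that the $s$-integral and the $\abs x\to\infty$ limit may be interchanged, which follows from \cref{th:approximate Dl integral}\,(e) giving a uniform (in $x$) bound $2\nn{V_n}\abs t\nn{a^*(h)}^2\cdots$ on the integrand, so dominated convergence applies.

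The main obstacle I anticipate is the decay input for the two-body terms $V_{ij}(t)$: unlike the rank-one/compact pieces coming from $A$, the propagated interaction $e^{\i tH_n^0}V_{ij}e^{-\i tH_n^0}$ is not compact in any single variable, so ``$V_{ij}(t)T_x\to 0$ strongly'' must be proved, e.g.\ by writing $V_{ij}(t)$ in terms of $e^{\i t\Delta}$, using that $V$ is bounded and vanishes at infinity to get that $V(x_i-x_j)$ becomes negligible on the relevant (spreading) supports as $\abs x\to\infty$, and using the CAR to ensure the moving particle is indeed in a slot on which such a factor acts whenever it is not matched by $\cs{g_n,f_n}$. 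Getting this strong-convergence statement cleanly — and confirming it is uniform enough in $s\in[0,t]$ to survive the integral in case (a) — is the technical heart; everything else is the same permutation-sum bookkeeping already carried out for \cref{th:limit key relation}.
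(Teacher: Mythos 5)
Your overall plan targets the right identity, but your execution diverges from the paper's in a way worth spelling out. The paper does \emph{not} expand the wedge vectors into permutation sums for this lemma. Instead it isolates a purely algebraic locality statement: after reducing to compactly supported $V$, $h$, $f_n$ and compactly supported arguments in a polynomial $A$ (and commuting $T_x$ through the free propagator so that $\Vb(s)$ may be replaced by $\Vb$), it computes the anticommutators $\{a^*(f),[\Vb,a(h)]\}$ and $\{a^*(f),[\Vb,a^*(h)]\}$ explicitly on $n$-particle wave functions and shows they vanish \emph{identically} once $\supp f$ is separated from $\supp h$ by more than the range of $V$. Since each $B_n$ is a product of an even number of such factors, $a^*(T_xf_n)$ then passes through $B_n$ exactly for large $\abs x$, and the conclusion follows by the same pull-through argument as in \cref{th:r to infinity}; the general case is recovered by norm approximation of $V$, $h$, $f_n$ and the arguments of $A$, uniformly in $x$. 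Your route instead rests on the strong-convergence facts $CT_x\to0$ for compact $C$ and $V_{ij}(s)(T_x\otimes\Id)\to0$; the latter, which you rightly single out as the technical heart, is true and provable exactly as you sketch ($V$ bounded and vanishing at infinity, $T_x$ commuting with the free propagator, dominated convergence, with a bound uniform in $s\in[0,t]$). This is a legitimate alternative and in some respects more robust, since it never needs exact compact supports.

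The one place where your proposal understates the work is the claim that the bookkeeping is ``identical to the computation in \cref{th:limit key relation}.'' It is not: there the operator was a single antisymmetrized tensor product $C^{(m)}\wedge\Id_{n-m}$, whereas here you must expand a \emph{product} $V_{ij}(t)\,(C_1\wedge\cdots\wedge\Id)$ and its reverse, summed over all $n(n-1)$ ordered pairs $(i,j)$, and then verify that the surviving terms reassemble into $[V_{n-1}(t),\kappa_n(A|_{\FS_n})]$ with the correct multiplicities — i.e.\ that the pair count passing from $n(n-1)$ to $(n-1)(n-2)$ and the $\tfrac{n-m}{n}$ factors built into $\kappa_n$ come out consistently, with the right signs. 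This is precisely the combinatorial load that the paper's exact-anticommutation trick is designed to avoid. It is not a fatal gap — the identity is true and your scheme can be completed — but it is the step you would actually have to write out in full, and it is not a verbatim repetition of \cref{th:limit key relation}.
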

\begin{proof}
	We will use the same strategy as in the proof of \cref{th:r to infinity}, i.e., we have to show that the commutator between $ B_n$ and $a^*(T_x f_n)$ vanishes in the limit $\abs x \to \infty$. 
	
	First consider the case \ref{it:ahDeltaash}, assume that $V$ is compactly supported in $\B_{r_V}$, the closed ball around the origin of radius $r_V$, $f_n$ is compactly supported, and $h \in \hs$ is supported in a ball of radius $r_h$.  If we show that $[\Vb,a^\#(h)]$ anticommutes with every $a^*(f)$ with $\supp f \subseteq \B_{r_h + r_V}^c$, then the statement follows for the given assumptions by anticommuting $a^*(T_x f_n)$ to the vacuum on the left-hand side. In doing so we use that $T_x f_n$ is supported outside of any ball for $\abs x$ big enough. 
	
	For the case \ref{it:VntA}, we first assume that $A$ is a polynomial of creation and annihilation operators of functions having compact support. Then we can expand the commutator $[V_n(t),A_n]$ into the single creation and annihilation operators and proceed as in the previous case. 
		
 Under the general assumptions the statement follows from the approximation of $V$, $f_n$, $h$ and the functions in the creation and annihilation operators of $A$  with compactly supported functions in norm. Here we use that  the convergence is independent of $x$ and the translations $T_x$ commute with  the free time evolution. 
	
	It remains to show that $[\Vb,a^\#(h)]$ anticommutes with every $a^*(f)$ when $\supp f \subseteq \B_{r_h + r_V}^c$. 
	 The $n$-particle components of the commutators between $\Vb$ and the creation and annihilation operators are given by
	\begin{align*}
([\Vb, a^*(h)] \psi )_n(x_1, \ldots, x_n) &= \frac{2}{n^{1/2}}  \sum_{k=1}^{n} (-1)^{k+1} h(x_k) \ssum{i=1,\\i\not=k}^n V(x_i- x_k) \psi_{n-1}(x_1, \ldots, \widehat{x_k}, \ldots, x_n), \\
([\Vb, a(h)] \psi )_n(x_1, \ldots, x_n) &= - 2\sqrt{n+1}  \sum_{k=1}^n \int \overline{h(x)}   V(x-x_k)  \psi_{n+1}(x,x_1,\ldots, x_n) \d x,
	\end{align*}
which follows directly from the definitions \eqref{eq:creation op} and \eqref{eq:annihilation op}. Using this, we compute
	\begin{align*}
		&(a^*(f) [\Vb,a(h)] \psi)_n(x_1, \ldots, x_n) = n^{-1/2} \sum_{i=1}^{n} (-1)^{i+1} f(x_i) ([\Vb,a(h)] \psi)_{n-1}(x_1, \ldots, \widehat{x_i}, \ldots, x_n) \\
		& = -2 \sum_{i=1}^{n} (-1)^{i+1} f(x_i) \ssum{k=1,\\k \not= i}^n \int \overline{h(x)}   V(x-x_k) , \psi_{n}(x,x_1,\ldots, \widehat{x_i}, \ldots x_n) \d x  \\
		&( [\Vb,a(h) ] a^*(f)\psi)_n(x_1, \ldots, x_n) = - 2\sqrt{n+1}  \sum_{k=1}^n \int \overline{h(x)}   V(x-x_k)  (a^*(f)\psi)_{n+1}(x,x_1,\ldots, x_n) \d x \\
		& = - 2 \sum_{k=1}^n \int \overline{h(x)}   V(x-x_k) 
		\left(f(x) \psi_n(x_1, \ldots, x_n) + \sum_{i=1}^{n} (-1)^i f(x_i) \psi_n(x,x_1, \ldots, \widehat{x_i}, \ldots, x_n) \right)  \d x,
\end{align*}
\allowdisplaybreaks
from which it follows
	\begin{align}
		\{ a^*(f), [\Vb,a(h)] \} \psi_n (x_1, \ldots, x_n)  &= -2  \sum_{k=1}^n \int \overline{h(x)}   V(x-x_k)  \bigg(
		f(x) \psi_n(x_1, \ldots, x_n) \label{eq:integrand1}  \\  &\qquad + (-1)^k f(x_k) \psi_n(x, x_1, \ldots, \widehat{x_k}, \ldots, x_n) \bigg) \d x. \label{eq:integrand2}
	\end{align}
	For the integrand \eqref{eq:integrand1} to be non-zero, we need to have $\abs x < r_h$ and $\abs x \geq r_h + r_V$, which is a contradiction. Likewise, for the one in \eqref{eq:integrand2} to be non-zero we  need $\abs x < r_h$, $\abs{x-x_k} < r_V$ and $\abs{x_k} \geq r_h + r_V$, which again yields a contradiction. 
	Similarly, a direct computation shows
	\begin{align*}
		&a^*(f) [\Vb,a^*(h)] \psi_n(x_1, \ldots, x_n) =   n^{-1/2} \sum_{i=1}^n (-1)^{i+1} f(x_i)  ([\Vb,a^*(h)] \psi)_{n-1}(x_1, \ldots, \widehat{x_i}, \ldots  ,x_n) \\ 
		&\qquad =  2 (n(n-1))^{-1/2} \sum_{i=1}^n (-1)^{i+1} f(x_i) \ssum{k=1,\\ k\not=i}^{n} (-1)^{k+1} \sgn(i-k) h(x_k) \\  
		&\qquad \qquad \times \ssum{j=1,\\ j \not= i,k}^n V(x_j - x_k)  \psi_{n-2}(x_1, \ldots, \widehat{x_i}, \ldots, \widehat{x_k}, \ldots, x_n), \\
		&[\Vb,a^*(h)] a^*(f)  \psi_n(x_1, \ldots, x_n) \\  &\qquad = 	2 n^{-1/2} \sum_{k=1}^{n} (-1)^{k+1} h(x_k) \ssum{j=1,\\ j\not= k}^n V(x_j- x_k) (a^*(f) \psi)_{n-1}(x_1, \ldots, \widehat{x_k}, \ldots, x_n) \\
		&\qquad = 2 (n(n-1))^{-1/2} \sum_{k=1}^{n} (-1)^{k+1} h(x_k) \ssum{j=1,\\ j\not= k}^n V(x_j- x_k) \\  &\qquad \qquad \times \ssum{i=1,\\i\not= k}^n (-1)^{i+1} \sgn(k-i) f(x_i)  \psi_{n-2}(x_1, \ldots, \widehat{x_i},\ldots, \widehat{x_k}, \ldots, x_n), 
	\end{align*}
	which implies
	\begin{align*}
		&\{a^*(f), [\Vb,a^*(h)] \} \psi_n(x_1, \ldots, x_n) = 2 (n(n-1))^{-1/2} \\ &\qquad \times \sum_{i=1}^n \ssum{k=1,\\ k\not=i}^{n}  (-1)^{i+1} f(x_i) (-1)^{k+1} \sgn(i-k) h(x_k) V(x_i - x_k)  \psi_{n-2}(x_1, \ldots, \widehat{x_i}, \ldots, \widehat{x_k}, \ldots, x_n).
	\end{align*}
	We see again that this is zero for	$\supp f \subseteq \B_{r_h + r_V}^c$, since each summand can only be non-zero if $\abs{x_k} < r_h$ and $\abs{x_i - x_k} < r_V$, i.e., if $\abs{x_i} < r_h + r_V$.
\end{proof}

\begin{lem}
	\label{th:kappa Delta compatibility}
For all $n \in \IN$, we have
	\begin{align*}
		\kappa_n \circ \Delta_n(t) = \Delta_{n-1}(t) \circ \kappa_n. 
	\end{align*}
	In particular, for all $A \in \Af_0$ and $n \in \IN$,
	\[
	(\Delta_m(t)(A|_{\FS_m}))_{m=0}^n \in \Af_0|_{\FS_{\leq n}}.
	\]
\end{lem}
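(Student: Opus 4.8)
The plan is to first establish the intertwining identity $\kappa_n \circ \Delta_n(t) = \Delta_{n-1}(t) \circ \kappa_n$ of maps $\kh_n \to \kh_{n-1}$, and then read off the ``in particular'' statement by the same density-and-coherence bookkeeping already used for \cref{th:n-1 restriction}.

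For the intertwining identity, fix $K_n \in \kh_n$ and, by \cref{th:restriction is khn}, choose $A \in \Af_0$ with $A|_{\FS_n} = K_n$; then $\kappa_n(K_n) = A|_{\FS_{n-1}}$ by \cref{th:n-1 restriction}. Both $\kappa_n(\Delta_n(t)(K_n))$ and $\Delta_{n-1}(t)(\kappa_n(K_n))$ lie in $\kh_{n-1} \subseteq \Lb(\FS_{n-1})$ (invoking \cref{th:khn invariance} for the first), so it suffices to compare matrix elements between the vectors $\psi^{(n-1)} = a^*(f_1)\cdots a^*(f_{n-1})\Omega$ and $\ph^{(n-1)} = a^*(g_1)\cdots a^*(g_{n-1})\Omega$, whose linear span is dense in $\FS_{n-1}$. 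Summing \cref{th:limit key relation} over the direct-sum components of $\Delta_n(t)(K_n) \in \kh_n$ gives
\[
	\cs{\ph^{(n-1)}, \kappa_n(\Delta_n(t)(K_n))\,\psi^{(n-1)}}\,\cs{g_n, f_n} = \lim_{\abs x \to \infty}\cs{\ph^{(n)}(x), \Delta_n(t)(K_n)\,\psi^{(n)}(x)},
\]
with $\psi^{(n)}(x) = a^*(T_x f_n)\psi^{(n-1)}$ and $\ph^{(n)}(x) = a^*(T_x g_n)\ph^{(n-1)}$. Now substitute $\Delta_n(t)(K_n) = \i\int_0^t [K_n, V_n(s)]\d s$. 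Since $\norm{[K_n, V_n(s)]} \leq 2\norm{K_n}\,\norm{V_n}$ uniformly in $s \in [0,t]$ and in $x$, dominated convergence permits interchanging $\lim_{\abs x\to\infty}$ with the $s$-integral, and as these are strong operator integrals the inner product passes through them. For each fixed $s$, case~\ref{it:VntA} of \cref{th:commutator clustering} (applied at time $s$, with operator $A$, using $[K_n, V_n(s)] = [A|_{\FS_n}, V_n(s)]$) yields
\[
	\lim_{\abs x\to\infty}\cs{\ph^{(n)}(x), [K_n, V_n(s)]\,\psi^{(n)}(x)} = \cs{\ph^{(n-1)}, [A|_{\FS_{n-1}}, V_{n-1}(s)]\,\psi^{(n-1)}}\,\cs{g_n, f_n}.
\]
Taking $f_n = g_n$ with $\norm{f_n} = 1$ so that $\cs{g_n, f_n} = 1$ and integrating in $s$, the right-hand side collapses to $\cs{\ph^{(n-1)}, \Delta_{n-1}(t)(A|_{\FS_{n-1}})\,\psi^{(n-1)}}$. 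Since $A|_{\FS_{n-1}} = \kappa_n(K_n)$ and the $\psi^{(n-1)}, \ph^{(n-1)}$ are total in $\FS_{n-1}$, the intertwining identity follows.

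For the ``in particular'' statement, fix $A \in \Af_0$ and set $L_m := \Delta_m(t)(A|_{\FS_m}) \in \kh_m$ for $0 \leq m \leq n$, which is well defined by \cref{th:restriction is khn,th:khn invariance}. The intertwining identity together with $\kappa_m(A|_{\FS_m}) = A|_{\FS_{m-1}}$ (from \cref{th:n-1 restriction}) gives $\kappa_m(L_m) = L_{m-1}$ for $1 \leq m \leq n$, so $(L_m)_{m=0}^n$ is a coherent family. By \cref{th:restriction is khn} pick $B \in \Af_0$ with $B|_{\FS_n} = L_n$; iterating \cref{th:n-1 restriction} then gives $B|_{\FS_m} = (\kappa_{m+1} \circ \cdots \circ \kappa_n)(L_n) = L_m$ for all $m \leq n$, so $(L_m)_{m=0}^n = (B|_{\FS_m})_{m=0}^n \in \Af_0|_{\FS_{\leq n}}$.

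The only step carrying real content is the interchange of $\lim_{\abs x\to\infty}$ with $\int_0^t$ and, underneath it, the clustering estimate \cref{th:commutator clustering} — which is presumably exactly why that lemma was isolated; everything else mirrors the argument for \cref{th:n-1 restriction}. One should also check the degenerate small-$n$ cases, where $V_n$ is an empty sum so that both sides vanish identically.
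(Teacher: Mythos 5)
Your proposal is correct and follows essentially the same route as the paper: expressing the action of $\kappa_n$ via the translated matrix-element limit of \cref{th:limit key relation}, applying \cref{th:commutator clustering}~(b) pointwise in $s$, interchanging the limit with the $s$-integral by dominated convergence, and concluding by density of the vectors $\psi^{(n-1)},\ph^{(n-1)}$; the coherence bookkeeping for the ``in particular'' part also matches (and is spelled out in slightly more detail than the paper's one-line conclusion). The only cosmetic difference is that the paper inserts an intermediate conjugation by the free evolution before invoking the clustering lemma, whereas you apply that lemma directly at time $s$, which its statement permits.
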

\begin{proof}
	Let $f_1, \ldots, f_n, g_1, \ldots, g_n \in \hs$, and set $\psi = a^*(g_1) \cdots a^*(g_{n-1}) \Omega$, $\varphi =a^*(f_1) \cdots a^*(f_{n-1})\Omega$. Let $A \in \Af_0$ and write $A_m := A|_{\FS_m}$ for all $m$. Then we find, noting that $T_x$ and $e^{-\i s (-\Delta)}$ commute, 
	\begin{align*}
		\cs{ a^*(T_x g_n) \psi, [V_n(s),A_n] a^*( T_x f_n) \varphi} &= \cs{e^{-\i s H^0} a^*(T_x g_n) \psi, [V_n,\alpha^0_{-s}(A_n)] e^{-\i s H^0} a^*(T_x f_n) \varphi} \\
		&= \cs{a^*( e^{-\i s (-\Delta)} T_x  g_n) \psi_s, [V_n,\alpha^0_{-s}(A_n)]  a^*(e^{-\i s (-\Delta)} T_x f_n) \varphi_s},
	\end{align*}
	where $\psi_s = a^*(e^{-\i s (-\Delta)} g_1) \cdots a^*(e^{-\i s (-\Delta)} g_{n-1}) \Omega$ and $\varphi_s =a^*(e^{-\i s (-\Delta)} f_1) \cdots a^*(e^{-\i s (-\Delta)} f_{n-1})\Omega$. So we obtain with \cref{th:commutator clustering} that
	\begin{align*}
		\lim_{\abs x\to\infty}  \cs{  a^*( T_x g_n) \psi, [V_n(s),A_n] a^*( T_x f_n) \varphi} = \cs{ \psi, [V_{n-1}(s),A_{n-1}]  \varphi} \cs{g_n,f_n}.
	\end{align*}
	By integration from $s=0$ to $t$ and dominated convergence, we find 
	\begin{align*}
		\lim_{\abs x\to\infty}  \cs{  a^*( T_x g_n) \psi, \left[\int_0^t V_n(s) \d s,A_n\right] a^*( T_x f_n) \varphi} = \cs{ \psi,  \left[\int_0^t V_{n-1}(s) \d s,A_{n-1} \right]  \varphi} \cs{g_n,f_n} .
	\end{align*}
	Since we know that $\Delta_n(t)(A_n) \in \kh_n$ by \cref{th:khn invariance}, we can use \cref{th:limit key relation}, which yields
	\begin{align*}
		\cs{\psi, \kappa_n(\Delta_n(A_n)) \varphi} \cs{g_n,f_n} &= 	\lim_{\abs x\to\infty}  \cs{  a^*( T_x g_n) \psi, \Delta_n(A_n) a^*( T_x f_n) \varphi} 
		  \\ &= \cs{ \psi, \Delta_{n-1}(t)(A_{{n-1}}) \varphi} \cs{g_n,f_n}
		  = \cs{\psi, \Delta_{n-1}(t)(\kappa_n(A_n)) \varphi}  \cs{g_n,f_n}.
	\end{align*}
	This implies $\kappa_n \circ \Delta_n(t) = \Delta_{n-1}(t) \circ \kappa_n$, since for any $B_n \in \kh_n$ we can find a $A \in \Af_0$ such that $A_n = B_n$. 
	Furthermore,
	\[
	\kappa_n( \Delta_n(t)(A_n )) = \Delta_{n-1}(t)(\kappa_n(A_n)) = \Delta_{n-1}(t)(A_{{n-1}}).
	\]
	So $(\Delta_n(t)(A_n))_{n\in \IN_0}$ forms a coherent sequence, which proves the last statement of the lemma. 
\end{proof}

\begin{prop}
	\label{th:kappa alpha compatibility}
	For all $n \in \IN$, we have
	\begin{align}
		\label{eq:kappa n compatibility}
		\kappa_n \circ \alpha_t = \alpha_{t} \circ \kappa_n. 
	\end{align} 
\end{prop}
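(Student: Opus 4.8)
The strategy is to split $\alpha_t$ into its free part $\alpha^0_t$ and the Dyson cocycle $\gamma_t=\alpha_t\circ\alpha^0_{-t}$, and to show separately that $\kappa_n$ intertwines each of these two pieces at levels $n$ and $n-1$; combining the two (via $\alpha_t=\gamma_t\circ\alpha^0_t$) then gives \eqref{eq:kappa n compatibility}. For the free part, $\kappa_n\circ\alpha^0_t=\alpha^0_t\circ\kappa_n$ on $\kh_n$: since $e^{\i tH_n^0}$ is the $n$-fold tensor power of the one-particle free evolution and commutes with $\Pa n$, conjugation by it maps a generator $C^{(m)}\wedge\Id_{n-m}$ of $\kh_n$ to $(\alpha^0_t(C^{(m)}))\wedge\Id_{n-m}$, where $\alpha^0_t(C^{(m)})\in\cK(\FS_m)$ is the level-$m$ free conjugation (compact again, as unitary conjugation preserves compactness). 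Both $\kappa_n\circ\alpha^0_t$ and $\alpha^0_t\circ\kappa_n$ then send this generator to $\tfrac{n-m}{n}(\alpha^0_t(C^{(m)}))\wedge\Id_{n-m-1}$, and the identity extends to all of $\kh_n$ by linearity, density, and contractivity of $\kappa_n$.

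For the cocycle, I would first prove, by induction on $l\in\IN_0$ (simultaneously for all $n\in\IN$), the claim that $D_{n,l}(t)$ maps $\kh_n$ into itself, that $s\mapsto D_{n,l}(s)(A)$ is norm-continuous for $A\in\kh_n$, and that $\kappa_n\circ D_{n,l}(t)=D_{n-1,l}(t)\circ\kappa_n$ on $\kh_n$. The base case $l=0$, $D_{n,0}(t)=\Delta_n(t)$, combines $\Delta_n(t)(\kh_n)=\kh_n$ (\Cref{th:khn invariance}), norm-continuity of $s\mapsto\Delta_n(s)(A)=\int_0^s\delta_n(r)(A)\,\d r$ (\Cref{th:approximate Dl integral}\,(d) with constant integrand), and the intertwining relation \Cref{th:kappa Delta compatibility}. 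For the inductive step I would use the norm-convergent approximation of \Cref{th:approximate Dl integral}\,(c),
\[
D_{n,l}(t)(A)=\lim_{k\to\infty}\sum_{j=1}^{k}\bigl(\Delta_n(jt/k)-\Delta_n((j-1)t/k)\bigr)\bigl(D_{n,l-1}(jt/k)(A)\bigr),
\]
which applies since $s\mapsto D_{n,l-1}(s)(A)$ is a norm-continuous $\kh_n$-valued curve by the induction hypothesis. As $\Delta_n$ preserves $\kh_n$ and $\kh_n$ is norm-closed, this already yields $D_{n,l}(t)(A)\in\kh_n$, while norm-continuity in $t$ follows from \Cref{th:approximate Dl integral}\,(d). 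Applying the contraction $\kappa_n$ to the displayed limit, and using on each increment \Cref{th:kappa Delta compatibility} together with the induction hypothesis applied to $D_{n,l-1}(jt/k)(A)$, the right-hand side becomes the Riemann sum of \Cref{th:approximate Dl integral}\,(c) at level $n-1$ for $D_{n-1,l}(t)(\kappa_n(A))$, and passing to the limit gives $\kappa_n(D_{n,l}(t)(A))=D_{n-1,l}(t)(\kappa_n(A))$.

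Summing this identity over $l$, using that the Dyson series \eqref{eq:dyson} converges in operator norm and that $\kappa_n$ is norm-continuous, yields $\kappa_n\circ\gamma_t=\gamma_t\circ\kappa_n$ on $\kh_n$. Since $\alpha_t=\gamma_t\circ\alpha^0_t$ and $\kappa_n$ commutes with both factors, \eqref{eq:kappa n compatibility} follows.

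The main obstacle is the inductive step above. The infinitesimal maps $\delta_n(s)$ do not leave $\kh_n$ invariant — only their time-integrals $\Delta_n(t)$ do — so one cannot simply commute $\kappa_n$ past $\delta_n(s)$ inside the integral defining $D_{n,l}(t)$. One is forced to route everything through the norm-convergent Riemann sums built from increments of $\Delta_n$, and, accordingly, to carry through the induction the two auxiliary facts — $\kh_n$-membership and norm-continuity in $s$ of the curves $s\mapsto D_{n,l}(s)(A)$ — which are exactly the hypotheses under which \Cref{th:kappa Delta compatibility} and \Cref{th:approximate Dl integral} can be invoked at each level.
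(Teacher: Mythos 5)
Your proposal is correct and follows essentially the same route as the paper: decompose $\alpha_t$ into the free evolution and the Dyson cocycle $\gamma_t$, handle the free part via the tensor-product structure of $\kh_n$, and prove $\kappa_n\circ D_{n,l}(t)=D_{n-1,l}(t)\circ\kappa_n$ by induction on $l$ using the norm-convergent Riemann-sum approximation of \Cref{th:approximate Dl integral} together with \Cref{th:kappa Delta compatibility} as the base case. Your explicit bookkeeping of the auxiliary inductive hypotheses ($\kh_n$-membership and norm-continuity of $s\mapsto D_{n,l}(s)(A)$) is a welcome clarification of what the paper leaves implicit, but it is the same argument.
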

\begin{proof}
	The proof works in the same way as in \cite[Lemma 4.5]{bh1}. First one can directly show \eqref{eq:kappa n compatibility} if we replace $\alpha_t$ by the free time evolution $\alpha^0_t$. To this end, notice that the latter does not mix tensor factors and the conjugation of compact operators with the unitary one-body time-evolution is again compact. Thus, $\alpha^0_t$ keeps the tensor product structure of $\kh_n$ invariant and \eqref{eq:kappa n compatibility} for the free time-evolution follows from the definition of $\kappa_n$.
	
	It therefore suffices to prove \eqref{eq:kappa n compatibility} with $\alpha_t$ replaced by $\gamma_t$. Then we can use the Dyson series \eqref{eq:dyson} and show by induction over $l$ that $\kappa_n(D_{n,l}(\An)) = D_{n-1,l}(\kappa_n(\An))$ for all $l \in \IN_0$ and all $\An \in \kh_n$. For $l = 0$, this is the statement of \cref{th:kappa Delta compatibility}. For the induction step, we use
	\cref{th:approximate Dl integral} and the norm continuity of $\kappa_n$ in order to obtain
	\begin{align*}
		\kappa_n(D_{n,l}(t)(\An)) &= \kappa_n \left( \int_0^t \delta_n(s)(D_{n,l-1}(s)(\An)  ) \d s \right) \\
		&= \lim_{k\to\infty}	\kappa_n \left( \sum_{j=1}^k ( \Delta_n(jt/k) - \Delta_n((j-1)t/k) )(D_{n,l-1}(jt/k)(\An))  \right) \\
		&=  \lim_{k\to\infty} \sum_{j=1}^k ( \Delta_{n-1}(jt/k) - \Delta_{n-1}((j-1)t/k) )( 	\kappa_n(D_{n,l-1}(jt/k)(\An)))  \\
		 &= \int_0^t \delta_{n-1}(s)(\kappa_n(D_{n,l-1}(s)(\An))) \d s
		  \\
		 &= \int_0^t \delta_{n-1}(s)(D_{n-1,l-1}(s)(\kappa_n(\An))) \d s = D_{n-1,l}(t)(\kappa_n(\An)),
	\end{align*}
	where we assumed that the statement holds for $l-1$. Thus, we find $\kappa_n \circ \gamma_t = \gamma_{t} \circ \kappa_n$, which finishes the proof.
\end{proof}
\begin{proof}[Proof of \cref{th:main}, particle-number preserving case]
	From \cref{th:khn invariance,th:kappa alpha compatibility} we obtain that $\alpha_t(A)|_{\FS_{\leq n}} \in \Af_0|_{\FS_{\leq n}}$ for all $A \in \Afe_0$ and $n \in \IN_0$, so $\alpha_t(A) \in \Afe_0$. Furthermore, \cref{th:khn invariance} also yields the desired continuity in the seminorms. 
\end{proof}
\begin{proof}[Proof of \cref{th:main coro},particle-number preserving case]
	 Consider $\int f(s) \alpha_s(A) \d s \in \Lb(\FS)$, $f \in \Cci(\IR)$, $A \in \Afe_0$. As $\int f(s) \alpha_s(A) \d s|_{\FS_n}$ exists in the norm sense on $\FS_n$ by \cref{th:khn invariance} for all $n \in \IN_0$, Riemann sums of $\int f(s) \alpha_s(A) \d s$ (being elements of $\Afe_0$) converge to this integral with respect to each seminorm $\nn{\cdot}_n$. Since $\Afe_0 \subseteq \Lb(\FS)$ is closed in the seminorms topology, we have $\int f(s) \alpha_s(A) \d s \in \Afe_0$. Let $\Afe_{0,H}$ be the $C^*$-subalgebra of $\Afe_0$ generated by $\int f(s) \alpha_s(A) \d s \in \Lb(\FS)$, $f \in \Cci(\IR)$, $A \in \Afe_0$. The estimate
	\begin{align}
		\label{eq:pettis estimate}
		\nn{\int f(s) \alpha_s(A) \d s} \leq \nn A \int \abs{f(s)} \d s
	\end{align}
 implies that $t \mapsto \alpha_t\left( \int f(s) \alpha_s(A) \d s\right)$ is continuous in norm, which in turn yields that $t \mapsto \alpha_t(B)$ is continuous in norm for all $B \in \Afe_{0,H}$. Furthermore,
	\begin{align}
		\label{eq:delta_k estimate}
		\nn{ \int \delta_k(s) \alpha_s(A) \d s - A }_n \leq \int \delta_k(s) \nn{ \alpha_s(A) - A}_n \d s, \quad n \in \IN_0,
	\end{align}
	shows that $\int \delta_k(s) \alpha_s(A) \d s$ converges to $A$ in the seminorms for a compactly supported continuous Dirac sequence $(\delta_k)_{k \in \IN}$. This proves that $\Afe_{0,H}$ is dense in $\Afe_0$. 
\end{proof}

\section{Full CAR algebra}
\label{sec:full}
In this part, we generalize the content of the previous section to the extension of the full algebra $\Afe$ as it was defined in \eqref{eq:Afe characterization}. We start by observing that $\Afe$ is actually a well-defined $C^*$-algebra.

\begin{prop}
	\label{th:oplus algebra}
	The space $\bigoplus_{k \in \IZ} \Afe_k$ is a well-defined $*$-algebra and hence, $\Afe$ as its closure is a $C^*$-algebra. 
\end{prop}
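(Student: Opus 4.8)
The plan is to realise $\bigoplus_{k\in\IZ}\Afe_k$ as a concrete $*$-subalgebra of $\Lb(\FS)$; once that is done, $\Afe$ — being the operator-norm closure of a $*$-subalgebra of $\Lb(\FS)$ — is automatically a $C^*$-algebra, since multiplication and the adjoint are norm continuous on $\Lb(\FS)$. So the actual content is threefold: \emph{(a)} the internal sum $\sum_k\Afe_k$ is direct (this legitimises the symbol $\bigoplus$); \emph{(b)} $\bigoplus_k\Afe_k$ is stable under the adjoint; \emph{(c)} it is stable under multiplication.

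First I would record the particle-number grading. Call $A\in\Lb(\FS)$ of \emph{grade} $k\in\IZ$ if $A\FS_n\subseteq\FS_{n+k}$ for all $n$, equivalently $\PN_m A\PN_n=0$ whenever $m-n\neq k$. Since $a^*(f)$ and $a(f)$ have grades $+1$ and $-1$ and $\Af_0$ consists of grade-$0$ operators, every element of $\Af_k$ has grade $k$; moreover being of grade $k$ is preserved under the $\nn\cdot_n$-seminorm topology, because for each fixed $m,n$ the map $A\mapsto\PN_m A\PN_n$ is $\nn\cdot_n$-continuous ($\nn{\PN_m A\PN_n}\le\nn{A}_n$). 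Hence every element of $\Afe_k$ has grade $k$. Directness of $\sum_k\Afe_k$ is then immediate: if $\sum_k A_k=0$ with $A_k\in\Afe_k$, then for each $k_0$ and each $n$ one has $A_{k_0}\PN_n=\PN_{n+k_0}\bigl(\sum_k A_k\bigr)\PN_n=0$, and since $\sum_n\PN_n=\Id$ strongly this forces $A_{k_0}=0$. For \emph{(b)}, I would note that an operator $A$ of grade $k$ satisfies $\nn{A^*}_n=\nn{A}_{n-k}$ for all $n$ (with $\nn\cdot_m:=0$ for $m<0$), so the adjoint is continuous for the seminorm topology up to a shift of the seminorm index; combined with the algebraic identity $\Af_k^*=\Af_{-k}$ — immediate from $(B\,a^*(f_1)\cdots a^*(f_k))^*=a(f_k)\cdots a(f_1)B^*$ together with the fact, checked by normal-ordering with the CAR, that such a product lies in $\Af_{-k}$ — passing to seminorm limits yields $\Afe_k^*=\Afe_{-k}$. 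Thus $\bigoplus_k\Afe_k$ is $*$-stable.

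The multiplicative closure, \emph{(c)}, is the delicate step, and I expect it to be the main obstacle. Algebraically one has $\Af_i\Af_j\subseteq\Afe_{i+j}$ (in fact $\Af_i\Af_j$ is contained in the operator-norm closure of $\Af_{i+j}$): using the CAR, every monomial in the $a^\#(h)$ of net particle number $k$ can be brought into a linear combination of normal-ordered monomials of the shape $\bigl(a^*(b_1)\cdots a^*(b_p)a(c_1)\cdots a(c_p)\bigr)a^\#(g_1)\cdots a^\#(g_{\abs{k}})$, which lies in $\Af_k$, and net particle numbers add under products; one then extends in the left factor by norm-density of the polynomial algebra in $\Af_0$. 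The genuine subtlety is that a sequence converging in the $\nn\cdot_n$ seminorms need \emph{not} be bounded in operator norm — e.g.\ a sequence supported block-diagonally on ever higher sectors, with growing block norms, converges to $0$ in every seminorm — so one cannot approximate both factors at once. Instead, given $A\in\Afe_i$ and $B\in\Afe_j$ with $A=\lim_m A_m$, $B=\lim_\ell B_\ell$ in the seminorms ($A_m\in\Af_i$, $B_\ell\in\Af_j$), I would argue in two stages. Keeping $m$ fixed: since $(B-B_\ell)\PN_n$ has range in $\FS_{n+j}$, $\nn{A_m(B-B_\ell)}_n=\nn{A_m\PN_{n+j}(B-B_\ell)\PN_n}\le\nn{A_m}_{n+j}\,\nn{B-B_\ell}_n\to 0$, so $A_m B=\lim_\ell A_m B_\ell\in\Afe_{i+j}$ (a seminorm-closed set). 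Then, using that the \emph{exact} operator $B$ is bounded, $\nn{(A-A_m)B}_n=\nn{(A-A_m)\PN_{n+j}B\PN_n}\le\nn{A-A_m}_{n+j}\,\nn{B}\to 0$, whence $AB=\lim_m A_m B\in\Afe_{i+j}$. Therefore $\Afe_i\Afe_j\subseteq\Afe_{i+j}$, so products of elements of $\bigoplus_k\Afe_k$ again lie in $\bigoplus_k\Afe_k$. Together with \emph{(a)} and \emph{(b)} this shows $\bigoplus_k\Afe_k$ is a $*$-subalgebra of $\Lb(\FS)$, and the proposition follows. The only point needing care is precisely this two-stage approximation, which sidesteps the absence of a uniform operator-norm bound along seminorm-convergent sequences by invoking only the fixed norm of a single approximant or of the fixed limit; the underlying CAR normal-ordering manipulations are routine.
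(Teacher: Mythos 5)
Your argument is correct and follows essentially the same route as the paper's proof: both rest on the intertwining relation $\PN_{n+k}A=A\PN_n$ for grade-$k$ operators, which yields $\nn{(A_m-A)^*\PN_{n+k}}=\nn{(A_m-A)\PN_n}$ for the adjoint and lets one insert $\PN_{n+k}$ into products so that seminorm limits pass through to give $\Afe_k^*=\Afe_{-k}$ and $\Afe_k\Afe_l\subseteq\Afe_{k+l}$. Your two-stage limit (the paper takes both limits at once, which is also fine since seminorm-convergent sequences are bounded in each fixed seminorm), your explicit verification that the sum is direct, and your care with the normal-ordering behind $\Af_i\Af_j\subseteq\overline{\Af_{i+j}}$ are minor refinements of the same argument.
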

\begin{proof}
	By definition, we have $\Af_k^* = \Af_{-k}$ and $\Af_k \Af_l \subseteq \Af_{k+l}$ for all $k,l \in \IZ$. Furthermore, we see that $\PN_{n+k} A = A \PN_n$ for $A \in \Af_k$. Thus,
	if $\Af_k \ni A_m \overset{m\to\infty}\to A$ in the seminorm $\nn \cdot_n$ and  $\Af_l \ni B_m \overset{m\to\infty}\to B$ in the seminorm $\nn \cdot_n$, then $\nn{ (A_m^*-A^*) \PN_{n+k} } = \nn{\PN_{n+k} (A_m-A)    } =  \nn{(A_m-A) \PN_{n}  }$, so $A_m^* \PN_{n+k} \overset{m\to\infty}\to A^* \PN_{n+k}$, and
	$B_m A_m \PN_n = B_m \PN_{n+k} A_m \PN_n \overset{m\to\infty}\to   B \PN_{n+k} A \PN_n = B A \PN_n$ in operator norm. This shows $\Afe_k^* = \Afe_{-k}$,  $\Afe_k \Afe_l \subseteq \Afe_{k+l}$, and therefore the first statement of the proposition. Then the second statement follows immediately.
\end{proof}

Let $\Lb(\H_1, \H_2)$ denote the bounded operators from $\H_1$ to $\H_2$. Set $\Vb(s) := \alpha^0_s(\Vb)$ and $V_{ij}(s) := \alpha^0_s(V_{ij})$ for all $s \in \IR$ and $i,j =1, \ldots, n$.  The first step we want to take is to show that $a(f) [\Vb,a^*(f)] \in \Afe_0$, which we will prove initially on the larger unsymmetrized space. The key assumption that leads to compactness here is that the pair potential $V$ is in $C_0(\IR^d)$. This is reflected by the following lemma, which is similar to \cite[Lemma A.8]{bh2}. 
\begin{lem}
	\label{th:12compact}
	Let $0 \not= f \in\hs$ and $E_f := \frac{ \ketbra{f}{f} }{\nn f^2}$ denote the orthogonal projection to the one-dimensional space spanned by $f$. Then the operator
	\begin{align}
		\label{eq:op to converge}
\int_0^t V_{12}(s) \d s(E_f  \otimes \Id),
\end{align}
	acting on $\FSo{2} = \hs \otimes \hs$, is compact for all $t \in \IR$. 
\end{lem}
\newcommand{\x}{\mathbf{x}}
\newcommand{\p}{\mathbf{p}}
\begin{proof}
	First assume  $V$ is compactly supported and $f$ is a Schwartz function. For $m \in \IN$, set $\chi_m := \chr_{\{x \in \IR^d : \abs x \leq m\}}$, and  
	\begin{align*}
		V^m(x_1, x_2) := \chi_m(x_1) V(x_1 - x_2) , \qquad x_1, x_2 \in \IR^d,
	\end{align*}
	which is  compactly supported on $\IR^{2d}$.  Let $\x_j, \p_j$, $j \in \{1,2\}$, denote the vectors of position and momentum operators with respect to the $j$-th component. A direct calculation shows $\alpha^0_s(\x_j) = \x_j + 2 s \p_j$ for all $s \in \IR$, so we find
	\begin{align*}
	\alpha^0_s( V^m(\x_1, \x_2 )  )  = V^m( \x_1 + 2 s \p_1 , \x_2 + 2 s \p_2).
	\end{align*}
	Since $V^m$ is compactly supported, $ V^m(\x_1, \x_2 )  V^m(\p_1, \p_2 )$ is a Hilbert-Schmidt operator \cite[Theorem XI.20]{ReedSimon.1979}  and therefore compact. As long as the linear map $(x,p) \mapsto (x + 2s p, x + 2s' p)$ is invertible (namely when 
	 $s \not= s'$), we conclude that the operator 
	\begin{align*}
	\alpha^0_s( V^m(\x_1, \x_2 )  ) 	\alpha^0_{s'}( V^m(\x_1, \x_2 )  )    = V^m( \x_1 + 2 s \p_1 , \x_2 + 2 s \p_2)  V^m( \x_1 + 2 s' \p_1 , \x_2 + 2 s' \p_2)
	\end{align*}
	is unitarily equivalent to  $ V^m(\x_1, \x_2 )  V^m(\p_1, \p_2 )$ and therefore also compact. It follows that
	\begin{align*}
		\left( \int_0^t \alpha^0_s(V^m(\x_1, \x_2))  \d s \right)^2 = \int_0^t \int_0^t  \alpha^0_s(V^m(\x_1, \x_2))   \alpha^0_{s'}(V^m(\x_1, \x_2))   \d s \d s'
	\end{align*}
	is compact, and thus also $ \int_0^t \alpha^0_s(V^m(\x_1, \x_2))  \d s $ by polar decomposition. Writing $f_{-s} := e^{-\i s (-\Delta)} f$, we see that
	\begin{align}
		\label{eq:integrand bounded}
\int_0^t \alpha^0_s(V^m(\x_1, \x_2))  \d s (E_f \otimes \Id) =    \int_0^t \alpha^0_s(  	 V^m(\x_1, \x_2)    (E_{f_{-s}} \otimes \Id)   )  \d s ,
	\end{align}
which is compact as well.  

 We now show that the operator \eqref{eq:integrand bounded} converges to the operator \eqref{eq:op to converge} as $m \to \infty$ in operator norm, implying that \eqref{eq:integrand bounded} must be compact as well. To this end, note that for all $s \in \IR$, $f_{-s}$ is a Schwartz function, so $\chi_m f_{-s} \overset{m\to\infty}{\to}   f_{-s}$ in $L^2$ norm and therefore we have pointwise convergence for fixed $s$,
	\begin{align*}
  V^m(\x_1, \x_2)    (E_{f_{-s}} \otimes \Id)  = V_{12} \left( \frac{\ketbra{\chi_m f_{-s}}{f_{-s}}}{\nn f^2}  \otimes \Id \right)  \overset{m\to\infty}{\to}   V_{12} \left( \frac{\ketbra{ f_{-s}}{f_{-s}}}{\nn f^2}  \otimes \Id\right)  = V_{12} (E_{f_{-s}} \otimes \Id),
	\end{align*}
		in operator norm. As the operator norm of the integrand in \eqref{eq:integrand bounded} is uniformly bounded in $s$ and $m$, the whole integral converges in operator norm.
	
	 If $V \in C_0(\IR^d)$ is not compactly supported, we can approximate $V$ with respect to the supremum norm with continuous compactly supported functions. Since the integral in \eqref{eq:op to converge} is norm continuous with respect to the supremum norm of $V$, we can approximate this operator in operator norm by the same integrals with compactly supported pair potentials. Thus,  \eqref{eq:op to converge} is also compact in this case. Similarly, we can approximate arbitrary $f \in \hs$ by Schwartz functions with respect to the $L^2$ norm, so that the corresponding projections converge in operator norm. 
\end{proof}
Next, recall the definition of the creation and annihilation operators $a^*_n(f) \in \Lb(\FSo{n},\FSo{n+1})$, $n \in \IN_0$, and $a_n(f)  \in \Lb(\FSo{n},\FSo{n-1})$, $n \in \IN$, $f \in \hs$, on the unsymmetrized Fock space \cite{BR2}:
\begin{align*}
	a^*_n(f)(f_1 \otimes \cdots \otimes f_n) &= \sqrt{n+1} f \otimes f_1 \otimes \cdots \otimes f_n, \qquad f_j \in \hs, \\
	a_n(f)(f_1 \otimes \cdots \otimes f_n) &= \sqrt{n} \cs{f,f_1} f_2 \otimes \cdots \otimes f_n.
\end{align*}
Particularly, their restriction to the antisymmetric Fock space coincides with the given definitions \eqref{eq:creation op} and \eqref{eq:annihilation op}, i.e., $a^\#_n(f)|_{\FS_n} = a^\#(f)|_{\FS_n}$. Similarly to \eqref{eq:Deltan defn}, we will write $\Delta_n(t)(A) = \i \int_0^t [A,\Vb(s)] \d s$ for $A \in \Lb(\FS_n, \FS_{n+1})$ or $A \in \Lb(\FSo{n}, \FSo{n+1})$. 
\begin{lem}
	\label{th:is in khn}
	For all $n \in \IN$ and $f \in \hs$, $a(f) \Delta_n(t)(a^*(f)|_{\cF_n}) \in\kh_n$. 
\end{lem}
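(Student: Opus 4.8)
The plan is to show that $B_n := a(f)\,\Delta_n(t)(a^*(f)|_{\FS_n})$ lies in $\kh_n$ by exhibiting it as a limit, in the norm of $\Lb(\FS_n)$, of operators already known to lie in $\kh_n$. The starting point is \cref{th:int V compact}: the operator $W_{ij}(t) := \i\int_0^t V_{ij}(s)\,\d s$, acting on the $(i,j)$-th pair of tensor factors, is compact on $\hs\otimes\hs$ (and hence, tensored with identities, on the relevant sectors). First I would write $\Delta_n(t)(a^*(f)|_{\FS_n}) = \i\int_0^t[a^*(f),\Vb(s)]\,\d s$ as an operator in $\Lb(\FS_n,\FS_{n+1})$ and expand the commutator explicitly. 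Using the formula for $[\Vb,a^*(h)]$ displayed in the proof of \cref{th:commutator clustering}, or equivalently by a direct CAR computation, one sees that $[a^*(f),\Vb(s)]$ restricted to the $n$-particle sector is a sum, over the pairs $\{i,j\}$ among the new particle (carried by $f$) and the $n$ existing ones, of terms of the form $V_{ij}(s)$ composed with $a^*(f)$. Integrating in $s$ and then composing with $a(f)$ on the left turns each such term into $a(f)\,W_{ij}\,a^*(f)$-type expressions, where $W_{ij}$ is compact on the pair of tensor factors it acts on.

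The key observation is then that for each fixed pair, $a(f)\bigl(W_{ij}\wedge\Id\bigr)a^*(f)|_{\FS_n}$ is again an operator of the "compact on a few factors, identity elsewhere" form: the annihilation/creation pair $a(f)(\cdot)a^*(f)$ contributes at most a rank-one operator $\ketbra{f}{f}$ on one extra tensor slot (up to a scalar $\nn f^2$ and combinatorial factors), and composing a compact operator with a bounded operator keeps it compact. Concretely, I would argue that each summand, after antisymmetrization via \eqref{eq:antisymm operator relation}, has the shape $C\wedge\Id_{n-m}$ with $C\in\cK(\FS_m)$ for some small $m\le 2$, hence lies in $\kh_n$ by the very definition of $\kh_n$; summing finitely many such terms stays in $\kh_n$. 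To make the strong-operator-topology integral rigorous, I would first prove the statement for $V$ with compact support (or for $V$ replaced by a finite-rank approximation of $W_{ij}$), where everything reduces to genuinely finite-rank operators, then pass to the general continuous $V$ vanishing at infinity by norm-approximation: since $\nn{\int_0^t V_{ij}^{(k)}(s)\,\d s - \int_0^t V_{ij}(s)\,\d s}\to 0$ when $V^{(k)}\to V$ uniformly, and $a(f)$, $a^*(f)$ are bounded, the approximants converge to $B_n$ in operator norm, and $\kh_n$ is norm-closed.

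An alternative, perhaps cleaner route: note $a(f)\,\Delta_n(t)(a^*(f)|_{\FS_n})$ is exactly $B_n$ with $h=f$ in case \ref{it:ahDeltaash} of \cref{th:commutator clustering} (up to the already-established fact that this maps $\FS_n$ to itself). One could instead observe directly that $\Delta_n(t)(a^*(f))\in\Lb(\FS_n,\FS_{n+1})$ and that $a(f)\,\Delta_n(t)(a^*(f))$, being built from $W_{ij}$'s which are compact on pairs of factors, manifestly lies in $\khs_n$ in the unsymmetrized picture; then apply \eqref{eq:khs kh relation} together with the fact (from the proof of \cref{th:khn invariance}) that conjugation/restriction by $\Pa{n}$ sends $\khs_n$ into $\kh_n$. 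I would use whichever bookkeeping is shorter.

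\textbf{Main obstacle.} The technical heart is handling the strong-operator-topology integral $\int_0^t V_{ij}(s)\,\d s$ and confirming that, after it is sandwiched by $a(f)$ and $a^*(f)$ and antisymmetrized, the result is genuinely in the "compact $\wedge$ identity" class rather than merely compact-on-all-of-$\FS_n$ — i.e. keeping track of \emph{which} tensor factors carry the compact operator and which carry identities, and checking the $m=n$ / boundary combinatorics (the new particle can coincide with one of the old ones after the annihilation, so the relevant $m$ shifts). The compactness input itself is free from \cref{th:int V compact}; the work is in the algebraic bookkeeping, which is why I would first reduce to compactly supported $V$ and finite-rank approximations where the structure is transparent, and only then take closures.
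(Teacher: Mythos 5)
Your proposal is correct and follows essentially the same route as the paper: split $\Delta_n(t)(a^*(f)|_{\FS_n})$ into the pair interactions among the $n$ existing particles and the cross terms coupling the new slot to an existing one, invoke \cref{th:int V compact} for compactness of $\int_0^t V_{ij}(s)\,\d s$, and observe that the $a(f)(\cdot)a^*(f)$ sandwich turns each cross term into a compression $(\langle f|\otimes\Id)\bigl(\int_0^t V_{0i}(s)\,\d s\bigr)(|f\rangle\otimes\Id)$, hence a compact operator on a single tensor factor, so that everything lands in $\khs_n$ and then in $\kh_n$ after antisymmetrization. The preliminary reduction to compactly supported $V$ and finite-rank approximants is harmless but not needed once \cref{th:int V compact} is in hand.
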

\begin{proof}
	We have, evaluated as operators on $\FSo{n}$,
	\begin{align}
		a_{n+1}(f) \Delta_n(t)(a^*_n(f)) &= \i a_{n+1}(f) a^*_n(f) \int_0^t V_n(s) \d s - \i a_{n+1}(f) \int_0^t V_{n+1}(s) \d s a^*_n(f)   \nonumber \\
		&= 	\i  a_{n+1}(f) a^*_n(f)   \int_0^t  \sum_{\substack{i,j=1, \\ i\not= j}}^n   V_{ij}(s) \d s - \i a_{n+1}(f) \int_0^t   \sum_{\substack{i,j=0, \\ i\not= j}}^{n}  V_{ij}(s) \d s a^*_n(f) \nonumber  \\
		&= 	 -  2 \i a_{n+1}(f)  \sum_{\substack{i=1}}^{n}  \int_0^t   V_{0i}(s) \d s a^*_n(f)   \nonumber \\
		&= 	 -  2 \i \sum_{\substack{i=1}}^{n}  a_{n+1}(f)  \int_0^t   V_{0i}(s) \d s (E_f \otimes \Id_{\FSo{n}}) a^*_n(f)  , \label{eq:anDeltaan}
	\end{align}
	where the subindex $0$ refers to the tensor factor additionally attached by $a^*_n(f)$.
	Each summand in \eqref{eq:anDeltaan} only acts on the $i$-th tensor factor, i.e., it is of the form
	\[
	\Id \otimes \cdots  \otimes \Id \otimes K_i \otimes \Id \otimes \cdots \otimes \Id,
	\]
	where the operator $K_i$ stands at the $i$-th position and is given in the weak form by
	\begin{align*}
		\cs{h, K_i g} &=   \cs{f \otimes h,  \int_0^t V_{12}(s) \d s   (E_f  \otimes \Id) f \otimes g }, \qquad g,h \in \hs.
	\end{align*}
	As $ \int_0^t V_{12}(s) \d s   (E_f  \otimes \Id) $  is compact by \cref{th:12compact}, also $K_i$ needs to be compact. 
\end{proof}
\cref{th:is in khn} tells us that all the $n$-particle sectors $a(f) \Delta_n(t)(a^*(f)|_{\cF_n})$ are in the correct algebra. Next, the coherence relations between them need to be checked. 
\begin{prop}
	\label{th:Deltat af restriction}
	For all $n \in \IN$, we have $ (a(f)\Delta_m(t)(a^*(f)|_{\FS_{m}}))_{m=0}^n \in \Af_0|_{\FS_{\leq n}}$.
\end{prop}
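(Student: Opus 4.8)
The plan is to follow the same scheme used for the particle-number-preserving case in \Cref{sec:invariance}, in particular the proof of \cref{th:kappa Delta compatibility}. Write $B_m := a(f)\Delta_m(t)(a^*(f)|_{\FS_m})$ for $m\in\IN_0$. By \cref{th:is in khn} we already know $B_m\in\kh_m$ for every $m\ge 1$ (and $B_0 = 0\in\kh_0$ trivially, since $\Vb$ acts as zero on $\FS_0\oplus\FS_1$), so each individual $n$-particle sector sits in the correct algebra. What remains is to verify the coherence relations $\kappa_n(B_n) = B_{n-1}$ for all $n\in\IN$; granted this, \cref{th:restriction is khn} together with \cref{th:n-1 restriction} produces, for each fixed $n$, an $A\in\Af_0$ with $A|_{\FS_m} = B_m$ for all $m\le n$, which is precisely the assertion $(B_m)_{m=0}^n\in\Af_0|_{\FS_{\leq n}}$.

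To establish the coherence relation I would argue exactly as in the proof of \cref{th:n-1 restriction}. Using $B_n\in\kh_n$, write $B_n = \sum_{m=0}^n C^{(m)}\wedge\Id_{n-m}$ with $C^{(m)}\in\cK(\FS_m)$ and apply \cref{th:limit key relation} term by term (the $m=n$ summand drops out), obtaining, in the notation of that lemma,
\begin{align*}
	\lim_{\abs x\to\infty}\cs{\ph^{(n)}(x),\, B_n\,\psi^{(n)}(x)} = \cs{\ph^{(n-1)},\, \kappa_n(B_n)\,\psi^{(n-1)}}\,\cs{g_n,f_n}.
\end{align*}
On the other hand, $B_m$ is precisely of the type treated in the first case of \cref{th:commutator clustering} (with $h = f$), which gives
\begin{align*}
	\lim_{\abs x\to\infty}\cs{\ph^{(n)}(x),\, B_n\,\psi^{(n)}(x)} = \cs{\ph^{(n-1)},\, B_{n-1}\,\psi^{(n-1)}}\,\cs{g_n,f_n}.
\end{align*}
Taking $g_n = f_n$ with $\nn{f_n} = 1$ and comparing these two identities yields $\cs{\ph^{(n-1)},\kappa_n(B_n)\psi^{(n-1)}} = \cs{\ph^{(n-1)},B_{n-1}\psi^{(n-1)}}$ for all product vectors $\psi^{(n-1)}, \ph^{(n-1)}$; since their linear spans are dense in $\FS_{n-1}$, we conclude $\kappa_n(B_n) = B_{n-1}$, and hence $(B_m)_{m=0}^n$ is a coherent sequence in the sense required to apply \cref{th:restriction is khn} as above.

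The argument is thus a direct splicing of \cref{th:is in khn}, \cref{th:limit key relation} and \cref{th:commutator clustering}, exactly parallel to \cref{th:kappa Delta compatibility} in the number-preserving setting. The main obstacle I anticipate is purely bookkeeping: one has to check that $B_m$ genuinely fits the template of \cref{th:commutator clustering} for \emph{both} $m = n$ and $m = n-1$ (so that both $B_n$ and $B_{n-1}$ are controlled by the same clustering statement), and one has to handle the degenerate low-particle cases (for instance $n = 1$, where $\Vb$ vanishes on $\FS_0\oplus\FS_1$ and everything reduces to $B_0 = B_1 = 0$). Beyond that there is no genuine difficulty, since the approximation issues (compactly supported $V$, $f$, etc.) have already been absorbed into \cref{th:commutator clustering}.
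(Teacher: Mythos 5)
Your argument is correct and coincides with the paper's own proof: both reduce the statement to the coherence relation $\kappa_n(B_n)=B_{n-1}$ and obtain it by comparing the $\abs{x}\to\infty$ limits supplied by \cref{th:limit key relation} and \cref{th:commutator clustering}~\ref{it:ahDeltaash}, with \cref{th:is in khn} providing membership in each sector. (One minor quibble in your closing aside: $B_1$ is generally nonzero, since $a^*(f)$ maps $\FS_1$ into $\FS_2$ where $\Vb$ acts as $2V_{12}\neq 0$; but your main argument covers $n=1$ correctly regardless.)
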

\begin{proof}
	As in \cref{th:kappa Delta compatibility}, we have to show that, for all $n \in \IN$,
	\begin{align*}
		\kappa_n ( a(f) \Delta_n(t)(a^*(f)|_{\cF_n} )) = a(f) \Delta_{n-1}(t)(a^*(f)|_{\cF_{n-1}}).
	\end{align*}
	
	Let $\psi = a^*(g_1) \cdots a^*(g_{n-1}) \Omega$ and $\varphi =a^*(f_1) \cdots a^*(f_{n-1})\Omega$ for  $f_1, \ldots,f_n,g_1, \ldots, g_n \in \hs$. \cref{th:limit key relation} and \cref{th:commutator clustering} \ref{it:ahDeltaash} directly yield
	\begin{align*}
		\cs{\psi, \kappa_n (a(f) \Delta_n(t)(a^*(f)|_{\FS_{n}})) \varphi} \cs{g_n,f_n} &= \lim_{\abs x\to\infty}\cs{ a^*(T_x g_n) \psi, a(f) \Delta_n(t)(a^*(f)|_{\cF_n} )  a^*( T_x f_n) \varphi} \\ &= \cs{\psi, a(f) \Delta_{n-1}(t)(a^*(f)|_{\FS_{n-1}}) \varphi} \cs{g_n,f_n}. \qedhere
	\end{align*}
\end{proof}

\begin{lem}
	\label{th:approx deltas Daf integral}
	Let $n \in \IN_0$ and $s \mapsto D(s) \in \Lb(\FS_n, \FS_{n+1})$ be continuous in norm. Then we have that
	\begin{align*}
		-\i \int_0^t  [D(s),\Vb(s)] \d s \big|_{\cF_n} =  \lim_{k\to\infty} \sum_{j=1}^{k} (\Delta_n(jt/k) -  \Delta_n((j-1)t/k))(D(jt/k)).
	\end{align*}
\end{lem}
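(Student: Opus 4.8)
The plan is to follow the proof of \cref{th:approximate Dl integral}\,(c), which establishes the same kind of Riemann-type approximation in the particle-number-preserving setting; the present statement is actually easier, because $D$ takes values between two \emph{fixed} sectors $\FS_n$ and $\FS_{n+1}$, on which the perturbation acts by the \emph{bounded} operators $V_n$ and $V_{n+1}$. Consequently the map $A\mapsto[A,\Vb(s)]|_{\FS_n}$ is bounded on $\Lb(\FS_n,\FS_{n+1})$ with norm at most $\norm{V_n}+\norm{V_{n+1}}$, uniformly in $s$ (recall $\norm{V_m(s)}=\norm{V_m}$, since $V_m(s)$ is a unitary conjugate of $V_m$), and the only continuity input I shall need is the norm continuity of $s\mapsto D(s)$.

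First I would check that all the integrals appearing are well defined. Since $s\mapsto D(s)$ is norm continuous and $s\mapsto\Vb(s)|_{\FS_m}=V_m(s)$ is continuous in the strong operator topology and uniformly bounded, the map
\[
s\longmapsto \i\,[D(s),\Vb(s)]\big|_{\FS_n}=\i\big(D(s)V_n(s)-V_{n+1}(s)D(s)\big)\in\Lb(\FS_n,\FS_{n+1})
\]
is strong-operator continuous and bounded on the interval between $0$ and $t$; hence its integral exists in the strong operator topology and defines an element of $\Lb(\FS_n,\FS_{n+1})$, and likewise for each $\Delta_n(jt/k)(D(jt/k))$. The only property of such integrals I will use is the estimate $\norm{\int F(s)\,\d s}\le\int\norm{F(s)}\,\d s$, which follows by testing against unit vectors of $\FS_n$ and $\FS_{n+1}$.

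Next comes the main computation. From $\Delta_n(t)(A)=\i\int_0^t[A,\Vb(s)]\,\d s$ one gets, for each $j$,
\[
(\Delta_n(jt/k)-\Delta_n((j-1)t/k))(D(jt/k))=\i\int_{(j-1)t/k}^{jt/k}[D(jt/k),\Vb(s)]\,\d s,
\]
so, writing $\int_0^t[D(s),\Vb(s)]\,\d s=\sum_{j=1}^k\int_{(j-1)t/k}^{jt/k}[D(s),\Vb(s)]\,\d s$, the difference between the left-hand side of the lemma and the $k$-th partial sum on the right equals $\sum_{j=1}^k\i\int_{(j-1)t/k}^{jt/k}[D(jt/k)-D(s),\Vb(s)]|_{\FS_n}\,\d s$. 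Using the uniform commutator bound above, its norm is at most $(\norm{V_n}+\norm{V_{n+1}})\sum_{j=1}^k\int_{(j-1)t/k}^{jt/k}\norm{D(jt/k)-D(s)}\,\d s$. Given $\eps>0$, uniform continuity of $D$ on the compact interval between $0$ and $t$ gives, for all $k$ large enough, $\norm{D(jt/k)-D(s)}<\eps$ whenever $s$ lies between $(j-1)t/k$ and $jt/k$, for every $j$; hence the bound is $\le(\norm{V_n}+\norm{V_{n+1}})\,\eps\,\abs t$, which tends to $0$. This proves the asserted convergence in operator norm.

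The only point that needs care — and the reason the statement is not completely trivial — is that $s\mapsto\Vb(s)$ is merely strong-operator continuous, not norm continuous, so one cannot approximate $[D(s),\Vb(s)]$ by replacing \emph{both} arguments with their values at a grid point. The construction of the right-hand side sidesteps this: within each subinterval the factor $\Vb(s)$ is \emph{integrated} over $s$ rather than sampled, so only the increment $D(jt/k)-D(s)$ of the norm-continuous map $D$ has to be estimated, and no modulus-of-continuity information about $s\mapsto\Vb(s)$ is required. Everything else is a routine reprise of \cref{th:approximate Dl integral}\,(c).
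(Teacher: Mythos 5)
Your proof is correct and follows essentially the same route as the paper, which simply defers to \cite[Lemma 4.2]{bh2} described as ``a direct approximation with piecewise integrals, where $D(s)$ is kept constant'' --- precisely your telescoping decomposition combined with the uniform commutator bound $\norm{V_n}+\norm{V_{n+1}}$ and uniform continuity of $D$. Your closing remark about why only the modulus of continuity of $D$, and not of $s\mapsto\Vb(s)$, is needed correctly identifies the point of the construction.
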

\begin{proof}
	The proof is completely analogous to \cite[Lemma 4.2]{bh1}, i.e., it follows by a direct approximation with piecewise integrals, where $D(s)$ is kept constant. 
\end{proof}

\begin{prop}
	\label{th:main prep}
	For all $t \in \IR$ and $f \in \hs$, we have $\alpha_t(a^*(f)) \in \Afe_1$ and  $t \mapsto \alpha_t(a^*(f))$ is continuous in the seminorms topology. 
\end{prop}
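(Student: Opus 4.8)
The plan is to rerun the argument of \Cref{sec:invariance}, now for operators that raise the particle number by one, replacing $\kh_n$, $\kappa_n$, and the lemmas behind \cref{th:kappa alpha compatibility} by their one‑particle‑adding analogues \cref{th:is in khn}, \cref{th:Deltat af restriction}, \cref{th:approx deltas Daf integral}, together with \cref{th:int V compact} and \cref{th:commutator clustering}. First I would reduce to the interaction picture: since $H^0_n$ is the second quantization of $-\Delta$, one has $\alpha^0_t(a^*(g))=a^*(\e^{\i t(-\Delta)}g)\in\Af_1\subseteq\Afe_1$, which is norm‑continuous in $t$ because $\nn{a^*(h)}=\nn h$. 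As $\alpha_t=\gamma_t\circ\alpha^0_t$, it suffices to prove $\gamma_t(a^*(g))\in\Afe_1$ for every $g\in\hs$ and that $(t,g)\mapsto\gamma_t(a^*(g))$ is continuous in the seminorm topology; continuity in $g$, locally uniformly in $t$, is automatic since $g\mapsto\gamma_t(a^*(g))$ is linear and bounded on each sector with a bound locally uniform in $t$, so only fixed $g$ with varying $t$ remains.

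Next I would set up the sectorwise Dyson expansion. Because $\Vb$ commutes with the number operator, the Dyson maps $D_{n,l}(t)$ send $\Lb(\FS_n,\FS_{n+1})$ into itself, and the bound $\nn{D_{n,l}(t)(A)}\le\frac1{l!}(2\abs t\,\nn{V_n})^{l}\nn A$, proved exactly as \cref{th:approximate Dl integral}(e) now for off‑diagonal operators, shows that $\gamma_t(a^*(g))|_{\FS_n}=\sum_{l\ge0}D_{n,l}(t)(a^*(g)|_{\FS_n})$ converges in operator norm, uniformly over the finitely many sectors $\FS_0,\dots,\FS_n$, with each term norm‑continuous in $t$. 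Then I would show by induction on $l$ that $\big(D_{m,l}(t)(a^*(g)|_{\FS_m})\big)_{m=0}^n\in\Af_1|_{\FS_{\le n}}$ for all $n$. For $l=0$ this is exactly \cref{th:is in khn} and \cref{th:Deltat af restriction}: the decomposition of $\Delta_m(t)(a^*(g)|_{\FS_m})$ into a creation operator composed with $\int_0^t V_m(s)\,\d s$ and with $\int_0^t V_{m+1}(s)\,\d s$ (cf. \eqref{eq:an+1Delta formula} and the proof of \cref{th:is in khn}), both compact on the relevant tensor factors by \cref{th:int V compact}, together with the inclusions $\Af_0\,a^*(\hs)\subseteq\Af_1$ (immediate) and $a^*(\hs)\,\Af_0\subseteq\Af_1$ (push the creation operator to the right via the CAR), gives membership, while \cref{th:Deltat af restriction} gives compatibility with the coherence maps relating the sectors. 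For the step I would write $D_{n,l}(t)=\int_0^t\delta_n(s)(D_{n,l-1}(s))\,\d s$, approximate the integral by the Riemann sums of \cref{th:approx deltas Daf integral}, apply the (norm‑continuous) coherence map, and invoke the inductive hypothesis and \cref{th:Deltat af restriction} on each summand — precisely the pattern of the proof of \cref{th:kappa alpha compatibility}.

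Finally I would resum: using norm convergence from the second step and that $\Af_1|_{\FS_{\le n}}$ is norm‑closed (being compatible with the $C^*$‑structure, as in \Cref{sec:structure}), I obtain $\gamma_t(a^*(g))|_{\FS_{\le n}}\in\Af_1|_{\FS_{\le n}}$ for all $n$; since $\gamma_t$ is an automorphism of $\Lb(\FS)$ we have $\nn{\gamma_t(a^*(g))}=\nn g<\infty$, so the coherent bounded sequence $\big(\gamma_t(a^*(g))|_{\FS_n}\big)_n$ determines an element of $\Afe_1$. Seminorm‑continuity of $t\mapsto\gamma_t(a^*(g))$ follows by splitting the Dyson series at a truncation level: the finitely many low‑order terms are norm‑continuous on each $\FS_{\le n}$, and the tail is uniformly small there. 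Composing with the first step gives the claim for $\alpha_t(a^*(f))$.

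The hard part will be the inductive step just described. The individual Dyson terms $D_{n,l}(t)(a^*(f)|_{\FS_n})$ blow up as $n\to\infty$ (their norms grow like $\nn{V_n}^l\sim n^{2l}$), so there is no single bounded operator on $\FS$ to manipulate; one must control, sector by sector, \emph{both} membership in $\Af_1|_{\FS_n}$ \emph{and} the coherence relations, with only the fully resummed series being bounded. Propagating the spatial‑clustering estimate of \cref{th:commutator clustering} through the nested commutators and integrations — so that coherence survives each Dyson order — is the technical heart, parallel to but noticeably shorter than the bosonic argument of \cite{bh2}.
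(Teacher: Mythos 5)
Your proposal is correct and follows essentially the same route as the paper: reduce to the interaction picture, expand $\gamma_t(a^*(f))$ sector by sector in a Dyson series with the factorial bound giving norm convergence and $t$-continuity on each $\FS_{\le n}$, and prove by induction on the Dyson order — via the Riemann-sum approximation of \cref{th:approx deltas Daf integral} together with \cref{th:kappa Delta compatibility}, \cref{th:is in khn} and \cref{th:Deltat af restriction} — that each term lies coherently in $\Af_1|_{\FS_{\le n}}$, before resumming in the norm-closed restriction. The only cosmetic difference is an indexing shift in where \cref{th:is in khn}/\cref{th:Deltat af restriction} enter (the paper invokes them inside the induction step, for the term $a^*(f)a(f)\bigl[a^*(f),\int_0^t \Vb(s)\,\d s\bigr]B$ of its three-term commutator decomposition, rather than as the base case), which does not affect the argument.
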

\begin{proof}
	Again, we use the Dyson series \eqref{eq:dyson}
	\begin{align}
		\label{eq:dyson2}
		\gamma_t( a^*(f))|_{\FS_n} = \sum_{l=0}^\infty D_{n,l}(t),
	\end{align}
	where $D_{n,0}(t) := a^*(f)|_{\FS_n}$, and
	\begin{align}
		\label{eq:Dl recursion formula}
		D_{n,l+1}(t) := - \i \int_0^t [D_ {n,l}(s),\Vb(s)]|_{\FS_n} \d s.
	\end{align}
	The recursion formula \eqref{eq:Dl recursion formula} yields
	\begin{align}
		\label{eq:Dl estimate}
		\nn{ D_{n,l}(t) } \leq  \frac{\abs t^l}{l!}( \nn{V_{n}}+ \nn{V_{n+1}})^l \nn f,
	\end{align}
	which shows that the Dyson series \eqref{eq:dyson2} converges absolutely and  that $t \mapsto D_{n,l}(t)$ is continuous. With \cref{th:approx deltas Daf integral}, we obtain
	\begin{align}
		\label{eq:Dnl C formula}
		D_{n,l}(t) = &\lim_{k_1 \to\infty} \ldots \lim_{k_l \to \infty} C^{k_1, \ldots, k_{l}}_{n,l}(t),
	\end{align}
	where $C_{n,0}(t) = a^*(f)|_{\FS_n}$ and
	 \[C^{k_1, \ldots, k_{l+1}}_{n,l+1}(t) = \sum_{j=1}^{k_{l+1}} (\Delta_n(jt/k_{l+1}) -  \Delta_n((j-1)t/k_{l+1}))(C^{k_1, \ldots, k_{l}}_{n,l}(t)).\] 
	 We now show by induction that for all $l \in \IN_0$, $k_1, \ldots, k_l \in \IN$, and all $n \in \IN$,
	\begin{align}
		\label{eq:Cl statement}
		(C^{k_1, \ldots, k_{l}}_{m,l})_{m=0}^n \in \Af_1|_{\FS_{\leq n}} = \lin\{ a^*(f) B : f \in \hs,~B \in \Af_0 \} |_{\FS_{\leq n}}.
	\end{align}
	For $l=0$ this is clear. The induction step follows from
	\begin{align*}
		 &\left[ a^*(f) B , \int_0^t \Vb(s) \d s \right]\bigg|_{\FS_n}  =  a^*(f) \left[B,  \int_0^t V_n(s) \d s\right]|_{\FS_n}   \\ &\quad + \frac{1}{\nn f^2}  a^*(f) a(f) \left[a^*(f),  \int_0^t \Vb(s) \d s\right] B|_{\FS_n}  + \frac{1}{\nn f^2} \left[ a^*(f) a(f),  \int_0^t V_{n+1}(s) \d s \right]  a^*(f) B|_{\FS_n}.
	\end{align*} 
	By \cref{th:kappa Delta compatibility} we know that the first and the third term, and by \cref{th:Deltat af restriction} also the second term, are products of $a^*(f)$ and elements in $\Af_0$. Therefore, all of them lie in $\Af_1|_{\FS_{\leq n}}$.
	
	From \eqref{eq:Dnl C formula} and \eqref{eq:Cl statement} it follows that $(D_{m,l}(t))|_{m=0}^n \in \Afe_1|_{\FS_{\leq n}}$ for all $n$ and $l$. Thus, we infer that $\gamma_t(a^*(f))|_{\FS_{\leq n}} \in \Afe|_{\FS_{\leq n}}$ for all $n$, and therefore, $\gamma_t(a^*(f)) \in \Afe$ and hence also $\alpha_t(a^*(f)) \in \Afe$.  The continuity in $t$ follows from the continuity of the $D_{n,l}(t)$. 
\end{proof}

\begin{proof}[Proof of \cref{th:main}]
	From \cref{th:oplus algebra}, \cref{th:main prep} and the definition \eqref{eq:Afe characterization} we know that, for any $k \in \IZ$ and $A \in \Af_k$, we have $\alpha_t(A) \in \Afe_k$, $t \in \IR$, and the map $t \mapsto \alpha_t(A)$ is continuous in the $\nn \cdot _n$ seminorms. The same is true if $A \in \Afe_k$, since then we can find a sequence $(A_l)_{l \in \IN}$ in $\Af_k$ such that $A_l \to A$ in the seminorms, which implies $\alpha_t(A_l) \to \alpha_t(A)$ in the seminorms. 
	
	  If $A \in \Afe$ is arbitrary, there is a sequence $(A_l)_{l \in \IN}$ in $\bigoplus_{k \in \IZ} \Afe_k$ such that $A_l \to A$ in operator norm. Since the previous discussion showed that $\alpha_t(A_l) \in \bigoplus_{k \in \IZ} \Afe_k$ for all $l$, we have $\alpha_t(A) = \lim_{l\to\infty} \alpha_t(A_l) \in\Afe$ in operator norm as well. With an $\epsilon/3$-argument it also follows that $t \mapsto \alpha_t(A)$ is continuous in the seminorms.  
\end{proof}
\begin{proof}[Proof of \cref{th:main coro}]
	For all $A \in \Afe_k$, $k \in \IZ$, and  $f \in \Cci(\IR)$, we again consider the integrals $\int f(s) \alpha_s(A) \d s \in \Lb(\FS)$. With the same arguments as in the proof of this corollary for the particle number preserving case, we find that these integrals lie in $\Afe_k$. Let $\Afe_H$ be the $C^*$-subalgebra of $\Afe$ generated by $\int f(s) \alpha_s(A) \d s$, $f \in \Cci(\IR)$, $A \in \Afe_k$, $k \in \IZ$. Strong continuity on $\Afe_H$ follows with the same estimate \eqref{eq:pettis estimate} as before, and \eqref{eq:delta_k estimate} also holds for all $A \in \Afe_k$  and $k \in \IZ$. Therefore, $\Afe_H$ is dense in $\bigoplus_{k \in \IZ}\Afe_k$ and thus, also in $\Afe$. 
\end{proof}

\subsection*{Acknowledgement}

I would like to thank Detlev Buchholz, Marius Lemm and Melchior Wirth for some comments and discussions. 

\subsection*{Funding}

Partial financial support was received from Deutsche Forschungsgemeinschaft (DFG, German Research Foundation) - TRR 352 - Project-ID 470903074.

\subsection*{Data availability}
Data sharing not applicable to this article as no datasets were generated or analyzed during the current study.

\section*{Declarations}

\subsection*{Competing interests}
The author has no relevant financial or non-financial interests to disclose.

\printbibliography

\end{document}